\newcommand{\rephrase}[3]{\noindent\textbf{#1 #2}.~\emph{#3}}
\newcommand{\remove}[1]{}
\newenvironment{sketch}{\noindent{\em Proof sketch.}}{\hspace*{\fill}$\square$\vspace{2mm}}
\let\doendproof\endproof
\renewcommand\endproof{~\hfill\qed\doendproof}
\DeclareMathOperator{\pert}{pert}
\DeclareMathOperator{\skel}{skel}
\DeclareMathOperator{\twin}{twin}
\newcommand{\eps}{\ensuremath{\varepsilon}}
\definecolor{blue}{rgb}{0.274,0.392,0.666}
\definecolor{red}{rgb}{0.627,0.117,0.156}
\definecolor{Red}{rgb}{1,0,0}
\definecolor{green}{rgb}{0,0.588,0.509}
\definecolor{gray}{rgb}{0.5,0.5,0.5}
\newcommand{\ourproblem}[1]{{\sc OrthoSEFE-$#1$}\xspace}
\newcommand{\ourdrawing}{OrthoSEFE\xspace}
\newcommand{\ourdrawings}{OrthoSEFEs\xspace}
\newcommand{\ourinstance}[1]{$\langle G_1^{#1}, G_2^{#1}\rangle$\xspace}
\newcommand{\ourinstancek}[1]{$\langle G_1^{#1}, \dots, G_k^{#1}\rangle$\xspace}
\newcommand{\oursolution}[1]{$\langle \Gamma_1^{#1}, \Gamma_2^{#1}\rangle$\xspace}
\newcommand{\oursolutionk}[1]{$\langle \Gamma_1^{#1}, \Gamma_2^{#1}, \dots, \Gamma_k^{#1}\rangle$\xspace}
\newcommand{\naesat}{\textsc{Nae3Sat}\xspace}
\newcommand{\pnaesat}{\textsc{Planar Nae3Sat}\xspace}
\newcommand{\true}{{\sf true}\xspace}
\newcommand{\false}{{\sf false}\xspace}
\newcommand{\nae}{\textsc{Nae}}
\begin{document}

\title{Simultaneous Orthogonal Planarity\thanks{This research was initiated at the
Bertinoro Workshop on Graph Drawing 2016. 
Research was partially supported by DFG grant Ka812/17-1, by MIUR project AMANDA, prot. 2012C4E3KT\_001, by the grant no. 14-14179S of the Czech Science Foundation GACR, and by DFG grant WA 654/21-1.
}
}

\author{
  Patrizio~Angelini\inst{1}
  \and
  Steven~Chaplick\inst{2}
  \and 
  Sabine~Cornelsen\inst{3}
  \and
  Giordano~{\mbox{Da Lozzo}}\inst{4}
  \and
  Giuseppe~Di Battista\inst{4}
  \and
  Peter~Eades\inst{5}
  \and
  Philipp~Kindermann\inst{6}
  \and
  Jan~Kratochv\'il\inst{7}
  \and
  Fabian~Lipp\inst{2}
  \and
  Ignaz~Rutter\inst{8}
}

\institute{
  Universit\"at T\"ubingen, Germany -- \email{angelini@informatik.uni-tuebingen.de}
  \and
  %
  Universit\"at W\"urzburg, Germany -- \email{\{steven.chaplick,fabian.lipp\}@uni-wuerzburg.de}
  \and
  %
  Konstanz University, Germany -- \email{sabine.cornelsen@uni-konstanz.de}
  \and
  %
  Roma Tre University, Italy -- \email{\{dalozzo,gdb\}@dia.uniroma3.it}
  \and
  %
  The University of Sydney, Australia -- \email{peter@it.usyd.edu.au}
  \and
  %
  FernUniversit\"at in Hagen, Germany -- \email{philipp.kindermann@fernuni-hagen.de}
  \and
  %
  Charles University, Czech Republic -- \email{honza@kam.mff.cuni.cz}
  \and
  %
  Karlsruhe Institute of Technology, Germany -- \email{rutter@kit.edu}
}

\maketitle

\begin{abstract}
  We introduce and study the \ourproblem{k} problem: 
  Given~$k$ planar graphs each with maximum degree 4 and the same vertex set,
  do they admit an OrthoSEFE, that is,
  is there an assignment of the vertices to
  grid points and of the edges to paths on the grid such that the same
  edges in distinct graphs are assigned the same path and such that
  the assignment induces a planar orthogonal drawing of each of the~$k$
  graphs?

  We show that the problem is NP-complete for $k \geq 3$ even
  if the shared graph is a Hamiltonian cycle and has sunflower intersection and for $k \geq 2$ even
  if the shared graph consists of a cycle and of isolated vertices. 
  Whereas the problem is polynomial-time solvable for $k=2$ when the
  union graph has maximum degree five and the shared graph is
  biconnected.  Further, when the shared graph is biconnected and has
  sunflower intersection, we show that every positive instance has an OrthoSEFE
  with at most three bends per edge.
\end{abstract}

\section{Introduction}
\label{se:introduction}

The input of a simultaneous embedding problem consists of several
graphs $G_1=(V,E_1),\dots,G_k=(V,E_k)$ on the same vertex set.  For a
fixed drawing style $\mathcal S$, the simultaneous embedding problem
asks whether there exist drawings $\Gamma_1,\dots,\Gamma_k$ of
$G_1,\dots,G_k$, respectively, in drawing style $\mathcal S$ such that for any
$i$ and $j$ the restrictions of $\Gamma_i$ and $\Gamma_j$ to $G_i \cap
G_j = (V,E_i \cap E_j)$ coincide.

The problem has been most widely studied in the setting of topological
planar drawings, where vertices are represented as points and edges
are represented as pairwise interior-disjoint Jordan arcs between
their endpoints.  This problem is called {\sc Simultaneous
  Embedding with Fixed Edges} or {\sc SEFE-$k$} for short, where $k$
is the number of input graphs.  It is known that {\sc SEFE-$k$} is
NP-complete for $k \geq 3$, even in the restricted case
of {\em sunflower instances}~\cite{s-ttphtpv-13}, where every pair of graphs shares the same
set of edges, and even if such a set induces a star~\cite{adn-aspbep-15}.  
On the other hand, the complexity for $k=2$ is
still open. Recently, efficient algorithms for
restricted instances have been presented, namely when
\begin{inparaenum}[(i)]
\item the shared graph $G_\cap = G_1 \cap G_2$ is
  biconnected~\cite{hjl-tspcg2c-10,adfpr-tsetgibgt-11} or a
  star-graph~\cite{adfpr-tsetgibgt-11},
\item $G_\cap$ is a collection of disjoint cycles~\cite{br-drpse-15},
\item every connected component of $G_\cap$ is either subcubic or 
  biconnected~\cite{s-ttphtpv-13,bkr-seeor-15},
\item $G_1$ and $G_2$ are biconnected and $G_\cap$ is 
  connected~\cite{br-spqoacep-13}, and  
\item $G_\cap$ is connected and the input graphs have maximum
  degree~5~\cite{br-spqoacep-13}; see the survey by Bläsius et
  al.~\cite{bkr-sepg-12} for an overview.
\end{inparaenum}

For planar straight-line drawings, the simultaneous embedding problem
is called {\sc Simultaneous Geometric Embedding} and it is known to be
NP-hard 
even for two graphs~\cite{egjpss-sgge-07}.
Besides simultaneous intersection
representation for, e.g., interval
graphs~\cite{simultaneous_interval_graphs,br-spqoacep-13} and
permutation and chordal graphs~\cite{jl-srpcc-12}, it is only recently
that the simultaneous embedding paradigm has been applied to other
fundamental planarity-related drawing styles, namely simultaneous level
planar drawings~\cite{addfpr-blp-15} and RAC
drawings~\cite{abks-gracsdg-13,bdkw-sdpgrac-16}.  

We continue this line of research by studying
simultaneous embeddings in the planar {\em orthogonal} drawing style, where
vertices are assigned to grid points and edges to paths on the grid connecting their endpoints~\cite{t-eggmdb-87}.  In
accordance with the existing naming scheme, we define \ourproblem{k} to be the problem of testing whether $k$ input graphs $\langle G_1,\dots,G_k \rangle$
admit a simultaneous planar orthogonal drawing.  If
such a drawing exists, we call it an OrthoSEFE of $\langle
G_1,\dots, G_k \rangle$.  Note that it is a necessary condition that
each $G_i$ has maximum degree~4 in order to obtain planar orthogonal
drawings. Hence, in the remainder of the paper we assume that all instances have this property.
For instances with this property, at least when the shared graph is
connected, the problem {\sc SEFE}-2 can be solved efficiently~\cite{br-spqoacep-13}.  However, there are
instances of \ourproblem{2} that admit a SEFE but not an
OrthoSEFE; see Fig.~\ref{fig:negative-instance}.

Unless mentioned otherwise, all instances of \ourproblem{k} and {\sc SEFE}-$k$ we consider are sunflower. Notice that instances with $k=2$ are always sunflower. Let $\langle G_1=(V,E_1),G_2=(V,E_2) \rangle$ be an instance of \ourproblem{2}. We define the \emph{shared graph} (resp. the \emph{union graph}) to be the graph 
$G_\cap = (V, E_1 \cap E_2)$ (resp. $G_\cup = (V, E_1 \cup E_2)$) with the same vertex 
set as~$G_1$ and~$G_2$, whose edge set is the 
intersection (resp. the union) of the ones of~$G_1$ and~$G_2$. Also, we call 
the edges in $E_1 \cap E_2$ the \emph{shared edges} and we call the edges 
in $E_1 \setminus E_2$ and in $E_2 \setminus E_1$ the \emph{exclusive edges}. 
The definitions of \emph{shared graph}, \emph{shared edges}, and
\emph{exclusive edges} naturally extend to sunflower instances for any value of $k$.

\begin{figure}[tb]
  \centering
  \subfigure[]{
  \includegraphics[]{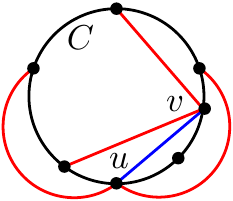}\label{fig:negative-instance}
  }
  \hfil
  \subfigure[]{
  \includegraphics[]{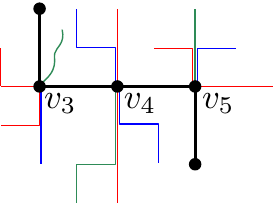}\label{fig:corners}
  }
  \caption{(a) A negative instance of \ourproblem{2}. Shared edges are black, while exclusive edges are red and blue. The red edges require $270^\circ$ angles on different sides of $C$.
  Thus, the blue edge $(u,v)$ cannot be drawn. 
  Note that the given drawing is a {\sc SEFE}-2.
  (b) Examples of side assignments for the exclusive edges incident to degree-$2$ vertices of $G_\cap$: orthogonality constraints are satisfied at $v_4$ and $v_5$, while they are violated at~$v_3$.
  }
  
\end{figure}

One main issue is to decide how degree-$2$ vertices of the shared
graph are represented.  Note that, in planar topological drawings,
degree-$2$ vertices do not require any decisions as there exists only a
single cyclic order of their incident edges.  In the case of
orthogonal drawings there are, however, two choices for a degree-$2$
vertex: It can either be drawn {\em straight}, i.e., it is incident to two
angles of $180^\circ$, or \emph{bent}, i.e., it is incident to
one angle of $90^\circ$ and to one angle of $270^\circ$.  If $v$ is a
degree-$2$ vertex of the shared graph with neighbors $u$ and $w$, and
two exclusive edges $e,e'$, say of $G_1$, are incident to $v$ and are
embedded on the same side of the path $uvw$, then $v$ must be bent,
which in turn implies that also every exclusive edge of $G_2$ incident
to $v$ has to be embedded on the same side of $uvw$ as $e$ and $e'$.
In this way, the two input graphs of \ourproblem{2} interact via the
degree-$2$ vertices.  It is the difficulty of
controlling this interaction that marks the main difference between {\sc SEFE}-$k$ and
\ourproblem{k}.  To study this interaction in isolation, we focus on
instances of \ourproblem{2} where the shared graph is a cycle for most
of the paper.  Note that such instances are trivial yes-instances of
{\sc SEFE}-$k$ (provided the input graphs are all planar).

{\bf Contributions and Outline.} In Section~\ref{se:preliminaries} we provide our notation and
we show that the existence of an \ourdrawing of an
instance of \ourproblem{k} can be described as a combinatorial
embedding problem.
In Section~\ref{se:cycle-hardness}, we show that \ourproblem{3} is
NP-complete even if the shared graph is a cycle, and that
\ourproblem{2} is NP-complete even if the shared graph consists of a
cycle plus some isolated vertices.  This contrasts the situation of
{\sc SEFE}-$k$ where these cases are polynomially
solvable~\cite{adfpr-tsetgibgt-11,bkr-seeor-13,hjl-tspcg2c-10,s-ttphtpv-13}.
In Section~\ref{se:cycle-algorithms}, we show that \ourproblem{2} is
efficiently solvable if the shared graph is a cycle and the union
graph has maximum degree~$5$.  Finally, in
Section~\ref{se:biconnected}, we extend this result to the case where
the shared graph is biconnected (and the union graph still has maximum
degree~$5$).  
%
%
Moreover, we show
that any positive instance of \ourproblem{k} whose shared graph is
biconnected admits an \ourdrawing with at most three bends per edge.
We close with some concluding remarks and open questions in
Section~\ref{se:conclusions}.

Full proofs can be found in the Appendix.

\section{Preliminaries} \label{se:preliminaries}

\remove{Let $G=(V,E)$ be a planar graph.  An {\em orthogonal drawing} $\Gamma$
of $G$ maps each vertex in $V$ to a grid point and each edge to a path
in the grid in such a way that the resulting drawing is planar.  Since
each edge in $\Gamma$ is an alternating sequence of vertical and
horizontal segments, every vertex has four available {\em ports}, each
corresponding to one of the possible segments incident to the vertex.
Clearly, $G$ admits an orthogonal drawing if and only if no vertex has
degree larger than $4$, i.e., $G$ is $4$-planar. \todo{Maybe add a reference? We could use, e.g.,  \cite{biedl/kant:98}. But isn't that fact folklore?}
} 

We will extensively make use of the {\sc Not-All-Equal
$3$-Sat} (\naesat) problem~\cite[p.187]{Pap07}. An instance of \naesat
consists of a $3$-CNF formula $\phi$ with variables $x_1,\ldots,x_n$
and clauses $c_1,\ldots,c_m$. The task is to find a
\emph{\textsc{Nae} truth assignment}, i.e., a truth assignment such
that each clause contains both a true and a false literal. \naesat is
known to be NP-complete~\cite{Schaefer-tcsp-78}. The
\emph{variable--clause graph} is the bipartite graph whose vertices are
the variables and the clauses, and whose edges represent the
membership of a variable in a clause. The problem
\pnaesat is the restriction of \naesat to instances whose
variable--clause graph is planar. \pnaesat can be solved 
efficiently~\cite{moret-pnp-88,shih_etal:90}.

\paragraph{\bf Embedding Constraints.}\label{se:embedding-constraints}

Let \ourinstancek{} be an \ourproblem{k} instance. A {\em SEFE} is a collection of embeddings $\mathcal E_i$ for the $G_i$ such that their restrictions on $G_\cap$ are the same.  Note that in the literature, a SEFE is often defined as a collection of drawings rather than a collection of embeddings.  However, the two definitions are equivalent~\cite{js-igsefe-09}. 
For a SEFE to be realizable as an \ourdrawing it needs to satisfy two additional conditions.
First, let~$v$ be a vertex of degree~2 in~$G_\cap$ with neighbors $u$ and $w$. If in any embedding $\mathcal E_i$ there exist two
exclusive edges incident to $v$ that are embedded on the same side of the 
path $uvw$, then any exclusive edge incident to~$v$ in any of 
the $\mathcal E_j \neq \mathcal E_i$ must be embedded on the same side of the path $uvw$.
Second, let~$v$ be a vertex of degree~3 in~$G_\cap$. All exclusive edges incident 
to $v$ must appear between the same two edges of $G_\cap$ around $v$. We call 
these the {\em orthogonality constraints}. See Fig.~\ref{fig:corners}.

\begin{theorem}\label{th:general-characterization}
An instance \ourinstancek{} of \ourproblem{k} has an \ourdrawing if and only if it admits a SEFE satisfying the orthogonality constraints.
\end{theorem}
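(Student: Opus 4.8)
The plan is to prove both directions separately, with the forward direction being essentially a necessity check and the backward direction carrying the real content.

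For the forward direction, suppose $\langle G_1,\dots,G_k\rangle$ admits an \ourdrawing, i.e., orthogonal drawings $\Gamma_1,\dots,\Gamma_k$ in which every shared edge is drawn as the same grid path. Reading off the rotation system of each $\Gamma_i$ yields embeddings $\mathcal E_1,\dots,\mathcal E_k$; since the shared edges are drawn identically, the restrictions of these embeddings to $G_\cap$ coincide, so they form a SEFE. It then remains to verify the orthogonality constraints, which I would do by a port-counting argument. At a vertex $v$ of degree~$2$ in $G_\cap$ with neighbors $u,w$, the two shared edges occupy two of the four ports of~$v$: if they occupy opposite ports ($v$ is drawn straight) then exactly one free port lies on each side of the path $uvw$, so no two exclusive edges of a single $G_i$ can lie on the same side; if they occupy adjacent ports ($v$ is bent) then both free ports lie on the $270^\circ$ side, forcing all exclusive edges of all graphs to that side. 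Either way the degree-$2$ constraint holds. At a vertex of degree~$3$ only a single free port remains, lying in one fixed gap between two consecutive shared edges, so every exclusive edge of every graph must appear in that gap, giving the degree-$3$ constraint.

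For the backward direction, the key step is to turn the combinatorial data into a \emph{consistent port assignment}: at every vertex $v$ I would fix, once and for all, which port each shared edge uses, identically in all graphs, together with the angles between consecutive edges. At a degree-$4$ vertex all four ports are shared and the assignment is forced. At a degree-$3$ vertex the orthogonality constraint guarantees that all exclusive edges of all graphs lie in one common gap; I place the three shared edges on three ports and reserve the fourth port (inside that gap) for the at most one exclusive edge per graph, so that the gap carries a $180^\circ$ angle when empty and two $90^\circ$ angles when filled. At a degree-$2$ vertex the orthogonality constraint tells me exactly whether to draw $v$ straight or bent: if some graph has two exclusive edges on one side I make $v$ bent and reserve both free ports on the $270^\circ$ side (the constraint guarantees every other graph's exclusive edges lie there too); otherwise I make $v$ straight and reserve one free port per side. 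In every case the shared edges receive the same ports in all graphs, and the reserved ports suffice to accommodate each $\mathcal E_i$. Degree-$0$ and degree-$1$ vertices of $G_\cap$ are unconstrained and handled by fixing the single shared port (if any) and distributing exclusive edges over the remaining ports.

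With the angles fixed at every vertex I would then invoke the standard realizability of orthogonal representations (Tamassia): any plane graph of maximum degree~$4$ equipped with a valid vertex-angle assignment admits an orthogonal drawing realizing it, obtained by adding bends along edges (treating the components of a disconnected $G_\cap$ separately). Applying this to $G_\cap$ with the angle assignment above yields a single orthogonal drawing $\Gamma_\cap$; by construction each reserved port is a genuinely free port of $\Gamma_\cap$, lying in the face on the side prescribed by the embedding. To obtain $\Gamma_i$ I keep $\Gamma_\cap$ unchanged and route the exclusive edges of $G_i$ at their reserved ports inside the faces of $\Gamma_\cap$ dictated by $\mathcal E_i$; since $\mathcal E_i$ restricted to $G_\cap$ equals the embedding of $\Gamma_\cap$, these exclusive edges form a planar set of chords within the faces, which can always be drawn orthogonally using additional bends and, if needed, a refinement of the grid. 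As all $\Gamma_i$ contain the very same drawing $\Gamma_\cap$, the shared edges are drawn identically, so $\langle \Gamma_1,\dots,\Gamma_k\rangle$ is an \ourdrawing. The main obstacle is precisely the degree-$2$ vertices, where the graphs interact: a \emph{bent} decision forced by the exclusive edges of one graph constrains where the exclusive edges of the others may go. The orthogonality constraints are designed exactly to rule out the conflicting configurations, so the real work is to check that they always permit the consistent straight/bent choice above; once the port assignment is fixed, drawing $G_\cap$ once and extending it independently per graph makes the identity of the shared paths automatic, leaving only the routine verification that exclusive edges can always be routed orthogonally within their faces.
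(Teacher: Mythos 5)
Your proposal is correct and follows essentially the same route as the paper: the forward direction is the same four-port counting argument at degree-$2$ and degree-$3$ vertices of $G_\cap$ (you argue it directly by cases, the paper by contradiction via needing five ports), and the backward direction is the same construction, namely drawing $G_\cap$ orthogonally with the straight/bent and angle decisions dictated by the orthogonality constraints and then routing each graph's exclusive edges inside the faces prescribed by the SEFE. Your explicit appeal to Tamassia-style realizability of the angle assignment merely spells out what the paper leaves as ``it is not hard to see.''
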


For the case in which the shared graph is a cycle~$C$, we give a simpler version of the constraints in Theorem~\ref{th:general-characterization}, which will prove useful in the remainder of the paper.
By the Jordan curve Theorem, a planar drawing of cycle $C$ divides the plane into a bounded and
an unbounded region~--~the \emph{inside} and the \emph{outside} of
$C$, which we call the \emph{sides}
of $C$. Now the problem is to assign the exclusive edges to either of the
two sides of $C$ so that the following two conditions are fulfilled.

\smallskip
\noindent
{\em Planarity Constraints.} Two exclusive edges of the same
  graph must be drawn on different sides of $C$ if their endvertices
  alternate along $C$.
  \\
\noindent
{\em Orthogonality Constraints.} Let $v \in V$ be a vertex that is
  adjacent to two exclusive edges $e_i$ and $e'_i$ of the same graph $G_i$,
  $i \in \{1,\dots,k\}$. 
  If $e_i$ and $e'_i$ are on the same side of $C$, then
  all exclusive edges incident to $v$ of all graphs $G_1,\dots,G_k$
  must be on the same side as $e_i$ and $e'_i$. 
\smallskip

Note that
this is a reformulation of the general orthogonality
constraints. Further, the orthogonality constraints also imply that if
$e_i$ and $e_i'$ are on different sides of $C$, then for each graph
$G_j$ that contains two exclusive edges $e_j$ and $e_j'$ incident to
$v$, with $j \in \{1,\dots, k\}$,  $e_j$ and $e_j'$ must
be on different sides of $C$.


The next theorem follows from Theorem~\ref{th:general-characterization} and from the following two observations. First, for a sunflower instance \ourinstancek{} whose shared graph is a cycle, any collection of embeddings is a SEFE~\cite{js-igsefe-09}. Second, the planarity constraints are necessary and sufficient for the existence of an embedding of $G_i$~\cite{ap-oigs-61}.

\begin{theorem}\label{th:characterization}
  An instance 
  of \ourproblem{k} whose shared graph
  is a cycle $C$ 
  has an
  \ourdrawing if and only if there exists an assignment of the
  exclusive edges 
  to the two sides of $C$ satisfying
   the planarity and orthogonality constraints.
\end{theorem}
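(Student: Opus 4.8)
The plan is to derive Theorem~\ref{th:characterization} directly from Theorem~\ref{th:general-characterization} by reducing the general SEFE-plus-orthogonality condition to the purely combinatorial side-assignment condition stated just above the theorem. By Theorem~\ref{th:general-characterization}, an instance has an \ourdrawing if and only if it admits a SEFE satisfying the (general) orthogonality constraints. Since the shared graph is a cycle $C$, the two observations cited in the excerpt do most of the work: first, for a sunflower instance whose shared graph is a cycle, \emph{every} collection of embeddings of the $G_i$ (agreeing on $C$) is automatically a SEFE~\cite{js-igsefe-09}, so the ``SEFE'' requirement imposes no extra constraint beyond each $G_i$ being individually embeddable; second, the planarity constraints are exactly the necessary and sufficient condition for a side-assignment to yield a valid planar embedding of each $G_i$ with $C$ on its prescribed boundary~\cite{ap-oigs-61}. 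Thus I would split the argument into two directions matching the ``if and only if.''

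For the forward direction, I would assume an \ourdrawing exists. Fixing a planar drawing of $C$ partitions the plane into its inside and outside; each exclusive edge, being drawn disjoint from $C$ except at its endpoints, lies entirely in one side, giving a side-assignment. I then verify that this assignment satisfies the planarity constraints (two exclusive edges of the same $G_i$ whose endpoints alternate along $C$ would cross if placed on the same side, contradicting planarity of $\Gamma_i$) and the orthogonality constraints (these follow directly from the general orthogonality constraints of Theorem~\ref{th:general-characterization} specialized to a cycle, as already noted in the reformulation paragraph: a degree-$2$ vertex $v$ of $C$ with two same-graph exclusive edges on one side is forced to bend toward that side, forcing all exclusive edges at $v$ to that side).

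For the reverse direction, I would start from a side-assignment satisfying both constraint families and assemble a SEFE satisfying the general orthogonality constraints, then invoke Theorem~\ref{th:general-characterization}. The planarity constraints guarantee, by the Auslander--Parter-type criterion~\cite{ap-oigs-61}, that each $G_i$ admits a planar embedding in which $C$ is drawn as a simple closed curve and each exclusive edge lies on its assigned side; since all $G_i$ share the same drawing of $C$, these embeddings agree on $G_\cap = C$, and by the first observation~\cite{js-igsefe-09} this collection is a SEFE. The side-assignment's orthogonality constraints are precisely the specialization (for a cycle, where every non-shared edge at a vertex is exclusive and $C$ supplies the two cyclic reference edges) of the general orthogonality constraints, so the SEFE we built satisfies them. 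Applying Theorem~\ref{th:general-characterization} then yields the \ourdrawing.

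The main obstacle I anticipate is not any single hard step but making the correspondence between the two constraint formulations airtight: I must check that the cycle-specific planarity and orthogonality constraints are genuinely equivalent to the general SEFE-plus-orthogonality conditions, handling the bookkeeping of how a degree-$2$ vertex of $C$ forces a bend and how that bend propagates across graphs. Since the excerpt already states the two supporting observations and explicitly notes that the cycle orthogonality constraints are ``a reformulation of the general orthogonality constraints,'' the proof is essentially a careful verification that these reformulations are faithful, and I expect it to be short.
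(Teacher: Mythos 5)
Your proposal is correct and takes essentially the same route as the paper: the paper likewise derives Theorem~\ref{th:characterization} from Theorem~\ref{th:general-characterization} together with the two cited observations (any collection of embeddings is a SEFE when the shared graph is a cycle~\cite{js-igsefe-09}, and the planarity constraints characterize embeddability of each $G_i$~\cite{ap-oigs-61}), with the cycle-specific orthogonality constraints being a reformulation of the general ones. Your write-up merely spells out the two directions that the paper leaves implicit.
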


\section{Hardness Results}\label{se:cycle-hardness}

We show that \ourproblem{k} is NP-complete for $k \geq
3$ for instances with sunflower intersection even if the shared graph
is a cycle, and for $k=2$ even if the shared graph consists of a cycle and isolated vertices.

\begin{theorem}\label{th:hardness-hamiltonian-cycle-three-colors}
  \ourproblem{k} with $k \geq 3$ is NP-complete, even for instances with
  sunflower intersection in which (i) the shared
  graph is a cycle and (ii) $k-1$ of the input graphs are outerplanar and have maximum degree $3$.
\end{theorem}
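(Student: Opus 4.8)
The plan is to reduce from \pnaesat, which is NP-complete, using the characterization of Theorem~\ref{th:characterization}: since the shared graph is a cycle, an \ourdrawing exists if and only if the exclusive edges can be assigned to the two sides of $C$ (inside/outside) satisfying the planarity and orthogonality constraints. The fundamental idea is to exploit the two sides of $C$ as the two truth values, so that assigning an edge to the inside or the outside of $C$ encodes setting a variable to \true or \false. Membership in NP is immediate, since a side assignment is a polynomial-size certificate that can be checked against both constraint families in polynomial time.

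The main construction would proceed by building, for each variable $x_j$ of the \pnaesat instance $\phi$, a \emph{variable gadget} and, for each clause $c_i$, a \emph{clause gadget}, all sharing a single Hamiltonian cycle $C$. First I would design the variable gadget so that all edges representing a given variable are forced to lie on the same side of $C$; the natural tool is the orthogonality constraint at degree-$2$ vertices of $C$, which propagates a side choice along a chain of exclusive edges (as in Fig.~\ref{fig:corners}), thereby creating a rigid ``wire'' carrying one bit. By pairing exclusive edges of different graphs $G_1,\dots,G_k$ at a shared degree-$2$ vertex, I can force edges in distinct graphs to agree on a side, and by using two edges of the \emph{same} graph incident to a vertex I can force a negation or a consistent copy of the truth value. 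Next I would design the clause gadget, using three incident exclusive edges (one per literal) meeting the cycle, so that the planarity constraint (forbidding two alternating exclusive edges of the same graph on the same side) together with the orthogonality constraints precisely enforces the not-all-equal condition: a clause is satisfiable in the \textsc{Nae} sense exactly when its three literal-edges are not all assigned to the same side of $C$.

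To meet condition~(ii)~--~that $k-1$ of the input graphs are outerplanar and of maximum degree~$3$~--~I would distribute the gadget edges across the $k$ graphs so that only one designated graph $G_k$ carries the bulk of the edges, while each of the remaining $k-1$ graphs receives only a sparse, outerplanar, subcubic set of exclusive edges that serve merely as ``coupling'' edges transmitting side choices between wires. The sunflower requirement that every pair of graphs share the same edge set is satisfied automatically, since all $k$ graphs share exactly the cycle $C$. Routing the variable wires and clause gadgets along $C$ in the order dictated by a planar embedding of the variable--clause graph of $\phi$ is what makes the whole instance realizable as a plane arrangement; this is precisely why restricting to \pnaesat (rather than general \naesat) is useful.

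The hard part will be verifying the two-way equivalence between \textsc{Nae} truth assignments of $\phi$ and valid side assignments of the edges, i.e., showing that the planarity and orthogonality constraints in the gadgets are \emph{exactly} the logical constraints of \pnaesat, with no spurious obstructions and no missing ones. In particular, the delicate point is to confirm that the orthogonality constraints, which couple edges \emph{across} all $k$ graphs through shared degree-$2$ vertices, propagate truth values faithfully along the variable wires without introducing unintended forced equalities between independent variables, and that the clause gadget neither over-constrains (rejecting a satisfiable \textsc{Nae} assignment) nor under-constrains (accepting an all-equal assignment). Establishing that the construction can be laid out in the plane consistently with a planar embedding of the variable--clause graph, while keeping the auxiliary graphs outerplanar and subcubic, is the remaining technical obstacle.
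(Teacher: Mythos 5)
Your proposal has a fatal flaw at its very foundation: you reduce from \pnaesat, claiming it is NP-complete, but \pnaesat is solvable in \emph{polynomial time} (Moret; Shih et al.) --- a fact the paper itself states in its Preliminaries and exploits in the opposite direction, namely to obtain the polynomial-time \emph{algorithm} of Lemma~\ref{le:algorithm-degree-three-outerplanar} by reducing restricted \ourproblem{2} instances \emph{to} \pnaesat. A reduction from a problem in P cannot establish NP-hardness, so no amount of gadget engineering downstream can rescue the argument as stated. Your stated reason for insisting on formula planarity --- that the wires and clause gadgets must be routed along $C$ consistently with a planar embedding of the variable--clause graph --- is exactly the obstacle that the paper's proof is designed to circumvent, and it is the crux of why this theorem is nontrivial.

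The paper instead reduces from {\sc Positive Exactly-Three} \naesat, which is NP-complete, and sidesteps non-planarity of the variable--clause graph as follows: for \emph{every} pair $(x_i, c_j)$ it creates a variable-clause gadget $V_i^j$, so each variable is replicated once per clause, and all gadgets are laid out along the cycle in a snake order ($V_1^j,\dots,V_n^j$ for odd $j$, reversed for even $j$, each row followed by the clause gadget $C^j$). Consistency between the copies $V_i^j$ and $V_i^{j+1}$ of a variable is enforced by \emph{transmission edges} whose color alternates with the parity of $j$ (blue for odd $j$, red for even $j$); thanks to the snake order, transmission edges of the same graph never alternate along $C$, so the two coupling graphs $G_1,G_2$ stay outerplanar and subcubic, while the side-propagation is enforced via orthogonality constraints at the shared degree-$2$ vertices $w_i^j$ (the green edges of $G_3$ incident to $w_i^j$ are forced to one side by planarity, which then pins the transmission edges). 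The not-all-equal condition is checked locally in each clause gadget by an interplay of planarity and orthogonality constraints among the green edges $\{\alpha^j,\beta^j\}$, $\{\beta^j,\gamma^j\}$, $\{\gamma^j,\delta^j\}$ and two red edges forced to opposite sides. If you want to repair your proof, the essential missing idea is this replication-plus-transmission mechanism, which removes any need for the source formula to be planar.
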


\begin{sketch}
  The membership in NP directly follows from Theorem~\ref{th:characterization}.
  To prove the NP-hardness, we show a polynomial-time reduction from the
  NP-complete problem {\sc Positive Exactly-Three}
  \naesat~\cite{moret-toc-98}, which is the variant of \naesat in
  which each clause consists of exactly three unnegated literals.

\begin{figure}[tb]
	\centering
  \subfigure[]{\includegraphics[height=0.3\columnwidth,page=3]{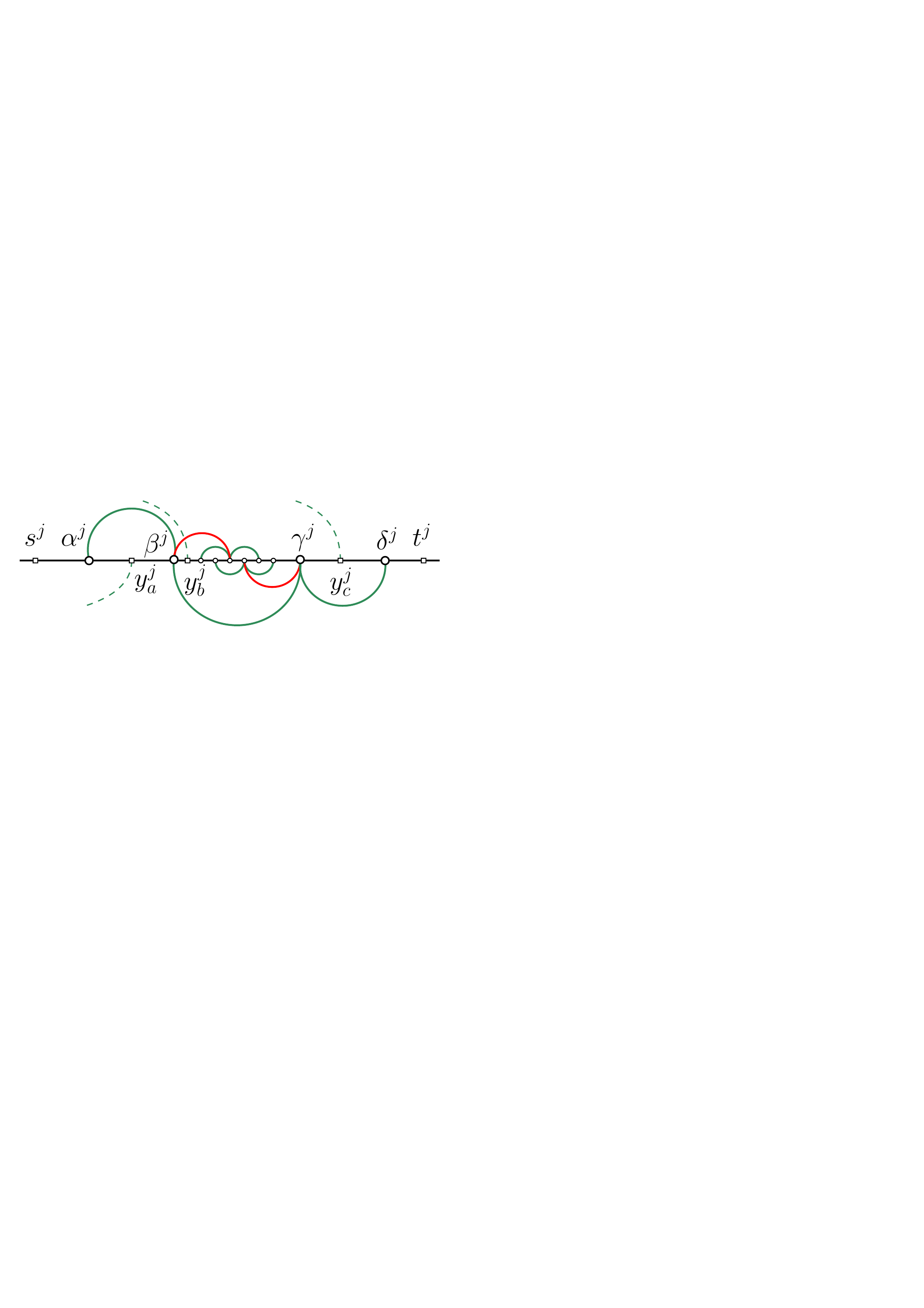}\label{fig:hardness-3-graphs-a}}
  \hfil
	\subfigure[]{\includegraphics[height=0.3\columnwidth]{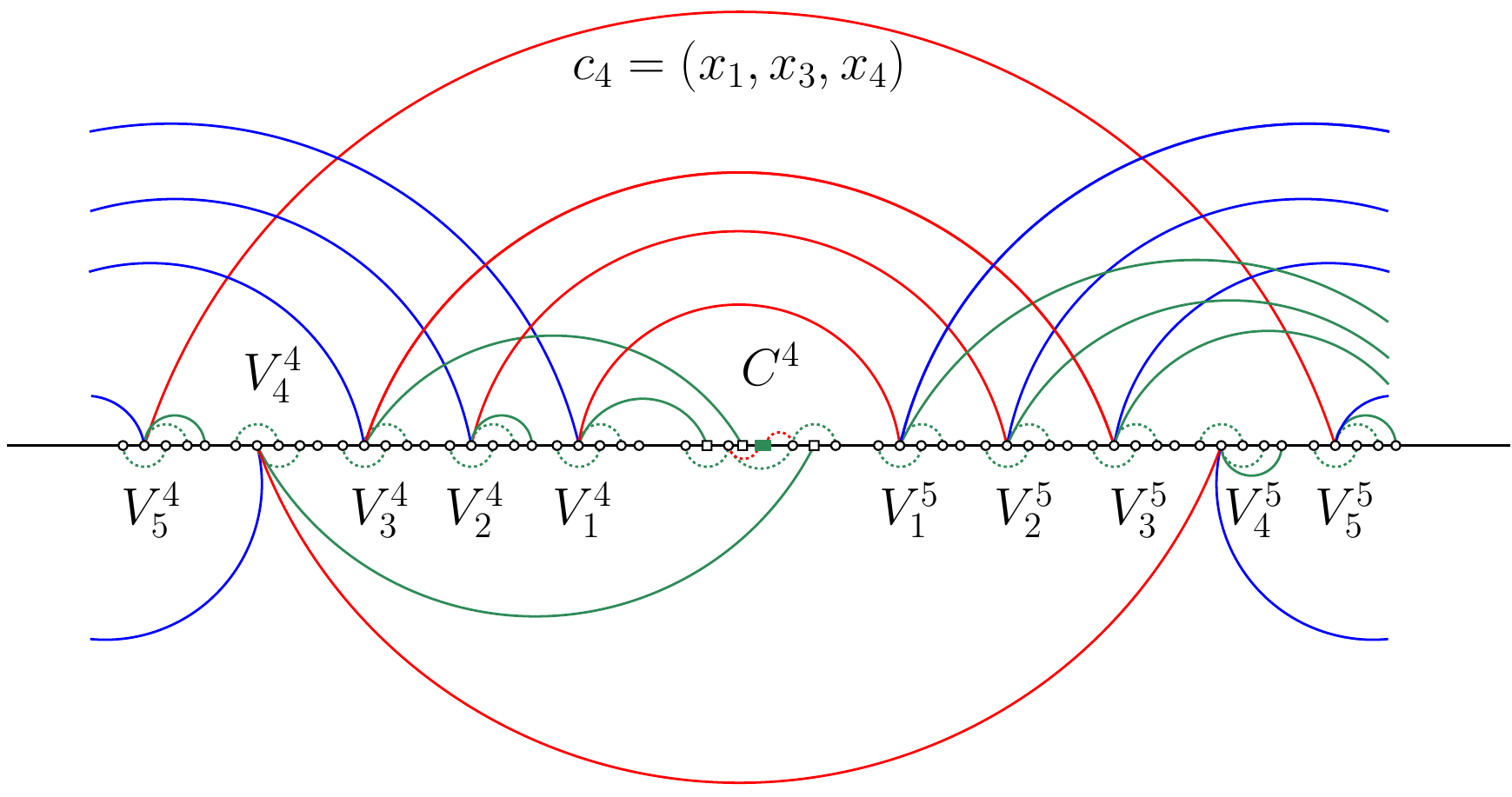}\label{fig:hardness-3-graphs-b}}
	\caption{(a) A clause gadget
      $C_j$ (top) and a variable-clause gadget $V_i^j$ (bottom); solid edges belong to the gadgets, dotted edges are optional, and dashed edges are transmission edges.  
      (b) Illustration of instance $\langle G_1,G_2,G_3\rangle$, focused on a clause $c_4$. Black edges
      belong to the shared graph $G_\cap$. The red, blue, and green edges are the exclusive edges of $G_1$,
      $G_2$, and $G_3$, respectively.}
      \label{fig:hardness-3-graphs}
\end{figure}

Let $x_1,x_2,\dots,x_n$ be the variables and let $c_1,c_2,\dots,c_m$
be the clauses of a $3$-CNF formula $\phi$ of {\sc Positive
  Exactly-Three} \naesat. We show how to construct an equivalent
instance $\langle G_1,G_2,G_3\rangle$ of
\ourproblem{3} such that $G_1$ and $G_2$ are outerplanar graphs of
maximum degree 3. We refer to the exclusive edges in $G_1$, $G_2$, and
$G_3$ as red, blue, and green, respectively; refer to
Fig.~\ref{fig:hardness-3-graphs}.


For each clause $c_j$, $j=1,\dots,m$, we create a \emph{clause gadget}
$C^j$ as in Fig.~\ref{fig:hardness-3-graphs-a} (top).  
For each variable $x_i$, $i=1,\dots,n$, and each clause~$c_j$, $j=1,\dots,m$, we
create a \emph{variable-clause gadget} $V_i^j$ as in
Fig.~\ref{fig:hardness-3-graphs-a} (bottom).
Observe that the (dotted) green edge $\{w_i^j,r_i^j\}$ in a variable-clause
gadget is only part of $V_i^j$ if $x_i$ does not occur in $c_j$.
Otherwise, there is a green edge $\{w_i^j,y_x^j\}$ connecting
$w_i^j$ to one of the three vertices $y_a^j$, $y_b^j$, or $y_c^j$ (dashed stubs) in the
clause gadget. Observe that these
three \emph{variable-clause edges} per clause can be realized in such a way
that there exist no planarity constraints between pairs of them. In
Fig.~\ref{fig:hardness-3-graphs-b}, the variable-clause gadgets $V_1^4$,
$V_3^4$, $V_4^4$ are incident to variable-clause edges, while $V_2^4$ and
$V_5^4$ contain edges $\{w_2^4,r_2^4\}$ and $\{w_5^4,r_5^4\}$,  respectively.

The gadgets are ordered as indicated in Fig.~\ref{fig:hardness-3-graphs-b}. 
The variable-clause gadgets~$V_i^j$, with $i=1,\dots,n$, always precede the 
clause gadget $V^j$, for any $j=1,\dots,m$.  Further, if $j$ is odd, then the 
gadgets $V_1^j,\dots,V_n^j$ appear in this order, otherwise they appear in 
reversed order $V_n^j,\dots,V_1^j$.
Finally, $V_i^j$ and $V_i^{j+1}$, for $i=1,\dots,n$ and $j=1,\dots,m-1$, are connected by an edge $\{w_i^j,w_i^{j+1}\}$, which is blue if $j$ is
odd and red if $j$ is even. We call these edges {\em transmission edges}.

Assume $\langle G_1, G_2, G_3 \rangle$ admits an \ourdrawing.
Planarity constraints and orthogonality constraints guarantee three
properties: 
(i) If the edge $\{u_i^j,v_i^j\}$ is inside~$C$, then so 
is $\{u_i^{j+1},v_i^{j+1}\}$, $i=1,\dots,n$, $j=1,\dots,m-1$. This is due to the 
fact that, by the planarity constraints, the two green edges incident to~$w_i^j$
lie on the same side of~$C$ and hence, by the orthogonality constraints, the two 
transmission edges incident~to $w_i^j$ also lie in this side. We
call $\{u_i^1,v_i^1\}$ the \emph{truth edge} of variable~$x_i$.
(ii) Not all the three green edges $a=\{\alpha^j,\beta^j\}$, 
$b=\{\beta^j,\gamma^j\}$, and $c=\{\gamma^j,\delta^j\}$ lie on the same side 
of~$C$. Namely, the two red edges of the clause gadget~$C^j$ must lie on 
opposite sides~of $C$ because of the interplay between the planarity and the 
orthogonality constraints in the subgraph of~$C^j$ induced by the vertices 
between~$\beta^j$ and~$\gamma^j$. Hence, if edges~$a$, $b$, and~$c$ lie in the 
same side of~$C$, then the orthogonality constraints at either~$\beta^j$ 
or~$\gamma^j$ are not satisfied.
(iii) For each clause $c_j=(x_a,x_b,x_c)$, edge $a=\{\alpha^j,\beta^j\}$ lies in
the same side of~$C$ as the truth edge of $x_a$. This is due to the planarity
constraints between each of these two edges and the variable-clause edge
$\{w_a^j,y_a^j\}$. Analogously, edge $b$ (edge $c$) lies on the same side as the
truth edge of $x_b$ (of $x_c$).
Hence, setting $x_i = \true$ ($x_i = \false$) if the truth edge of $x_i$ is inside $C$ (outside $C$) yields a \naesat truth assignment that satisfies $\phi$.

The proof for the other direction is based on the fact that assigning the truth edges to either of the two sides of $C$ according to the \naesat assignment of $\phi$
also implies a unique side assignment for the remaining exclusive edges that
satisfies all the orthogonality and the planarity constraints. 

\remove{Assume on the other hand that $\phi$ has a \naesat truth
assignment. Assign the truth edge of a variable $x_i$ to the inside (outside)
of $C$ if $x_i=\true$ ($x_i=\false$). By the properties
described above, assigning the truth edges to either of the two sides of $C$
also implies a unique side assignment for the remaining exclusive edges that
satisfies all the orthogonality and the planarity constraints. In particular, for
each $j=1,\dots,m$, the drawing of the clause gadget $C^j$ respects all the
constraints, since edges $\{\alpha^j,\beta^j\}$, $\{\beta^j,\gamma^j\}$, and
$\{\gamma^j,\delta^j\}$ do not all lie on the same side of $C$, due to the fact that
$\phi$ has a \naesat assignment.}

It is easy to see that $G_1$ and $G_2$ are outerplanar graphs with maximum degree~$3$, and that the reduction can be extended to any $k>3$.
\end{sketch}

In the following we describe how to modify the construction in Theorem~\ref{th:hardness-hamiltonian-cycle-three-colors} to show hardness of \ourproblem{2}. We keep only the
  edges of $G_1$ and $G_3$. Variable-clause gadgets and clause gadgets remain the same, as they are composed only of edges belonging to these two graphs. We replace each transmission edge in $G_2$ by a {\em transmission path} composed of alternating green and red edges, starting and ending with a red edge. 
  This transformation allows these paths to traverse the transmission edges of $G_1$ and the variable-clause edges of $G_3$ without introducing crossings between edges of the same color. 
  It is easy to see that the properties described in the proof of Theorem~\ref{th:hardness-hamiltonian-cycle-three-colors} on the assignments of the exclusive edges to the two sides of $C$ also hold in the constructed instance, where transmission paths take the role of the transmission edges.

\begin{theorem}\label{th:hardness-cycle-two-colors}
\ourproblem{2} is NP-complete, even for instances \ourinstance{} in which the shared graph consists of a cycle and a set of isolated vertices.
\end{theorem}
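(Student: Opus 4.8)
Theorem~\ref{th:hardness-cycle-two-colors} asserts NP-completeness of \ourproblem{2} when the shared graph is a cycle plus isolated vertices. The paper has already set up essentially all the machinery: Theorem~\ref{th:hardness-hamiltonian-cycle-three-colors} gives a reduction for $k=3$ from Positive Exactly-Three \naesat, and the paragraph preceding the statement describes the modification — drop $G_2$'s edges, keep $G_1$ (red) and $G_3$ (green), and replace each transmission edge of $G_2$ by a \emph{transmission path} of alternating green/red edges starting and ending red. So my job is really to explain why this modification produces a valid \ourproblem{2} instance with the promised shared-graph structure, and why correctness is preserved.

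**Plan of attack.** First I would establish membership in NP exactly as before, by appealing to Theorem~\ref{th:characterization}: a side assignment of the exclusive edges is a polynomial-size certificate that can be checked against the planarity and orthogonality constraints in polynomial time. Second, I would describe the modified construction precisely. The key structural point is that the shared graph must now be a cycle plus isolated vertices rather than a single cycle. The reason is that each transmission path, being an alternating sequence of red and green exclusive edges, introduces new internal vertices that lie on no cycle edge; these are the isolated vertices of $G_\cap$. I would verify that every such internal vertex of a transmission path has degree $2$ in the union graph (one red, one green incident edge) and degree $0$ in the shared graph, so the orthogonality constraint at such a vertex forces its two incident exclusive edges — one of each color — onto the same side unless it is bent, which is exactly the propagation behaviour we want.

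**The heart of the correctness argument.** The crucial lemma to prove is that a transmission path behaves exactly like a transmission edge did in the $k=3$ proof: it forces the side assignment to propagate consistently along the path. I would argue that at each internal (isolated) vertex of the path, the two incident exclusive edges must lie on the \emph{same} side of $C$, because they are the only two edges at that vertex and the orthogonality constraint (or simply planarity together with the degree structure) prevents them from straddling $C$. Chaining this along the alternating red/green path shows that the two endpoints of the path — which attach to the $w_i^j$ vertices — receive the same side, reproducing property~(i) of the original proof. The green and red edges within a single path never create same-color planarity violations with the transmission edges of $G_1$ or the variable-clause edges of $G_3$ precisely because the coloring alternates, so no two edges of one color on the path are nested or alternating with the same-colored obstacle edges. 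With this established, properties~(ii) and~(iii) carry over verbatim, since the clause and variable-clause gadgets are untouched. Thus $\langle G_1, G_3\rangle$ admits an \ourdrawing iff $\phi$ has a \naesat assignment.

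**Main obstacle.** I expect the delicate step to be verifying that the transmission paths can be routed — i.e., laid out along $C$ and assigned to sides — without forcing same-color crossings with the other exclusive edges they must traverse, and that the required alternation (starting and ending red) is always realizable given the parity of odd/even $j$ in the original transmission-edge coloring. The potential pitfall is a parity mismatch: one must check that an alternating path beginning and ending with red has the right length and internal-vertex count to thread between the existing gadget structure without two consecutive same-color edges being forced into an alternating configuration along $C$. Once the path is shown to be insertable with the claimed no-same-color-crossing property, the equivalence is immediate from the already-proven Theorem~\ref{th:hardness-hamiltonian-cycle-three-colors}, so I would keep the layout verification self-contained and defer the rest to the earlier argument.
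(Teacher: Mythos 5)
Your proposal follows the paper's approach exactly: keep $G_1$ and $G_3$, replace each blue transmission edge by a transmission path of alternating red/green edges (starting and ending red) whose internal vertices become the isolated vertices of the shared graph, show that side assignments propagate along these paths so they play the role of the transmission edges, and observe that the clause and variable-clause gadget properties carry over untouched. Two small repairs: the forcing at an internal path vertex comes purely from planarity (the vertex is drawn at a single point on one side of $C$, and neither a red nor a green edge may cross the shared cycle), not from the orthogonality constraints, which are only defined at degree-$2$ and degree-$3$ vertices of the shared graph; and NP membership should be argued via Theorem~\ref{th:general-characterization} rather than Theorem~\ref{th:characterization}, since the shared graph here is not exactly a cycle.
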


\section{Shared Graph is a Cycle}
\label{se:cycle-algorithms}

In this section we give a polynomial-time algorithm for instances of \ourproblem{2} 
whose shared graph is a cycle and whose union graph has maximum degree $5$ (Theorem~\ref{th:algorithm-one-degree-three}).
In order to obtain this result, we present an efficient algorithm for more restricted instances (Lemma~\ref{le:algorithm-degree-three-outerplanar}) and give a series of transformations (Lemma~\ref{le:reduction-degree-three-2-degree-three-outerplanar}--\ref{le:reduction-degree-five-2-degree-three}) to reduce any instance with the above properties to one that can be solved by the algorithm in Lemma~\ref{le:algorithm-degree-three-outerplanar}.

\begin{lemma}\label{le:algorithm-degree-three-outerplanar}
  \ourproblem{2} is in P
   for instances \ourinstance{} 
  such that the shared graph $C$ is a cycle and $G_1$ is an
  outerplanar graph with maximum degree~$3$.
\end{lemma}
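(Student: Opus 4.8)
The plan is to apply the characterization of Theorem~\ref{th:characterization} and then solve the resulting side-assignment problem by reducing it to \pnaesat. First I would record the consequences of the hypotheses. Since $C$ is a spanning cycle, every vertex has degree exactly $2$ in $C$, so the bound $\Delta(G_1)\le 3$ forces each vertex to be incident to at most one exclusive edge of $G_1$ (a \emph{red} edge). Moreover, as $G_1$ is outerplanar and contains the Hamiltonian cycle~$C$, its red edges are pairwise non-crossing chords of $C$, i.e.\ no two of them alternate along $C$. Two facts follow: the planarity constraints never apply to a pair of red edges, and the orthogonality constraint at a vertex $v$ is non-trivial only when $v$ is incident to one red edge and to two exclusive edges of $G_2$ (two \emph{blue} edges); call such a $v$ \emph{active}. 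Thus all binary (planarity) constraints are between blue edges, while all ternary (orthogonality) constraints sit at active vertices.

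\emph{Reformulation as \naesat.} For every exclusive edge $e$ I introduce a Boolean variable $s(e)\in\{\textsf{in},\textsf{out}\}$ encoding its side of $C$. A crossing pair of blue edges $b,b'$ (endpoints alternating along $C$) contributes the constraint $s(b)\neq s(b')$, that is, the binary clause \nae$(s(b),s(b'))$. At an active vertex with blue edges $b_1,b_2$ and red edge $r$, the orthogonality constraint ``if $s(b_1)=s(b_2)$ then $s(r)=s(b_1)$'' is, after a short truth-table check, exactly \nae$(s(b_1),s(b_2),\overline{s(r)})$, the only forbidden patterns being $s(b_1)=s(b_2)\neq s(r)$. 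Hence, by Theorem~\ref{th:characterization}, the instance admits an \ourdrawing if and only if the resulting \naesat formula $\phi$, whose clauses have size at most three, is \nae-satisfiable.

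\emph{Planarity of $\phi$ --- the crux.} It remains to show that the variable--clause graph of $\phi$ is planar, so that the polynomial-time algorithm for \pnaesat applies. I would exhibit a plane drawing as follows: draw $C$ as a circle, route each blue edge as a straight chord \emph{inside} the disk and each red edge as an arc \emph{outside} the disk; by the non-crossing property the red arcs are pairwise disjoint, and placing red and blue on opposite sides of $C$ removes all red--blue crossings. Then I place the variable node of each edge on its own chord/arc, each binary clause node at the interior crossing of the two blue chords it represents, and each ternary clause node at its active vertex on the circle. The only genuine crossings of the drawing are blue--blue, and these are precisely the binary clause nodes; so inside the disk the graph is a decoration of the planarized blue chord diagram, outside it lives among disjoint arcs, and the two parts meet only at active vertices of $C$. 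The main obstacle is to verify this routing rigorously: one must check that the incidences of each variable --- which connects to the clauses occurring along its chord or arc together with the ternary clauses at its endpoints --- are linearly ordered along that chord/arc and can thus be fanned out without crossings. Granting this, $\phi$ is a planar \naesat instance, and I would invoke the polynomial-time algorithm for \pnaesat to test its \nae-satisfiability; by the equivalence above this decides \ourproblem{2} for such instances in polynomial time.
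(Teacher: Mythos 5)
Your reduction of the side-assignment problem to \naesat is sound, and in fact your clauses coincide with the paper's after a change of variables (your $\neg s(r)$ plays the role of the paper's auxiliary variable $x_e$). The fatal problem is the planarity claim, and it is a genuine gap, not a matter of missing routing details. Because you keep one variable per blue edge and one binary clause per crossing pair of blue edges, the variable--clause graph contains a subdivision of the crossing graph $H$ of the blue chords: each crossing pair contributes a degree-$2$ clause node joined to the two chord variables. A bipartite circle graph need not be planar. Concretely, take the circular vertex order $p_1,p_2,p_3,x_1,x_2,x_3,p_4,p_5,p_6,y_3,y_2,y_1$ and blue edges $a_1=(p_1,p_6)$, $a_2=(p_2,p_5)$, $a_3=(p_3,p_4)$, $b_j=(x_j,y_j)$ for $j=1,2,3$: the $a_i$ are pairwise nested, the $b_j$ are pairwise nested, and every $a_i$ crosses every $b_j$, so $H\cong K_{3,3}$. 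With $G_1=C$ (no red edges at all) this is a trivial yes-instance of \ourproblem{2} satisfying all hypotheses of the lemma, yet your variable--clause graph contains a subdivision of $K_{3,3}$ and admits no planar drawing whatsoever, so no routing argument can rescue the construction. The local symptom is the one you flagged yourself: an edge from the variable of chord $b$ to a clause node at one of its crossings must pass the other clause nodes sitting on the same chord, and rerouting it off the chord reintroduces crossings with the edges attached to the transversal chords; these crossings are not clause nodes.

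The paper's proof avoids creating binary clauses altogether, and this is exactly the idea your proposal is missing. It 2-colors each connected component $B$ of $H$ (rejecting if $H$ is not bipartite) and introduces a \emph{single} variable $x_B$ per component, so that the planarity constraints of $G_2$ are absorbed into the signs of the literals $\ell_u^i$ rather than expressed as clauses. The only clauses that remain are the orthogonality clauses, one 4-literal \nae{} clause per red edge, split into two 3-literal clauses via an auxiliary variable. Planarity of the resulting variable--clause graph is then proved geometrically: the polygons $P_B$ inscribed in $C$ and spanned by distinct components of $H$ can share corners but no interior points, so one can place each $x_B$ inside its polygon, place the two clause nodes of a red edge $e=(v,w)$ at $v$ and $w$, and route all edges without crossings. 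With that contraction of components your encoding becomes the paper's and the argument goes through; without it, your reduction lands in general (non-planar) \naesat, which is NP-hard and yields no polynomial-time algorithm.
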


\begin{proof}
\begin{figure}[tb]
  \centering
  \subfigure[]{
  \includegraphics[page=5]{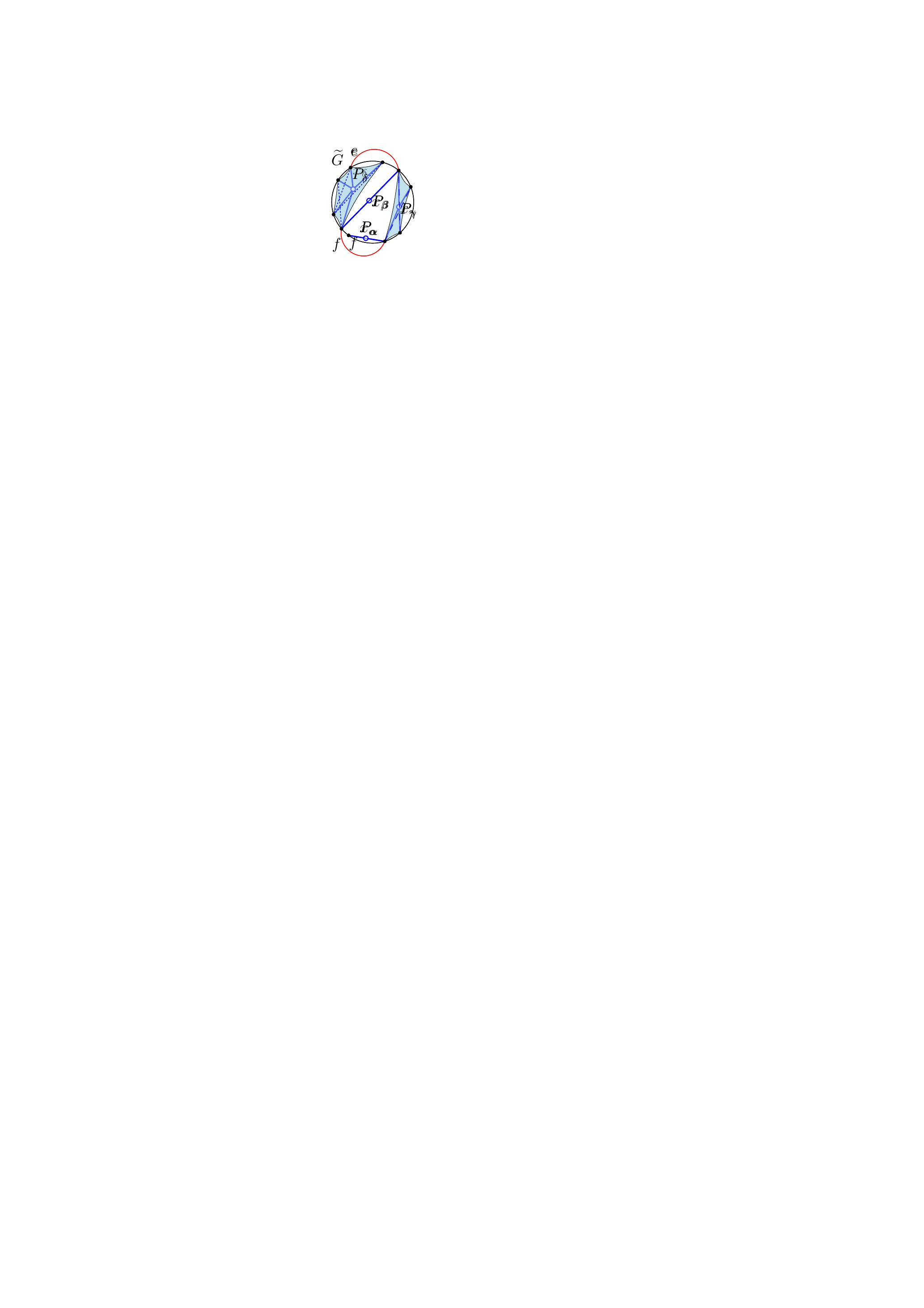}\label{fig:cycle-algorithm-a}}
  \hfil
    \subfigure[]{\includegraphics[page=2]{img/cycle-algorithm.pdf}\label{fig:cycle-algorithm-b}}
  \hfil
  \subfigure[]{\includegraphics[page=3]{img/cycle-algorithm.pdf}\label{fig:cycle-algorithm-c}}
    \hfil
  \subfigure[]{\includegraphics[page=4]{img/cycle-algorithm.pdf}\label{fig:cycle-algorithm-d}}
  \caption{(a)  Instance \ourinstance{} satisfying the properties of Lemma~\ref{le:algorithm-degree-three-outerplanar}, where the edges in $E_2$ belonging to the components $\alpha$, $\beta$, $\gamma$, and $\delta$ of $H$ have different line styles. 
  (b) Polygons for the components of $H$. 
  (c) Graph $\widetilde G$.
  (d) Variable--clause graph~$G_\phi$.}
  \label{fig:cycle-algorithm}
\end{figure}

The algorithm is based on a reduction to \pnaesat, which is in P~\cite{moret-pnp-88,shih_etal:90}.
First note that since $G_1$ is outerplanar there exist no two edges in $E_1$ alternating along $C$.
Hence, there are no planarity constraints for~$G_1$.

We now define an auxiliary graph $H$ with vertex set $E_2\setminus E_1$ and edges corresponding to pairs of edges alternating along $C$; see Fig~\ref{fig:cycle-algorithm-a}. W.l.o.g. we may assume that $H$ is bipartite, since $G_2$ would not meet the planarity constraints otherwise~\cite{ap-oigs-61}. Let $\cal B$ be the set of connected components of $H$, and for each component $B\in {\cal B}$, fix a partition $B_1,B_2$ of $B$ into independent sets (possibly $B_2=\emptyset$ in case of a singleton $B$). Note that in any inside/outside assignment of the exclusive edges of $G_2$ that meets the planarity constraints, for every $B\in {\cal B}$, all edges of $B_1$ lie in one side of $C$ and all edges of $B_2$ lie in the other side.

Draw the cycle $C$ as a circle in the plane. For a component $B \in {\cal B}$, let $P_B$ be the polygon inscribed into $C$ whose corners are the endvertices in $V$ of the edges in $E_2$ corresponding to the vertices of $B$; refer to Fig.~\ref{fig:cycle-algorithm-b}. If $B$ only contains one vertex (i.e., one edge of $G_2$), we consider the digon $P_B$ as the straight-line segment connecting the vertices of this edge. If $B$ has at least two vertices, we let $P_B$ be open along its sides, i.e. it will contain the corners and all inner points (in Fig.~\ref{fig:cycle-algorithm-b} we depict this by making the sides of $P_B$ slightly concave). One can easily show that for any two components $B,D\in {\cal B}$, their polygons $P_B,P_D$ may share only some of their corners, but no inner points. Hence the graph $\widetilde{G}$ obtained by placing a vertex $x_B$ inside the polygon $P_B$, for $B \in{\cal B}$, making $x_B$ adjacent to each corner of $P_B$ and adding the edges $E_1$, is planar; see Fig.~\ref{fig:cycle-algorithm-c}.

We construct a formula $\phi$ with variables $x_B$, $B \in {\cal B}$, such that $\phi$ is \nae-satisfiable if and only if \ourinstance{} admits an inside/outside assignment meeting all planarity and orthogonality constraints. The encoding of the truth assignment will be such that $x_B$ is {\sf true} when the edges of $B_1$ are inside $C$ and the edges of $B_2$ are outside, and $x_B$ is {\sf false} if the reverse holds.  Every assignment satisfying the planarity constraints for $G_2$ defines a truth-assignment in the above sense. 

Let $e=(v,w)$ be an exclusive edge of $E_1$ and let $e_v^1,e_v^2$ ($e_w^1,e_w^2$) be the exclusive edges of $E_2$ incident to $v$ (to $w$, respectively); we assume that all such four edges of $E_2$ exist, the other cases being simpler. Let $B(u,i)$ be the component containing the edge $e_u^i$, for $u\in \{v,w\}$ and $i \in \{1,2\}$. Define the literal $\ell_u^i$ to be $x_{B(u,i)}$ if $e_u^i \in B_1(u,i)$ and $\neg x_{B(u,i)}$ if $e_u^i\in B_2(u,i)$. With our interpretation of the truth assignment, an edge $e_u^i$ is inside $C$ if and only if $\ell_u^i$ is {\sf true}. Now, for the assignment to meet the orthogonality constraints, if $\ell_v^1 = \ell_v^2$, say both are {\sf true}, then $e$ must be assigned inside $C$ as well, which would cause a problem if and only if $\ell_w^1 = \ell_w^2 = {\sf false}$. Hence the orthogonality constraints are described by \nae-satisfiability of the clauses $c_e = (\ell_v^1, \ell_v^2, \neg \ell_w^1, \neg \ell_w^2)$, for each $e\in E_1$. 
To reduce to \naesat, we introduce a new variable $x_e$ for each edge $e\in E_1 \setminus E_2$ and replace the clause $c_e$ by two clauses $c'_e=(\ell_v^1, \ell_v^2, x_e)$ and $c''_e=(\neg x_e, \neg \ell_w^1, \neg \ell_w^2)$. 
A planar drawing of the variable--clause graph $G_\phi$ of the resulting formula $\phi$ is obtained from the planar drawing $\widetilde{\Gamma}$ of $\widetilde{G}$ (see Figs.~\ref{fig:cycle-algorithm-c} and~\ref{fig:cycle-algorithm-d}) by (i) placing each variable $x_B$, with $B \in {\cal B}$, on the point where vertex $x_B$ lies in $\widetilde{\Gamma}$, (ii) placing each variable $x_e$, with $e \in E_1$, on any point of edge $e$ in $\widetilde{\Gamma}$, (iii) placing clauses $c'_e$ and $c''_e$, for each edge $e=(v,w) \in E_1$, on the points where vertices $v$ and $w$ lie in $\widetilde{\Gamma}$, respectively, and (iv) drawing the edges of $G_\phi$ as the corresponding edges in $\widetilde{\Gamma}$. This implies that $G_\phi$ is planar and hence we can test the \nae-satisfiability of $\phi$ in polynomial time~\cite{moret-pnp-88,shih_etal:90}.
\end{proof} 

\remove{
{\bf SABINE'S}
\begin{proof}
  Observe that there are no planarity constraints for $G_1$.  Fixing a
  planar embedding of $G_2$, we say that a vertex is \emph{inside
    (outside)} if it is incident to two exclusive edges of $G_2$ that
  are both inside (outside) $C$.  Now let $e$ be an exclusive edge of
  $G_1$. We can assign a side to $e$ maintaining the orthogonal
  constraints if and only if it is not the case that one endvertex of
  $e$ is inside and the other is outside. We will formulate
  this property as an instance of \pnaesat.

  Consider the constraint graph $H$ on the exclusive edges of $G_2$
  induced by the planarity constraints. $G_2$ is planar if and only if
  $H$ is bipartite and the assignment of the exclusive edges to the
  inside and the outside of $C$ according to any bipartition of $H$
  yields a planar drawing of $G_2$.
  Let $E^1,\dots,E^k \subset E_2$ be the vertices of the connected
  components of $H$. For each $i=1,\dots,k$ fix an edge $e^i \in
  E^i$. Note that the sides of $e^1,\dots,e^k$ determine the sides of
  all other edges of $G_2$ in a planar drawing of $G_2$. 
  We say that
  an edge in $E^i$ is \emph{opposite} of $e^i$ if it cannot be drawn on the
  same side as $e^i$.  

  Let $x_i$, $i=1\dots,k$ be Boolean variables indicating whether
  $e^i$ is inside or outside of $C$. Let $e$ be an exclusive edge of
  $G_1$ such that for each endvertex $u$ of $e$ there are two
  exclusive edges $e_u$ and $e_u'$ of $G_2$ that are incident to
  $u$\todo{SC: Should we note that when either end of $e$ is incident to at most one exclusive edge of $G_2$ we can always safely place $e$?}.
  Let $e^{(')}_u$ be in the same connected component of the
  constraint graph $H$ as $e^i$. Let $\ell^{(')}_u = \neg x_i$ if
  $e^{(')}_u$ is opposite of $e^i$ and $\ell^{(')}_u = x_i$
  otherwise. Now suppose $e=\{v,w\}$ and let $x_e$ be another Boolean
  variable. Then we can maintain the orthogonality constraints for $e$ if
  $c_e := (\ell_v \vee \ell'_v \vee x_e) \wedge (\neg x_e \vee
  \neg \ell_w \vee \neg \ell'_w)$ has a not-all-equal truth
  assignment.

  \begin{figure}[t]
    	\centering
  \subfigure[\label{FIG:semi-circles}Non-Planar Drawing]{\includegraphics[scale=0.7,page=8]{img/geo}}
  \hfil
	\subfigure[\label{FIG:variable-clause graph}Variable--clause graph]{\includegraphics[scale=0.7,page=9]{img/geo}}
	\caption{
      The orthogonality constraints corresponds to the not-all-equal clauses $c_e = (x_2 \vee x_2 \vee x_e) \wedge (\neg x_e \vee \neg x_3 \vee x_4)$, $c_{e'} = (\neg x_2 \vee x_3 \vee x_{e'}) \wedge (\neg x_{e'} \vee x_4 \vee x_4)$, and $c_{e''} = (\neg x_1 \vee x_2 \vee x_{e''}) \wedge (\neg x_{e''} \vee \neg x_1 \vee x_2)$. }
  \end{figure}
  It remains to show that the respective variable--clause graph is
  planar.  Imagine, we drew $C$ (except one edge) on a horizontal line
  and the exclusive edges of $G_2$ as circular-arcs above the line and
  the exclusive edges of $G_1$ below the line (see
  Fig.~\ref{FIG:semi-circles}). Note that arcs which cross in the drawing
  belong to the same connected component of $H$ and arcs of edges from 
  different connected components do not cross. 
  Now place the variable vertex $x_i$ in the middle of the
  arc of $e^i$. The clause vertices for the two clauses in
  $c_e$ as well as the variable vertex $x_e$ are placed in the middle of the
  arc of the exclusive edge $e$ of $G_1$. We connect the variable vertices 
  to their respective clause vertices by routing the edges similarly to the
  arcs.  See Fig.~\ref{FIG:variable-clause graph}.  Since
  $G_1$ is outerplanar, the exclusive edges of $G_1$ are a
  matching. Thus, we can do this simultaneously for all exclusive edges of
  $G_1$.
\end{proof}
}

The next two lemmas show that we can use Lemma~\ref{le:algorithm-degree-three-outerplanar} to test in polynomial time any instance of \ourproblem{2} such that $G_\cap$ is a cycle and each vertex $v \in V$ has degree at most $3$ in either $G_1$ or $G_2$.

\begin{lemma}\label{le:reduction-degree-three-2-degree-three-outerplanar}
  Let \ourinstance{} be an instance of \ourproblem{2} whose
  shared graph is a cycle and such that $G_1$ has maximum degree $3$. It is possible to construct in
  polynomial time an equivalent instance \ourinstance{*} of
  \ourproblem{2} whose shared graph is a
  cycle and such that $G_1^*$ is outerplanar and has maximum degree~$3$.
\end{lemma}

\begin{sketch}
  We construct an equivalent instance \ourinstance{\prime} of
  \ourproblem{2} such that $G_\cap^\prime$ is a cycle, $G_1^\prime$
  has maximum degree $3$, and the number of pairs of edges in
  $G_1^\prime$ that alternate along $G_\cap^\prime$ is smaller than
  the number of pairs of edges in $G_1$ that alternate along $G_\cap$.
  Repeatedly applying this transformation
  yields an equivalent instance \ourinstance{*} satisfying the
  requirements of the lemma.

  Consider two edges $e=(u,v)$ and $f=(w,z)$ of $G_1$ such that
  $u,w,v,z$ appear in this order along cycle $G_\cap$ and such that
  the path $P_{u,z}$ in $G_\cap$  between $u$ and~$z$ that contains $v$
  and $w$ has minimal length. If $G_1$ is not outerplanar, then the edges $e$
  and $f$ always exist. Fig.~\ref{fig:reduction-deg3-2-outerplanar}
  illustrates the construction of \ourinstance{\prime}.

\begin{figure}[tb]
  \centering
  \includegraphics[scale=0.5,page=1]{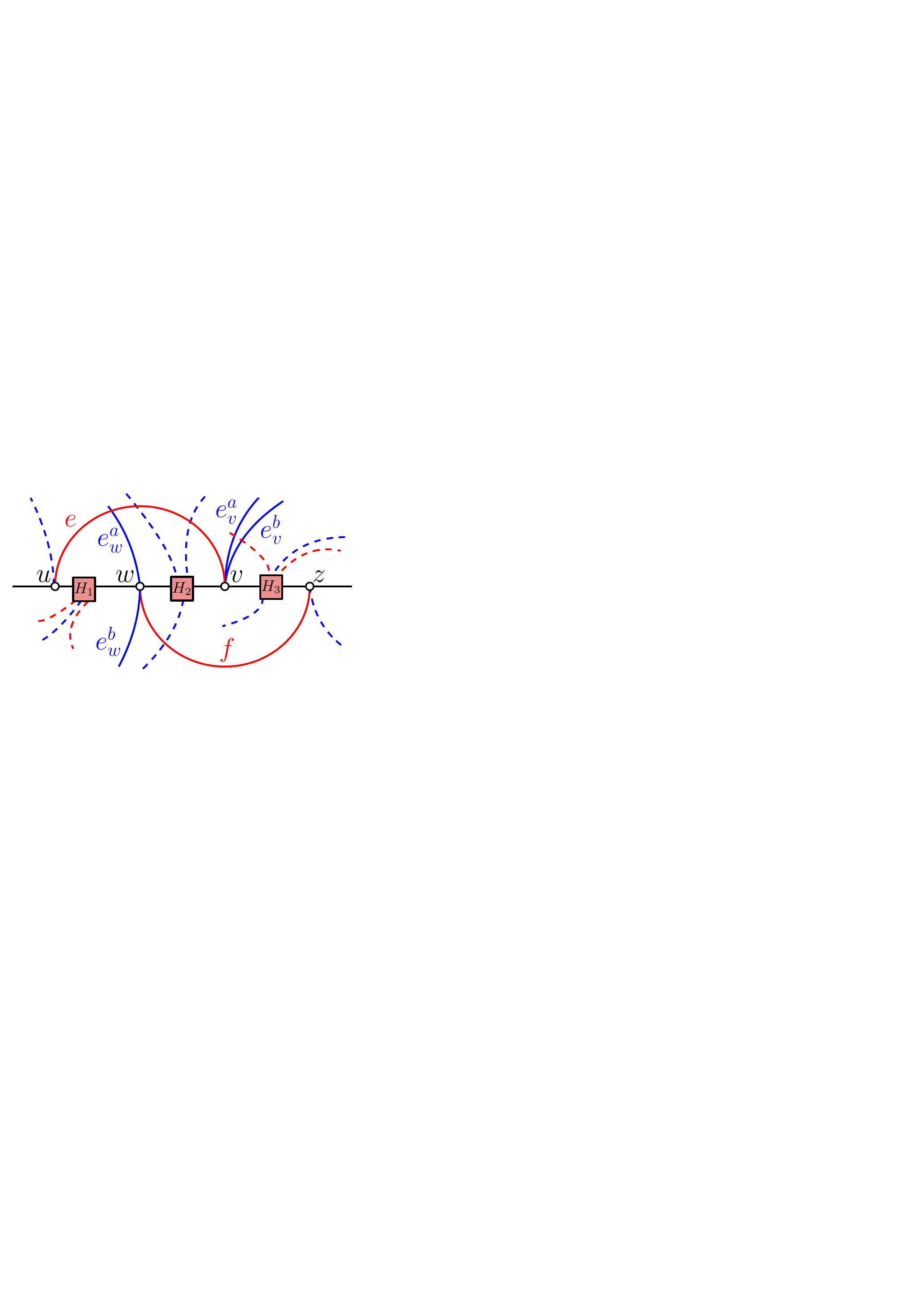}
  \hfil
  \includegraphics[scale=0.5,page=2]{img/red-deg3-outerplanar.pdf}
  \caption[]{Instances (left) \ourinstance{} and (right) \ourinstance{\prime} for the
  proof of Lemma~\ref{le:reduction-degree-three-2-degree-three-outerplanar}.
  Edges of $G_\cap$ ($G'_\cap$) are black. Exclusive edges of $G_1$ ($G'_1$) are red and those of $G_2$ ($G'_2$) are~blue.}
  \label{fig:reduction-deg3-2-outerplanar}
\end{figure}

By the choice of $e$ and $f$, and by the fact that $G_1$ has maximum degree $3$, there is no exclusive edge in $G_1$ with
one endpoint in the set $H_2$ of vertices between $w$ and $v$, and the other one not in $H_2$. Further, observe
that in an \ourdrawing of \ourinstance{\prime} edges $f$ and $f'$ (edges $e$ and $e'$) must be on the same side. Further,~$e$ and
$f$ must be in different sides of $G'_\cap$. It can be concluded that
\ourinstance{\prime} has an \ourdrawing if and only if \ourinstance{}
has an \ourdrawing.
\end{sketch}

The proof of the next lemma is based on the replacement illustrated in Fig.~\ref{fig:red-deg5-to-degree3}.
Afterwards, we combine these results to present the main result of the section.

\begin{figure}[tb]
  \centering
  \includegraphics[scale=0.48,page=1]{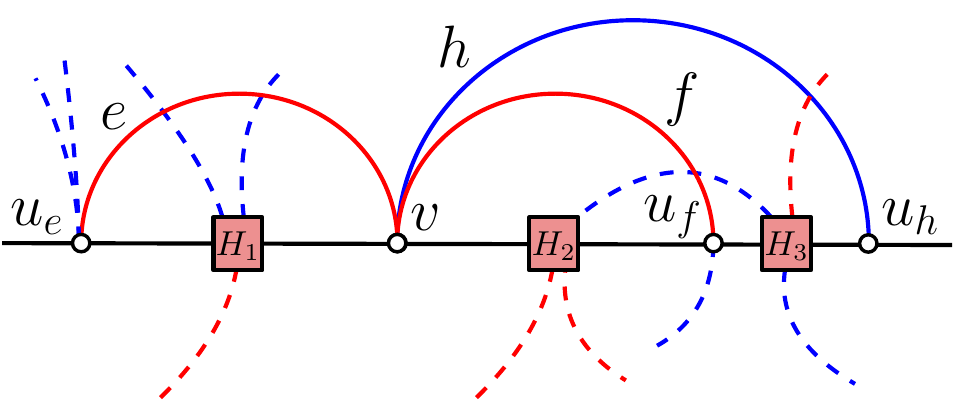}
  \hfill
  \includegraphics[scale=0.48,page=2]{img/red-deg5-to-degree3.pdf}
  \caption[]{
  Illustration of the transformation for the proof of Lemma~\ref{le:reduction-degree-five-2-degree-three} to reduce the number of vertices incident to two exclusive edges in $G_1$.
  Edges $e',f'$ of $G_2$ and $h'$ of $G_1$ (right) take the role of edges $e,f$ of $G_1$ and $h$ of $G_2$ (left), respectively. Thus, the orthogonality constraints at $v'$ are equivalent to those at $v$.
  }
  \label{fig:red-deg5-to-degree3}
\end{figure}

\begin{lemma}\label{le:reduction-degree-five-2-degree-three}
  Let \ourinstance{} be an instance of \ourproblem{2} 
  whose shared graph is a cycle and whose union graph has maximum degree~$5$. It is possible to construct in polynomial 
  time an equivalent instance \ourinstance{*} of \ourproblem{2} whose shared graph is a cycle and such that graph~$G_1^*$ has maximum degree~$3$.
\end{lemma}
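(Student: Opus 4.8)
The plan is to work entirely with the side-assignment characterization of Theorem~\ref{th:characterization}: an \ourdrawing exists if and only if the exclusive edges can be placed inside/outside the cycle $C$ so as to respect the planarity and orthogonality constraints. I would reduce the maximum degree of $G_1$ by repeatedly eliminating the offending vertices, using as a measure the number $\mu$ of vertices of $C$ incident to two exclusive edges of $G_1$. Since $G_\cap$ is a cycle, every vertex already carries two shared edges, so a vertex has degree at most $3$ in $G_1$ exactly when it is incident to at most one exclusive $G_1$-edge; hence $\mu = 0$ is precisely the target condition. The first observation is that, because the union graph has maximum degree $5$ and each input graph must have maximum degree $4$, a vertex $v$ counted by $\mu$ is incident to exactly two exclusive edges $e,f$ of $G_1$ and to at most one exclusive edge $h$ of $G_2$. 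I would then exhibit a local, constant-size transformation that strictly decreases $\mu$ without changing the answer, and iterate it at most $|V|$ times.

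For the transformation I distinguish the two possible local types. If $v$ has no incident exclusive $G_2$-edge, then the orthogonality constraint at $v$ is vacuous, so $e$ and $f$ are coupled only through the planarity constraints. In this case I would simply insert a new shared vertex $v'$ immediately after $v$ along $C$ (subdividing a cycle edge) and move the endpoint of $f$ from $v$ to $v'$. Because $v$ and $v'$ are consecutive on $C$, $f$ alternates along $C$ with exactly the same edges as before, so no planarity constraint is created or destroyed, while both $v$ and $v'$ now carry a single exclusive $G_1$-edge; thus $\mu$ drops by one.

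The interesting case is when the exclusive $G_2$-edge $h$ is present, so that the three edges are tied by the ternary orthogonality constraint ``if $e$ and $f$ lie on the same side of $C$, then $h$ lies on that side as well''. Splitting $e$ and $f$ onto different vertices would destroy this coupling, so here I would use the role-swapping gadget of Fig.~\ref{fig:red-deg5-to-degree3}: I insert a new shared vertex $v'$ and a constant number of auxiliary shared vertices, together with new exclusive edges $e',f'$ of $G_2$ and $h'$ of $G_1$, placed so that their sides are forced to equal the sides of $e,f$ and $h$, respectively. Same-graph forcing is obtained by short planarity chains (an edge alternating with a given edge flips its side, two alternations copy it), and the single cross-graph transfer needed to link $h'$ to $h$ is obtained through an orthogonality constraint at an auxiliary degree-$2$ vertex. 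At $v'$ the triggering pair now lies in $G_2$ (namely $e',f'$) while the forced edge $h'$ lies in $G_1$, so the orthogonality constraint at $v'$ reproduces exactly the same ternary relation as the one at $v$, yet $v'$ is incident to only one exclusive $G_1$-edge. After routing $e$ and $f$ away from $v$ through the gadget, $v$ retains at most one exclusive $G_1$-edge, and no newly created vertex is incident to two exclusive $G_1$-edges, so again $\mu$ decreases by one.

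To argue equivalence I would appeal once more to Theorem~\ref{th:characterization} and show that the gadget is \emph{constraint-preserving}: its only interaction with the rest of the instance is through the sides of $e,f,h$, and every side-assignment of the remaining edges extends to a valid assignment of the gadget if and only if the reproduced ternary relation among the sides of $e$, $f$, and $h$ holds, which is exactly the relation imposed at $v$ in the original instance. Hence the transformed instance admits an \ourdrawing if and only if the original one does. Each application is clearly polynomial and adds $O(1)$ vertices and edges, and since $\mu$ strictly decreases the whole procedure terminates after $O(|V|)$ steps, yielding the desired instance \ourinstance{*} with $G_1^*$ of maximum degree~$3$. The main obstacle I anticipate is precisely the correctness of this gadget: I must place the new shared vertices on $C$ so that the auxiliary exclusive edges alternate with the intended edges and with nothing else (avoiding spurious planarity constraints against unrelated chords), verify that the planarity chains and the single orthogonality transfer compose to the exact ternary relation, and check that none of the auxiliary vertices is itself incident to two exclusive $G_1$-edges.
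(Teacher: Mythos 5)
Your overall plan coincides with the paper's proof: both iterate a constant-size local replacement that strictly decreases the number of vertices incident to two exclusive $G_1$-edges (equivalently, the number of degree-$4$ vertices of $G_1$), and both realize the replacement with the role-swapping gadget of Fig.~\ref{fig:red-deg5-to-degree3}, in which new edges $e',f'\in E_2$ and $h'\in E_1$ meet at a fresh vertex $v'$ so that the orthogonality constraint at $v'$ reproduces the ternary relation that held at $v$. Your separate, simpler handling of the case where $v$ has no exclusive $G_2$-edge is consistent with the paper, which only remarks that this case is ``simpler''; note, however, that you must insert the new vertex $v'$ on the arc from $v$ towards $u_f$ that avoids $u_e$, since inserting it on the other arc makes $e=(v,u_e)$ and the relocated edge $(v',u_f)$ alternate along the cycle and creates a planarity constraint that did not exist before.

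The one step of your plan that would fail as written is the forcing inside the gadget. You assert that the couplings $e'\leftrightarrow e$ and $f'\leftrightarrow f$ are ``same-graph forcing'' achievable by planarity chains, and that only a ``single cross-graph transfer'' is needed, namely for $h'\leftrightarrow h$. But the whole point of the role swap is that \emph{all three} couplings cross the two graphs: $e,f\in E_1$ while their stand-ins $e',f'$ lie in $E_2$, and $h\in E_2$ while $h'\in E_1$. Planarity constraints only ever relate two exclusive edges of the \emph{same} graph, so no chain of alternations can tie the side of $e'\in E_2$ to the side of $e\in E_1$; with planarity chains alone, the sides of $e'$ and $f'$ are completely decoupled from those of $e$ and $f$, and the constraint at $v'$ no longer simulates the one at $v$. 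The repair is to apply the mechanism you reserve for $h$ three times, which is exactly what the paper's gadget does: it re-attaches $e$, $f$, $h$ as $e''=(v_e,u_e), f''=(v_f,u_f)\in E_1$ and $h''=(v_h,u_h)\in E_2$, and at each of the three auxiliary degree-$2$ vertices $v_e,v_f,v_h$ a short planarity chain pins two exclusive edges of one graph to the same side, whereupon the orthogonality constraint at that vertex drags the single edge of the other graph onto that side, giving $A(e')=A(e'')$, $A(f')=A(f'')$, and $A(h')=A(h'')$. With that correction, together with the routine check that the auxiliary vertices create no new degree-$4$ vertices of $G_1$ and keep the union degree at most~$5$, your argument goes through.
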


\remove{
\begin{sketch}
We describe how to construct an equivalent instance \ourinstance{\prime} of \ourproblem{2} such that $G_\cap^\prime$ is a cycle, $G'_\cup$ has maximum degree~$5$, and the number of degree-$4$ vertices in $G_1^\prime$ is smaller than the number of degree-$4$ vertices in $G_1$.
Repeatedly applying this transformation yields an equivalent instance \ourinstance{*} satisfying the requirements of the lemma.

Let $v \in V$ be a vertex incident to two 
edges $e=(v,u_e), f=(v,u_f) \in E_1$. Assume w.l.o.g. that $u_e$, $v$, and $u_f$ appear in this order along $G_\cap$. 
Suppose that there exists an edge $h=(v,u_h) \in E_2$ incident to $v$, the other
case being simpler. Fig.~\ref{fig:red-deg5-to-degree3}
shows how to construct \ourinstance{\prime} when vertices
$u_e$, $v$, $u_h$, $u_f$ appear in this order along $G_\cap$; the other cases are analogous.
\end{sketch}
}

\begin{theorem}\label{th:algorithm-one-degree-three}
\ourproblem{2} can be solved in polynomial time for instances whose shared graph is a cycle and whose union graph has maximum degree~$5$.
\end{theorem}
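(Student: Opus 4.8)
The plan is to obtain the theorem by composing, as black boxes, the two reductions of Lemmas~\ref{le:reduction-degree-five-2-degree-three} and~\ref{le:reduction-degree-three-2-degree-three-outerplanar} with the decision procedure of Lemma~\ref{le:algorithm-degree-three-outerplanar}. Since all the genuine work has already been carried out inside those lemmas, the task at this level is only to verify that the output of each reduction satisfies the input hypotheses of the next, that each step preserves the existence of an \ourdrawing, and that the size and running time blow up only polynomially along the chain.

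Concretely, let \ourinstance{} be an instance of \ourproblem{2} whose shared graph is a cycle and whose union graph has maximum degree~$5$. First I would apply Lemma~\ref{le:reduction-degree-five-2-degree-three}, whose hypotheses hold by assumption, to construct in polynomial time an equivalent instance \ourinstance{*} whose shared graph is again a cycle and in which $G_1^*$ has maximum degree~$3$. This output is exactly an instance ``with shared graph a cycle and with first graph of maximum degree~$3$'', which is precisely the hypothesis of Lemma~\ref{le:reduction-degree-three-2-degree-three-outerplanar}. Applying that lemma then yields, again in polynomial time, a further equivalent instance in which the shared graph is a cycle and the first graph is \emph{outerplanar} of maximum degree~$3$. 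This last instance satisfies exactly the hypotheses of Lemma~\ref{le:algorithm-degree-three-outerplanar}, which decides it in polynomial time; by the two equivalences this decision is also correct for the original instance \ourinstance{}.

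Correctness and the time bound follow formally: each reduction is equivalence-preserving, so the transformed instance admits an \ourdrawing if and only if its predecessor does, and composing two polynomial-time equivalence-preserving reductions with a polynomial-time decision procedure yields an overall polynomial-time algorithm. I expect no real obstacle at the level of this theorem, since the difficulty is entirely absorbed into the three lemmas; the only point that deserves an explicit check is the interface between the two reductions. Specifically, Lemma~\ref{le:reduction-degree-three-2-degree-three-outerplanar} imposes its maximum-degree-$3$ requirement only on the first graph, not on $G_2$ nor on the union graph, so it may legitimately be applied to the output of Lemma~\ref{le:reduction-degree-five-2-degree-three} even though that output is no longer guaranteed to have a union graph of maximum degree~$5$. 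As this condition is indeed met, the chain goes through and the theorem follows.
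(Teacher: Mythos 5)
Your proposal is correct and follows exactly the paper's own proof: apply Lemma~\ref{le:reduction-degree-five-2-degree-three}, then Lemma~\ref{le:reduction-degree-three-2-degree-three-outerplanar}, then decide with Lemma~\ref{le:algorithm-degree-three-outerplanar}, with equivalence and polynomiality preserved at each step. Your explicit check of the interface (that the second reduction only requires $G_1$ to have maximum degree~$3$, with no condition on the union graph) is a sound observation that the paper leaves implicit.
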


\section{Shared Graph is Biconnected}
\label{se:biconnected}

We now study \ourproblem{k} for 
instances whose shared graph is biconnected.  In 
Theorem~\ref{th:decision-biconnected}, we give a polynomial-time Turing 
reduction from instances of \ourproblem{2} whose shared graph is biconnected to 
instances whose shared graph is a cycle. In 
Theorem~\ref{th:biconnected geometric}, 
we give an algorithm that, given a positive instance of \ourproblem{k} such that the shared graph is biconnected together with a SEFE satisfying the orthogonality constraints,
constructs an \ourdrawing with at most three bends per edge.


We start with the Turing reduction, i.e., we develop an algorithm that
takes as input an instance \ourinstance{} of \ourproblem{2} whose
shared graph $G_\cap = G_1 \cap G_2$ is biconnected and produces a set
of $O(n)$ instances \ourinstance{1},\dots,\ourinstance{h} of
\ourproblem{2} whose shared graphs are cycles.  
The output is such that
\ourinstance{} is a positive instance if and only if all instances
\ourinstance{i}, $i=1,\dots,h$, are positive.  The reduction
is based on the {\sc SEFE} testing algorithm for instances whose shared
graph is biconnected by Bläsius et
al.~\cite{bkr-seeor-13,bkr-seeor-15}, which can be seen as a
generalized and unrooted version of the one by Angelini et
al.~\cite{adfpr-tsetgibgt-11}.

We first describe a preprocessing step. 
Afterwards, we give an outline of the approach of Bläsius et
al.~\cite{bkr-seeor-15} and present the Turing reduction in two steps. We assume familiarity with SPQR-trees~\cite{dt-opt-96,dt-omtcws-96}; for formal definitions see Appendix~\ref{app:definitions}.

\begin{lemma}
  \label{le:simple-attachments}
  Let \ourinstance{} be an instance of \ourproblem{2} whose shared graph 
  is biconnected. It is possible to construct in polynomial time an equivalent instance \ourinstance{*}
  whose shared graph is biconnected and such that each endpoint of an
  exclusive edge has degree~$2$ in the shared graph.
\end{lemma}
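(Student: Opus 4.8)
The plan is to reduce the claim to a purely local operation performed independently at each endpoint of an exclusive edge whose shared degree exceeds~$2$, and the first step I would carry out is the degree analysis that makes this set of endpoints easy to describe. Since every $G_i$ has maximum degree~$4$, a vertex $v$ that is an endpoint of an exclusive edge and has shared degree~$d$ satisfies $d + a_i \le 4$, where $a_i$ is the number of exclusive edges of $G_i$ incident to~$v$. As $G_\cap$ is biconnected we have $d \ge 2$; a vertex with $d=2$ already satisfies the conclusion, while $d=4$ forces $a_1=a_2=0$ and hence such a vertex is not the endpoint of any exclusive edge. Therefore the only vertices that must be treated are those of shared degree exactly~$3$ carrying at least one exclusive edge, and each of them carries at most one exclusive edge of $G_1$ and at most one of $G_2$.

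For such a vertex~$v$ with shared neighbours $a$, $b$, $c$, I would replace~$v$ by a constant-size biconnected gadget $W_v$ whose shared part reconnects the edges $va$, $vb$, $vc$ to three distinct attachment vertices, and in which the (at most two) exclusive edges formerly incident to~$v$ are rerouted so that they now end at vertices of shared degree~$2$. A natural candidate is to expand~$v$ into a short cycle---a subdivided triangle on the three former neighbours---whose subdivision vertices have shared degree~$2$ and to which the rerouted exclusive edges are attached. Performing this replacement at every bad vertex is clearly polynomial, and since each gadget is biconnected and is glued to the rest of $G_\cap$ only through the former edges $va$, $vb$, $vc$, the resulting shared graph of \ourinstance{*} remains biconnected; moreover, after the replacement every endpoint of an exclusive edge has shared degree~$2$, as required.

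Equivalence would then be proved through Theorem~\ref{th:general-characterization}, by exhibiting a correspondence between the SEFEs of the two instances that satisfy the orthogonality constraints. The easy direction is to verify that the planarity part of the SEFE conditions is transported by the gadget, which only subdivides and locally reroutes. The heart of the argument---and the step I expect to be the main obstacle---is to show that $W_v$ reproduces the degree-$3$ orthogonality constraint \emph{exactly}: at~$v$ this constraint requires all exclusive edges to lie in a common one of the three angular gaps between consecutive shared edges, and the admissible gaps are precisely those corresponding to faces of $G_\cap$ incident both to~$v$ and to the far endpoint of the edge. The delicate point is that the cyclic order of $a$, $b$, $c$ around a degree-$3$ vertex is unique up to reflection, so the gadget must faithfully preserve exactly these three ``corner'' choices, together with the coupling that forces the $G_1$- and $G_2$-exclusive edges into a common corner, while each of its degree-$2$ attachment vertices only offers a two-sided freedom. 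I would resolve this by designing $W_v$ so that its internal faces are in bijection with the three corners of~$v$ and by choosing the rerouting of each exclusive edge so that the set of feasible placements at the attachment vertices coincides with the set of feasible corners at~$v$; the technical crux is to check, over all combinations of present exclusive edges, that a side (inside/outside) assignment of the gadget meets the local orthogonality constraints if and only if the original assignment at~$v$ does, and that no spurious placement---such as routing an exclusive edge into a newly created internal face of $W_v$---is introduced.
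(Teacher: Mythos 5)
Your degree analysis and the general shape of the construction (a constant-size local gadget with a blocked triangular face, rerouting the exclusive edges to new degree-$2$ vertices) match the paper's proof, which subdivides a shared edge $ux$ incident to the degree-$3$ vertex $u$ by three vertices $w_1,w_2,w_3$, adds the chord $w_1w_3$, and reroutes both exclusive edges (one of $G_1$ and, if present, one of $G_2$) to the single vertex $w_2$. However, you have correctly identified the crux and then left it unresolved, and that crux is essentially the entire content of the paper's proof. The set of ``feasible corners'' at $v$ is not a local quantity: it depends on where the far endpoint of the exclusive edge sits relative to the embedding choices of the rest of the shared graph, and these choices (flips of triconnected components, permutations of parallel components) vary across different \ourdrawings of the same instance. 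The paper resolves this with an SPQR-tree argument: if the unique node $\mu$ whose skeleton contains $u$ with degree~$3$ is an R-node, then $u$ and the part containing the other endpoint $v$ share at most two faces, both incident to a single virtual edge $\eps$ at $u$, so one can take $ux$ to be the edge of the expansion graph of $\eps$ incident to $u$; a similar argument handles P-nodes when $v$ is not the opposite pole; and when $e$ is an intra-pole edge of a P-node, the orthogonality constraint at the endpoint shared with another exclusive edge $e'$ pins down the face. Without this argument you cannot ``choose the rerouting so that feasible placements coincide with feasible corners,'' because you have no description of the feasible corners to begin with.

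Two further concrete problems follow from this. First, your gadget offers too little freedom: a degree-$2$ attachment vertex on one side of the subdivided triangle sees only the triangle interior (blocked by planarity) and \emph{one} original corner, so each exclusive edge is pinned to a single corner; but even with the paper's carefully chosen edge $ux$, the exclusive edges may legitimately lie in \emph{either} of the two faces incident to $ux$, depending on flips made elsewhere, and different \ourdrawings of the same positive instance may use different ones. The paper's gadget preserves exactly this two-sided freedom, since the triangle $w_1w_2w_3$ can be embedded on either side of the subdivided path; a one-corner gadget can turn a positive instance into a negative one. Second, you miss the necessary preprocessing step: for an \emph{isolated intra-pole} exclusive edge (endpoints adjacent in a P-node skeleton, neither endpoint incident to any other exclusive edge), no single edge $ux$ captures the feasible faces at all --- the edge may go into any of the parallel channels between the poles --- so no such gadget can be correct. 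The paper deletes these edges up front and justifies this using the result of Angelini et al.\ that intra-pole edges can always be reinserted into a SEFE, noting that the orthogonality constraints are vacuous for isolated edges. Your proposal, applied to such an edge, would impose a spurious restriction that the original instance does not have.
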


\remove{
\begin{sketch}
  Consider an exclusive edge $e=uv$ in $G_1$ or $G_2$, say in $G_1$,
  such that~$u$ has degree~3 in the shared graph $G_\cap$. Let $ux$ be
  an edge of $G_\cap$ incident to $u$ such that, in every \ourdrawing
  of \ourinstance{}, the edge $uv$ is embedded in a face of $G_\cap$
  incident to $ux$. It can be proved that such an edge exists in (almost) all cases.
  We perform the following transformation. We
  subdivide $ux$ by three vertices $w_1,w_2,w_3$ and add the edge
  $w_1w_3$.  We further replace $uv$ by $w_2v$ and also, if it exists,
  the (unique) exclusive edge $e' = uv'$ (from $G_2$) by $w_2v'$.
  Call the resulting instance \ourinstance{'}.  It is not difficult to
  see that \ourinstance{} admits an \ourdrawing if and only if
  \ourinstance{'} does.  If \ourinstance{'} admits an \ourdrawing,
  then we can contract the vertices $w_1,w_2,w_3$ onto $u$ to obtain an
  \ourdrawing of \ourinstance{}.  Note that the orthogonal
  constraints at $u$ are satisfied since the triangle $w_1,w_2,w_3$
  ensures that the exclusive edges incident to $u$ are embedded in the
  same face of $G_\cap'$ and hence of $G_\cap$.  Conversely, given an 
  \ourdrawing of \ourinstance{}, due to the orthogonality constraints at
  $u$ all the exclusive edges incident to $u$ are embedded in the same
  face of $G_\cap$, and hence the replacement can be carried out locally
  without creating crossings.  Note that after the transformation,
  there are fewer endpoints of exclusive edges that have degree~$3$ in
  the shared graph. We iteratively apply this transformation to
  obtain the instance \ourinstance{*}. 
\end{sketch}
}

\remove{
In the following, we can hence assume that all endpoints of exclusive
edges have degree~2 in the shared graph, which will be useful in
several places.  In particular, the vertices incident to exclusive edges
can only be vertices of S-node skeletons (and Q-node skeletons, which we
ignore), and even there this only happens if both incident virtual
edges represent Q-nodes.
}

We continue with a brief outline of the algorithm by Bläsius et
al.~\cite{bkr-seeor-15}.  First, the algorithm computes the SPQR-tree
$\mathcal T$ of the shared graph.  To avoid special cases, $\mathcal
T$ is augmented by adding S-nodes with only two virtual edges such
that each P-node and each R-node is adjacent only to S-nodes and
Q-nodes.
\remove{ ; in particular, we introduce such an S-node
  between any two adjacent R-nodes and any adjacent P- and R-nodes
  (recall that two P-nodes cannot be adjacent).  }
Then, necessary conditions on the embeddings of P-nodes and R-nodes
are fixed up to a flip following some necessary conditions.
Afterwards, by traversing all S-nodes, a global 2SAT formula is
produced whose satisfying assignments correspond to choices of the
flips that result in a SEFE.  We refine this approach and show that we
can choose the flips independently for each S-node, which
allows us to reduce each of them to a separate instance, whose shared
graph is a cycle.

We now describe the algorithm of Bläsius et al.~\cite{bkr-seeor-15} in
more detail.  Consider a node $\mu$ of $\mathcal T$.  A \emph{part} of
$\skel(\mu)$ is either a vertex of $\skel(\mu)$ or a virtual edge of
$\skel(\mu)$, which represents a subgraph of $G$.  An exclusive edge
$e$ has an \emph{attachment} in a part $x$ of $\skel(\mu)$ if $x$ is a
vertex that is an endpoint of $e$ or if $x$ is a virtual edge whose
corresponding subgraph contains an endpoint of $e$.  An exclusive edge
$e$ of $G_1$ or of $G_2$ is \emph{important} for $\mu$ if its
endpoints are in different parts of $\skel(\mu)$.  It is not hard to
see that, to obtain a SEFE, the embedding of the skeleton $\skel(\mu)$
of each node $\mu$ has to be chosen such that for each exclusive edge
$e$ the parts containing the attachments of $e$ share a face.  It can
be shown that any embedding choice for P-nodes and R-nodes that
satisfies these conditions can, after possibly flipping it, be used to
obtain a SEFE~\cite[Theorem~1]{adfpr-tsetgibgt-11}. The proof does not
modify the order of exclusive edges around degree-$2$ vertices of
$G_\cap$, and therefore applies to \ourproblem{2} as well.



Now let $\mu$ be an S-node.
\remove{For simplicity, we direct the edges of
$\skel(\mu)$ so that we can speak about a left and right side of
$\skel(\mu)$.}
Let $\eps$ be a virtual edge of $\skel(\mu)$, $G_\eps$ be the subgraph
represented by $\eps$, and $\nu$ be the corresponding neighbor of
$\mu$ in the SPQR-tree of $G$.  An \emph{attachment} of $\nu$ with
respect to $\mu$ is an interior vertex of $G_\eps$ that is incident to
an important edge $e$ for $\mu$.  If $\nu$ has such an attachment, then it is a P- or R-node.
\remove{Since the edge $e$ is important, it has to be embedded inside
  one of the faces of $\skel(\mu)$. Thus it enforces that, in the
  embedding of the subgraph $G_\eps$ represented by $\eps$, its
  attachment $x$ must be incident to a face incident to the virtual
  edge $\twin(\eps)$ of $\skel(\nu)$ representing $\mu$, i.e., it must
  be incident to the outer face of $G_\eps$ when it is rooted at
  $\twin(\eps)$.  Since $G_\eps$ has an internal vertex, $\nu$ cannot
  be a Q-node, and since no two S-nodes are adjacent in the SPQR-tree
  of $G$, $\nu$ is either a P-node or an R-node, and thus $G_\eps$ is
  biconnected.}  It is a necessary condition on the embedding of
$G_\eps$ that each attachment $x$ with respect to $\mu$ must be
incident to a face incident to the virtual edge $\twin(\eps)$ of
$\skel(\nu)$ representing $\mu$, and that their clockwise circular
order together with the poles of $\eps$ is fixed up to
reversal~\cite[Lemma 8]{bkr-seeor-15}.

For the purpose of avoiding crossings in $\skel(\mu)$, we can thus
replace each virtual edge $\eps$ that does not represent a Q-node by a
cycle $C_\eps$ containing the attachments of $\eps$ with respect to
$\mu$ and the poles of $\eps$ in the order $O_\eps$.  We keep only the
important edges of $\mu$.  Altogether this results in an instance
\ourinstance{\mu} of {\sc SEFE} modeling the requirements for
$\skel(\mu)$; see Figs.~\ref{fig:biconnected-reduction-a}
and~\ref{fig:biconnected-reduction-b}.

\begin{figure}[tb!]
  \centering
  \subfigure[]{\includegraphics[page=1]{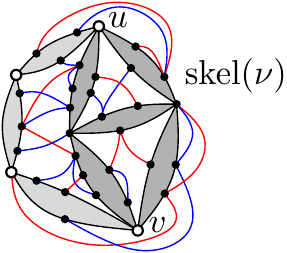}\label{fig:biconnected-reduction-a}}
  \hfil
  \subfigure[]{\includegraphics[page=2]{img/biconnected-reduction}\label{fig:biconnected-reduction-b}}
  \hfil
  \subfigure[]{\includegraphics[page=3]{img/biconnected-reduction}\label{fig:biconnected-reduction-c}}  
  \label{fig:biconnected-reduction}
  \caption{(a) Skeleton of an S-node $\mu$ in which the R-node $\nu$ corresponding to the virtual edge $\eps = (u,v)$ is expanded to show its skeleton. (b) Replacing $\eps$ with cycle $C_\eps$. (c) Replacing $C_\eps$ with path $P_\eps$; vertices $a_1,a_2,x_1,\dots,x_4,b_1,b_2$ are green boxes.}
\end{figure}

\begin{lemma}
  \label{le:s-nodes-independent}
  Let \ourinstance{} be an instance of \ourproblem{2} whose
    shared graph is biconnected.  Then
    \ourinstance{} admits an \ourdrawing if and only if all
    instances \ourinstance{\mu} admit an \ourdrawing.
\end{lemma}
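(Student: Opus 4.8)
The plan is to prove Lemma~\ref{le:s-nodes-independent} by establishing a two-way correspondence between \ourdrawings of the global instance \ourinstance{} and independent \ourdrawings of the per-S-node instances \ourinstance{\mu}. The forward direction is the routine one: given an \ourdrawing of \ourinstance{}, it induces a fixed embedding of $G_\cap$ and hence of every skeleton $\skel(\mu)$; restricting this embedding, together with the placement of the important edges of $\mu$, to the replacement instance \ourinstance{\mu} (where each non-Q virtual edge $\eps$ has been replaced by the cycle $C_\eps$ realizing its attachment order $O_\eps$) yields a valid \ourdrawing of \ourinstance{\mu}. The orthogonality constraints carry over because they are local at degree-$2$ and degree-$3$ vertices of the shared graph, and by Lemma~\ref{le:simple-attachments} we may assume every endpoint of an exclusive edge has degree~$2$ in $G_\cap$, so the constraints at these endpoints are faithfully preserved by the restriction to the relevant S-node.

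The substantial direction is the converse: assembling \ourdrawings of all the \ourinstance{\mu} into a single \ourdrawing of \ourinstance{}. First I would invoke the cited result~\cite[Theorem~1]{adfpr-tsetgibgt-11} (whose proof, as noted in the excerpt, does not disturb the cyclic order of exclusive edges around degree-$2$ vertices and hence transfers to \ourproblem{2}) to fix, up to a flip, the embeddings of all P-nodes and R-nodes satisfying the necessary attachment-sharing conditions. The key structural observation to exploit is that, after the augmentation of $\mathcal T$ making every P- and R-node adjacent only to S- and Q-nodes, the only remaining embedding freedom that couples distinct skeletons is the choice of flip at each S-node. I would then argue that solving each \ourinstance{\mu} separately determines exactly the internal ordering of important exclusive edges within $\skel(\mu)$ (and the required flips of the neighboring P/R-nodes, encoded via the orders $O_\eps$ on the cycles $C_\eps$), and that these choices can be made independently across S-nodes without conflict.

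The main obstacle, and the heart of the lemma, is establishing this independence rigorously: I must show that the flip choices forced by solving one \ourinstance{\mu} do not constrain — and cannot contradict — those forced by another \ourinstance{\mu'}. The crucial point is that a virtual edge $\eps$ of $\skel(\mu)$ is replaced by the \emph{rigid} structure $C_\eps$ that already fixes the order $O_\eps$ up to reversal~\cite[Lemma~8]{bkr-seeor-15}; thus the interface that $\mu$ presents to its neighbor $\nu$ is a single reversible object, and the same holds from $\nu$'s side. I would therefore decouple the global 2SAT formula of Bläsius et al.~\cite{bkr-seeor-15} into per-S-node subformulas by showing that each inter-skeleton flip variable appears, after the $C_\eps$-replacement, only as a free reversal local to one S-node instance. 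Concretely, once each \ourinstance{\mu} is realized, I would root $\mathcal T$ arbitrarily and process it top-down, at each node aligning the already-chosen orientation of the parent interface with the orientation dictated by the child's solution; because each interface is a mere reversal, a consistent global flip assignment can always be selected. Stitching the local drawings of the $G_\eps$ back in place of the cycles $C_\eps$ then yields the desired \ourdrawing of \ourinstance{}, with the orthogonality constraints at the degree-$2$ exclusive-edge endpoints guaranteed by their local satisfaction in the respective \ourinstance{\mu}.
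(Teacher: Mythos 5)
Your forward direction and your flip-reconciliation mechanism coincide with the paper's proof: the paper also exploits the tree structure, rooting the augmented SPQR-tree at an arbitrary node $\mu_0$, labelling each tree edge $\mu\nu$ with $\pm 1$ according to whether the drawing of \ourinstance{\mu} uses the reference flip of the cycle $C_\eps$ (with respect to the order $O_\eps$) or its reversal, and fixing the flip of every skeleton by the product of the labels along its root path. Your top-down alignment is the same argument in different clothing, and it is sound for the same reason: the constraints are pairwise relative-flip constraints along tree edges, so no cyclic conflict can arise.

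The genuine gap is that you never verify that the embedding assembled in this way satisfies the \emph{planarity} constraints, i.e., that no two exclusive edges of the same graph cross globally; you assert it under the word ``stitching.'' This is the substantive part of the paper's proof: a priori, two exclusive edges whose endpoints lie in parts of different skeletons could cross even though every local instance is crossing-free and all interface flips are consistent. The paper closes this by a case analysis on a hypothetical crossing pair $uv$, $u'v'$ of the same graph $G_i$: there must be a node $\mu$ of the augmented tree in whose skeleton both edges have their endpoints in different parts; if $\mu$ is a P- or R-node and the four parts are distinct, the parts alternate around a face of $\skel(\mu)$, contradicting the planarity of the input graph $G_i$ itself; otherwise two attachments lie in a common virtual edge $\eps$, and then the endpoints of the two edges already alternate in the S-node instance \ourinstance{\nu} corresponding to $\eps$, contradicting that its given drawing is an \ourdrawing. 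Your proposed substitute~--~decoupling the global 2SAT formula of Bl\"asius et al.~\cite{bkr-seeor-15} into per-S-node subformulas because ``each inter-skeleton flip variable appears only as a free reversal local to one S-node instance''~--~is not correct as stated: the flip variable of a P- or R-node $\nu$ adjacent to several S-nodes appears in the constraints generated by \emph{all} of those S-nodes, so the formula does not split over disjoint variable sets. What actually makes the instances independent is the tree structure (which you do use) \emph{together with} the crossing analysis above (which you omit), so your proof is incomplete without it.
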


\remove{
\begin{sketch}
  It is not hard to see that each \ourinstance{\mu} can be obtained
  from \ourinstance{} by removing some vertices and edges and
  suppressing subdivision vertices.  Thus, if \ourinstance{} admits
  an \ourdrawing, so does each \ourinstance{\mu}.

  Conversely, assume that each \ourinstance{\mu} admits an \ourdrawing.  Recall that we have fixed a reference embedding
  for each skeleton of the SPQR-tree of the shared graph $G_\cap$ up to
  a flip.  
  We fix the flips of all reference embeddings
  as follows.  For each S-node $\mu$ and each neighbor $\nu$,
  represented by a virtual edge $\eps$ in $\skel(\mu)$, we consider
  the flips of the cycle $C_\eps$ in the \ourdrawing of
  \ourinstance{\mu} with respect to the ordering $O_\eps$ of the attachments of the subgraph
  represented by $\eps$.  If the reference embedding is used, we label the
  edge $\mu\nu$ with label~$1$, otherwise we label it $-1$.  Finally,
  we choose an arbitrary root $\mu_0$ of the augmented SPQR-tree for
  which we fix the reference embedding.  For each skeleton
  $\skel(\mu)$, $\mu \ne \mu_0$, we choose the reference embedding if
  and only if the product of the labels on the (unique) path from
  $\mu_0$ to $\mu$ is $1$, and its flip otherwise.  We denote the
  planar embedding of $G_\cap$ obtained in this way by $\mathcal E$.  
  
  It remains to determine the embeddings of $G_1$ and $G_2$.  After
  suitably flipping the given \ourdrawings, we can assume that
  their embeddings can be obtained from $\mathcal E$ by removing
  vertices and edges, and by contracting edges.  We now determine the
  embeddings of $G_1$ and $G_2$ as follows.  Recall that every vertex
  that is incident to exclusive edges has degree~2 in the shared
  graph.  For each vertex $v$ that is incident to exclusive edges of
  $G_1$ (of $G_2$), we consider the unique S-node $\mu$ whose skeleton
  contains $v$, and we choose the edge ordering as in the given \ourdrawing of \ourinstance{\mu}.  We claim that this results in an
  \ourdrawing $\langle \mathcal E_1, \mathcal E_2 \rangle$ of
  \ourinstance{}. Refer to Figs.~\ref{fig:biconnected-reduction-a} and~\ref{fig:biconnected-reduction-b}.
\end{sketch}
}

Next, we transform a given instance \ourinstance{\mu} of
\ourproblem{2} as above into an equivalent instance $\langle
\overline{G_1^{\mu}}, \overline{G_2^{\mu}}\rangle$ whose shared graph
is a cycle.  Let $C_{\eps_i}$ be the cycles
corresponding to the neighbor $\nu_i$, $i=1,\dots,k$ of $\mu$ in
\ourinstance{\mu}.  To obtain the instance $\langle
\overline{G_1^{\mu}}, \overline{G_2^{\mu}}\rangle$, we replace each
cycle $C_{\eps_i}$ with poles $u$ and $v$ by a path $P_{\eps_i}$ from
$u$ to~$v$ that first contains two special vertices $a_1,a_2$ followed
by the clockwise path from $u$ to $v$ (excluding the endpoints), then
four special vertices $x_1,\dots,x_4$, then the counterclockwise path
from $u$ to $v$ (excluding the endpoints), and finally two special
vertices $b_1,b_2$ followed by $v$.  In addition to the existing
exclusive edges (note that we do not remove any vertices), we add to
$G_1$ the exclusive edges $(a_2,x_3)$, $(x_1,x_3)$, $(x_2,x_4)$,
$(x_2,b_1)$, and to $G_2$ the exclusive edges $(a_1,x_3)$ and
$(x_2,b_2)$ to $G_2$; see Fig.~\ref{fig:biconnected-reduction-c}.

The above reduction together with the next lemma implies the main result.

\begin{lemma}
  \label{le:cycle-to-path-gadget}
  \ourinstance{\mu} admits an \ourdrawing if and only if
  $\langle \overline{G_1^{\mu}}, \overline{G_2^{\mu}}\rangle$ does.
\end{lemma}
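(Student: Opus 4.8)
The plan is to reduce both sides of the equivalence to the combinatorial characterizations already available, analyze the inserted gadget in isolation, and then match its two admissible states with the two flips of each cycle $C_{\eps_i}$. First I would observe that, by construction, the shared graph $\overline{G_\cap^{\mu}}$ of $\langle \overline{G_1^{\mu}}, \overline{G_2^{\mu}}\rangle$ is a single cycle $C$: concatenating the paths $P_{\eps_i}$ around the skeleton cycle of the S-node $\mu$ closes up into one cycle, since consecutive cycles $C_{\eps_i}$ share exactly a pole. Hence Theorem~\ref{th:characterization} applies, and an \ourdrawing of $\langle \overline{G_1^{\mu}}, \overline{G_2^{\mu}}\rangle$ is equivalent to an assignment of its exclusive edges to the two sides of $C$ satisfying the planarity and orthogonality constraints. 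For the instance \ourinstance{\mu}, whose shared graph is the chain of cycles $C_{\eps_1},\dots,C_{\eps_k}$, I would instead use Theorem~\ref{th:general-characterization}: an \ourdrawing corresponds to a SEFE together with the orthogonality constraints, which here amounts to a choice of a flip for each $C_{\eps_i}$ (the only embedding freedom, by \cite[Lemma~8]{bkr-seeor-15}) together with an assignment of each important edge to one of the two faces of $\skel(\mu)$ incident to the part containing its attachment.

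Next I would analyze the gadget added inside each $P_{\eps_i}$. Reading off the cyclic order $a_1,a_2,x_1,x_2,x_3,x_4,b_1,b_2$ along $C$, the edges $(x_1,x_3)$ and $(x_2,x_4)$ of $G_1$ alternate, and so do $(a_2,x_3)$ and $(x_2,x_4)$, as well as $(x_1,x_3)$ and $(x_2,b_1)$; by the planarity constraints these force $\{(a_2,x_3),(x_1,x_3)\}$ to one side of $C$ and $\{(x_2,x_4),(x_2,b_1)\}$ to the other. The two $G_1$-edges incident to $x_3$ are then on the same side, so the orthogonality constraint at $x_3$ places the $G_2$-edge $(a_1,x_3)$ on that same side; symmetrically, the orthogonality constraint at $x_2$ places the $G_2$-edge $(x_2,b_2)$ on the side of $(x_2,x_4),(x_2,b_1)$. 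Thus the gadget is rigid up to a single global swap: it admits exactly two states, which I would put in correspondence with the two flips of $C_{\eps_i}$. I would further argue that, because $x_1$ and $x_4$ are the neighbours on $C$ of the last clockwise-arc vertex and the first counterclockwise-arc vertex respectively, these two states exactly control on which side of $C$ the important edges attaching to the clockwise arc and to the counterclockwise arc may be drawn, mirroring the inside/outside of $C_{\eps_i}$.

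With this in hand I would prove the two implications. For the forward direction, given an \ourdrawing of \ourinstance{\mu}, I read off the flip of each $C_{\eps_i}$ and the face of each important edge, assign every exclusive edge of $\langle \overline{G_1^{\mu}}, \overline{G_2^{\mu}}\rangle$ to the corresponding side of $C$, and set each gadget to the state matching the flip; a routine check then shows that no planarity constraint is violated and that the orthogonality constraints hold, including at the special vertices by the computation above. For the converse, given a side assignment for $\langle \overline{G_1^{\mu}}, \overline{G_2^{\mu}}\rangle$ satisfying all constraints, I read the flip of each $C_{\eps_i}$ from its gadget state and the face of each important edge from its assigned side, and I verify that this flip-and-face choice is a SEFE satisfying the orthogonality constraints, so that Theorem~\ref{th:general-characterization} yields an \ourdrawing of \ourinstance{\mu}.

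The main obstacle I anticipate is the verification, in both directions, that cutting the cycle $C_{\eps_i}$ open into the linear path $P_{\eps_i}$ does not change the set of planarity constraints among the important edges: in the cycle, two important edges may \emph{alternate} in a way that the naive linear order along $C$ would not reflect, and it is precisely the four-vertex gadget $x_1,\dots,x_4$, separating the clockwise-arc attachments from the counterclockwise-arc attachments, that must restore the correct alternations and couple them to the chosen flip. A careful case analysis of an important edge according to whether its attachment lies on the clockwise arc, on the counterclockwise arc, or at a pole---together with checking that the flip choices for distinct cycles remain independent at the shared poles, so that they can be combined as in Lemma~\ref{le:s-nodes-independent}---is where the bulk of the work lies.
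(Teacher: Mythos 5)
Your proposal follows essentially the same route as the paper's proof: the same planarity-forced partition of the $G_1$ gadget edges into $\{(a_2,x_3),(x_1,x_3)\}$ versus $\{(x_2,x_4),(x_2,b_1)\}$, the same use of the orthogonality constraints at $x_3$ and $x_2$ to couple the $G_2$ edges $(a_1,x_3)$ and $(x_2,b_2)$ to these pairs, and the same conclusion that the gadget's two rigid states force all important edges attached to the clockwise arc to one side and those attached to the counterclockwise arc to the other (via alternation with the spanning edges), exactly mirroring the flip of $C_{\eps_i}$. The paper's proof is precisely this argument and, like yours, leaves the final translation between side assignments of the path instance and flips of the cycle instance implicit, so your plan is correct and matches it.
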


\begin{theorem}\label{th:decision-biconnected}
  
  \ourproblem{2} when
  the shared graph is biconnected is polynomial-time Turing reducible to \ourproblem{2} when the shared
  graph is a cycle. Also, the reduction does not increase the maximum degree of the union graph.
\end{theorem}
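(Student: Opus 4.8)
The plan is to assemble the three structural lemmas developed above into a single conjunctive (hence Turing) reduction, and then to verify the degree bound by inspecting the gadgets one at a time. Given an instance \ourinstance{} of \ourproblem{2} whose shared graph is biconnected, I would first invoke Lemma~\ref{le:simple-attachments} to replace it, in polynomial time, by an equivalent instance \ourinstance{*} in which every endpoint of an exclusive edge has degree~$2$ in the shared graph; this is the precondition under which the SPQR-machinery behaves well. Next I compute the augmented SPQR-tree $\mathcal T$ of the shared graph and, for each S-node $\mu$ of $\mathcal T$, build the auxiliary instance \ourinstance{\mu} modeling the embedding requirements of $\skel(\mu)$, replacing each non-Q virtual edge $\eps$ by the cycle $C_\eps$ carrying its attachments and poles in the order $O_\eps$. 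Finally, for each such $\mu$ I apply the cycle-to-path gadget to turn \ourinstance{\mu} into the instance $\langle \overline{G_1^{\mu}}, \overline{G_2^{\mu}}\rangle$ whose shared graph is a single cycle. The output of the reduction is the family of all these cycle-instances, one per S-node.

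Correctness then follows by chaining the equivalences: \ourinstance{} admits an \ourdrawing iff \ourinstance{*} does (Lemma~\ref{le:simple-attachments}), iff every \ourinstance{\mu} does (Lemma~\ref{le:s-nodes-independent}), iff every $\langle \overline{G_1^{\mu}}, \overline{G_2^{\mu}}\rangle$ does (Lemma~\ref{le:cycle-to-path-gadget}). Hence \ourinstance{} is a positive instance if and only if all the produced cycle-instances are positive, so the reduction queries the \ourproblem{2}-oracle for cyclic shared graphs once per produced instance and returns the conjunction of the answers, which is the claimed polynomial-time Turing reduction. For polynomiality I would note that $\mathcal T$ has $O(n)$ nodes, hence $O(n)$ S-nodes, that the skeleton sizes satisfy $\sum_\mu |\skel(\mu)| = O(n)$, that each \ourinstance{\mu} and each gadget has size polynomial in $|\skel(\mu)|$, and that all the constructions above run in polynomial time; thus the reduction produces $O(n)$ instances in total polynomial time.

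It remains to argue that the reduction does not increase the maximum degree of the union graph, and this is the step needing the most care. I would check it transformation by transformation. In Lemma~\ref{le:simple-attachments} the only modifications are subdividing a shared edge by $w_1,w_2,w_3$, adding the shared triangle edge $w_1w_3$, and reattaching the at most two exclusive edges formerly at $u$ to $w_2$: the endpoints $u,x$ gain no incidences and the new vertices $w_1,w_2,w_3$ acquire union-degree at most~$4$. Replacing a virtual edge by the cycle $C_\eps$ keeps every pole and attachment at its original union-degree, since poles retain their two skeleton sides and attachments retain exactly their important exclusive edges, while the internal cycle vertices have degree~$2$. The delicate case is the cycle-to-path gadget: the only new exclusive edges, namely $(a_2,x_3),(x_1,x_3),(x_2,x_4),(x_2,b_1)$ in $G_1$ and $(a_1,x_3),(x_2,b_2)$ in $G_2$, are incident only to the freshly introduced vertices $a_1,a_2,x_1,\dots,x_4,b_1,b_2$, so again no original vertex changes its degree. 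A direct count shows each of these special vertices has union-degree at most~$5$, attained at $x_2$ and $x_3$ each with two shared path-edges and three exclusive edges, and degree at most~$4$ in each of $G_1$ and $G_2$ separately. Consequently every produced vertex has union-degree at most the larger of~$5$ and its union-degree in \ourinstance{}; in particular, starting from an instance whose union graph has maximum degree~$5$, the reduction yields cycle-instances whose union graph still has maximum degree~$5$, exactly the form required to feed them into Theorem~\ref{th:algorithm-one-degree-three}. The hard part will be precisely this incidence bookkeeping through the three gadgets, above all confirming that the cycle-to-path gadget never pushes a vertex past union-degree~$5$ nor past degree~$4$ in either single graph.
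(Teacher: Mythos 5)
Your proposal matches the paper's own proof in approach and structure: the paper obtains Theorem~\ref{th:decision-biconnected} precisely by chaining Lemmas~\ref{le:simple-attachments}, \ref{le:s-nodes-independent}, and~\ref{le:cycle-to-path-gadget} into a conjunctive polynomial-time Turing reduction and then asserting, without detailed verification, that none of the constructions increases the maximum degree of the union graph. Your explicit degree bookkeeping is a sound elaboration of that asserted step --- in particular your observation that $x_2$ and $x_3$ in the cycle-to-path gadget attain union-degree $5$ (so the precise guarantee is ``at most $\max(5,\,\text{original maximum degree})$'', which is exactly what the subsequent corollary for degree-$5$ union graphs requires) is more careful than the paper's own one-line claim.
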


\remove{
\begin{proof}
  The reduction follows immediately from
  Lemmas~\ref{le:simple-attachments}, \ref{le:s-nodes-independent},
  and~\ref{le:cycle-to-path-gadget}.  Each of the steps of the
  construction can clearly be carried out in polynomial time. Observe that, none of the reductions increases the maximum degree of the union graph.
\end{proof}
}


\begin{corollary}
\ourproblem{2} can be solved in polynomial time for instances 
whose shared graph is biconnected and whose union graph has maximum degree~$5$.
\end{corollary}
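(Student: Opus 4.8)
The plan is to obtain this result as a direct composition of Theorem~\ref{th:decision-biconnected} and Theorem~\ref{th:algorithm-one-degree-three}. Given an instance \ourinstance{} of \ourproblem{2} whose shared graph $G_\cap$ is biconnected and whose union graph $G_\cup$ has maximum degree~$5$, I would first invoke the polynomial-time Turing reduction of Theorem~\ref{th:decision-biconnected}. This produces a set of $O(n)$ instances \ourinstance{1},\dots,\ourinstance{h} of \ourproblem{2}, each of whose shared graph is a cycle, with the property that \ourinstance{} is a positive instance if and only if every \ourinstance{i}, for $i=1,\dots,h$, is positive.

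The crucial point to check is that each produced instance remains amenable to the algorithm of Theorem~\ref{th:algorithm-one-degree-three}, which requires the union graph to have maximum degree~$5$. This is exactly guaranteed by the second part of the statement of Theorem~\ref{th:decision-biconnected}, asserting that the reduction does not increase the maximum degree of the union graph. Hence each instance \ourinstance{i} has a cycle as shared graph and a union graph of maximum degree at most~$5$, so Theorem~\ref{th:algorithm-one-degree-three} applies verbatim to it.

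I would then run the polynomial-time algorithm of Theorem~\ref{th:algorithm-one-degree-three} on each of the $O(n)$ instances \ourinstance{i} independently, answering \emph{yes} if and only if all of them are positive. Since the reduction runs in polynomial time and outputs a linear number of instances, each of which is solved in polynomial time, the overall running time is polynomial, and correctness follows immediately from the equivalence in Theorem~\ref{th:decision-biconnected}. The only subtlety requiring attention—and essentially the reason the two constituent theorems were phrased as they are—is the preservation of the degree bound through the reduction; once that observation is in hand, there is no further obstacle beyond this bookkeeping.
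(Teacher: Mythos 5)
Your proposal is correct and is exactly the argument the paper intends: the corollary is stated immediately after Theorem~\ref{th:decision-biconnected} precisely because it follows by composing that Turing reduction (whose degree-preservation clause you rightly identify as the key bookkeeping point) with the algorithm of Theorem~\ref{th:algorithm-one-degree-three}.
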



Observe that, from the previous results it is not hard to also obtain a SEFE satisfying the orthogonality constraints, if it exists. We show how to construct an orthogonal geometric realizations of such a SEFE.

\begin{theorem}\label{th:biconnected geometric}
  Let \ourinstancek{} be a positive instance of \ourproblem{k} whose shared graph is biconnected. Then,
  there exists an \ourdrawing \oursolutionk{} of \ourinstancek{} in
  which every edge has at most three bends.
\end{theorem}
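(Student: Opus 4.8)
The plan is to start from the combinatorial solution and realize it geometrically, charging at most two bends to each shared edge and at most three to each exclusive edge. Since \ourinstancek{} is a positive instance, Theorem~\ref{th:general-characterization} yields a SEFE $\mathcal E_1,\dots,\mathcal E_k$ satisfying the orthogonality constraints. This fixes a planar embedding $\mathcal E_\cap$ of the biconnected shared graph $G_\cap$, and since every $\mathcal E_i$ restricts to $\mathcal E_\cap$, each exclusive edge is a chord of a single face of $G_\cap$, i.e., it joins two vertices on the boundary of one face $f$ and is drawn inside $f$; moreover, within each $G_i$ the chords lying in a fixed face are pairwise non-crossing, while chords of distinct graphs may cross. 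The orthogonality constraints also determine, at every vertex $v$, how the ports not used by the shared edges are allotted to exclusive edges: if $\deg_{G_\cap}(v)=4$ there are no exclusive edges at $v$; if $\deg_{G_\cap}(v)=3$ all exclusive edges incident to $v$ (across all graphs) leave $v$ through the unique free port, hence into a single face; and if $\deg_{G_\cap}(v)=2$ they are distributed onto the two free ports according to the straight/bent decision recorded by the constraints. In particular, at any vertex the exclusive edges use at most two distinct directions, and edges of distinct graphs leaving $v$ in the same direction emanate in that direction and may share an initial segment.

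First I would draw the shared graph. Using Lemma~\ref{le:simple-attachments} I may assume that every endpoint of an exclusive edge has degree~$2$ in $G_\cap$, which makes the port bookkeeping uniform: each such vertex has exactly two free ports. I then compute an orthogonal drawing $\Gamma_\cap$ of $G_\cap$ realizing $\mathcal E_\cap$ with at most two bends on every edge, as guaranteed for plane graphs of maximum degree four~\cite{t-eggmdb-87} (the only exception, the octahedron, is $4$-regular and thus carries no exclusive edges, so the statement is immediate there). I select the orthogonal representation so that the vertex angles realize the recorded straight/bent decisions and so that the ports left free at each degree-$2$ and degree-$3$ vertex point into exactly the faces prescribed above; feasibility is precisely what the orthogonality constraints guarantee. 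After uniformly refining the grid I reserve, next to each free port, a small empty corridor large enough to hold a bundle of up to $k$ overlapping exclusive stubs.

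Next I would route the exclusive edges face by face. Fix a face $f$ of $G_\cap$; since $G_\cap$ is biconnected, its boundary is a simple orthogonal polygon. For each graph $G_i$ separately, the chords of $G_i$ lying in $f$ form a non-crossing (laminar) family, which I draw as nested bracket-shaped paths: each chord leaves its reserved port along the fixed stub segment, turns once toward a private track crossing $f$, and turns at most twice more to enter the fixed far port, spending at most three bends in total, where the third bend absorbs both the prescribed direction of the far port and the offset needed to nest the chord among the others of $G_i$. The families belonging to distinct graphs are placed in independent layers of the refined grid and may cross one another freely, so they do not interfere.

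The main obstacle is to carry out the last two steps jointly: I must produce a fixed-embedding orthogonal drawing of $G_\cap$ whose free ports land in the correct faces and whose faces are ``convex enough'' that every nested family of chords can be routed with only three bends, despite the reflex angles of $f$ and the rigidly prescribed port directions at both endpoints. I expect this bend accounting---in particular, verifying that the third bend always suffices around reflex corners and for deeply nested chords---to be the crux, everything else being routine. Once it is in place, restricting the resulting assignment to each $G_i$ gives a planar orthogonal drawing (no two edges of the same graph cross, by planarity of $\Gamma_\cap$ together with the non-crossing routing), the shared edges coincide across all graphs by construction, and every edge has at most three bends, yielding the claimed \ourdrawing \oursolutionk{}.
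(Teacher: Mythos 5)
Your proposal leaves its central step unproven, and that step is not a routine verification---it is where the whole difficulty of the theorem lives, and in the form you state it, it is false. If you first fix an arbitrary orthogonal drawing $\Gamma_\cap$ of $G_\cap$ with at most two bends per edge (even one realizing the prescribed vertex angles and free ports), the faces of $\Gamma_\cap$ can be orthogonal polygons with many reflex corners: every bent degree-$2$ vertex forced by the orthogonality constraints contributes a $270^\circ$ corner to one of its incident faces, and bends on shared edges contribute further reflex corners. Such a face can have staircase or spiral geometry, and a chord whose endpoints lie deep in different ``arms'' of the face, with rigidly prescribed perpendicular ports at both ends, may require a number of bends proportional to the number of reflex corners between its endpoints---no constant, let alone three, suffices. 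So the claim that ``the third bend always suffices'' cannot be established for an arbitrary two-bend drawing of $G_\cap$; one would have to prove that some drawing of $G_\cap$ exists whose faces are all ``convex enough,'' and that existence statement is essentially as hard as the theorem itself. This is precisely why the paper does not decompose the construction into ``draw $G_\cap$, then insert chords'': it draws the \emph{union} graph in one pass, adapting the method of Biedl and Kant, using an $s$-$t$-ordering $v_1,\dots,v_n$ of the shared graph, placing vertices at increasing $y$-coordinates, and routing every edge vertically except for bends confined to the neighborhoods of its endpoints. The geometry of the faces of $G_\cap$ is then never an obstacle, because shared and exclusive edges are laid out simultaneously, and the bend count falls out of local port bookkeeping: at most one bend near the lower-indexed endpoint and at most two near the higher-indexed endpoint of each edge, with exactly three on $\{v_1,v_n\}$.

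A secondary problem: you invoke Lemma~\ref{le:simple-attachments} to normalize attachments, but that lemma only asserts \emph{equivalence of instances} with respect to the existence of an \ourdrawing. Converting a three-bend drawing of the modified instance back to the original instance requires contracting the subdivision vertices $w_1,w_2,w_3$ onto $u$ and rerouting the exclusive edges from $w_2$ to $u$, and this contraction can introduce additional bends on the affected edges. In a bend-counting argument you cannot pass to an equivalent instance without also controlling how the drawing is pulled back; the paper's proof avoids this entirely by working directly with the given SEFE of the original instance (it only assumes the cyclic edge orders around vertices and the port assignment property, which the orthogonality constraints supply).
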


\begin{sketch}
  We assume that a SEFE satisfying the orthogonality
  constraints is given.
  %
  %
  \begin{figure}[tb!]
\centering
\hfil
  \subfigure[\label{FIG:geo_v1}around $v_1$]{\includegraphics[scale=0.9,page=1]{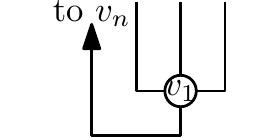}}
  \hfil
  \subfigure[\label{FIG:geo_vi}around $v_2,\dots,v_{n-1}$]{\includegraphics[scale=0.9,page=4]{img/geo.pdf}}
  \hfil
  \subfigure[\label{FIG:geo_vn}around $v_n$]{\includegraphics[scale=0.9,page=5]{img/geo.pdf}}

\caption{\label{FIG:geo}Constructing a drawing with at most three bends per edge}
\end{figure}
  We adopt the method of Biedl and Kant~\cite{biedl/kant:98}. 
  We draw the vertices with increasing y-coordinates with respect to
  an $s$-$t$-ordering~\cite{brandes:02} $v_1,\dots,v_n$ on
  the shared graph.
  We choose the face to
  the left of $(v_1,v_n)$ as the outer face of the union graph.
  %
  The edges will bend at most on y-coordinates near their incident
  vertices and are drawn vertically otherwise. Fig.~\ref{FIG:geo}
  indicates, how the ports are assigned. We
  make sure that an edge may only leave a vertex to the bottom if it is
  incident to $v_n$ or to a neighbor with a lower index.
  Thus, there are exactly three
  bends on $\{v_1,v_n\}$. Any other
  edge $\{v_i,v_j\}$, $1 \leq i < j \leq n$ has at most one bend around $v_i$ 
  and at most two bends around $v_j$.
\end{sketch}
\remove{
\begin{proof}
  We assume that a cyclic order of the edges of the union graph around
  each vertex is given such that (a) it induces a planar embedding on
  each $G_i$, $i=1,\dots,k$, and (b) we can assign the incident
  edges around a vertex to at most four ports such that at most one edge
  of each $G_i$ is assigned to the same port.

  We adopt the method of Biedl and Kant~\cite{biedl/kant:98}. First,
  we compute in linear time~\cite{brandes:02} an $s$-$t$-ordering on
  the shared graph, i.e., we label the vertices $v_1,\dots,v_n$ such
  that $\{v_1,v_n\}$ is an edge of the shared graph and, for each
  $i=2,\dots,n-1$, there are $j < i < k$ such that $\{v_j,v_i\}$ and
  $\{v_i,v_k\}$ are edges of the shared graph. We choose the face to
  the left of $(v_1,v_n)$ as the outer face of the union graph.

  We now draw the union graph by adding the vertices in the order in
  which they appear in the $s$-$t$-ordering while respecting the given 
  order of the edges around each vertex. The edges will bend
  at most on y-coordinates near their incident vertices and are drawn
  vertically otherwise. We draw the edges around~$v_1$ as indicated in
  Fig.~\ref{FIG:geo_v1} where some of the incident edges might
  actually indicate several exclusive edges~--~at most one from each
  graph.

  \begin{figure}[tb!]
\centering
\hfil
  \subfigure[\label{FIG:geo_v1}around $v_1$]{\includegraphics[page=1]{img/geo.pdf}}
  \hfil
  \subfigure[\label{FIG:geo_vi}around $v_2,\dots,v_{n-1}$]{\includegraphics[page=4]{img/geo.pdf}}
  \hfil
  \subfigure[\label{FIG:geo_vn}around $v_n$]{\includegraphics[page=5]{img/geo.pdf}}

\caption{\label{FIG:geo}Constructing a drawing with at most three bends per edge}
\end{figure}

  For $i=2,\dots,n-1$, an edge may only leave $v_i$ to the bottom
  if it is incident to a neighbor with a lower index. 
  Again, some of the ports might host several exclusive edges, even one
  to a vertex with a lower index and one to a vertex with a higher
  index. Special cases occur when the ordering around $v_i$ is such that
  four exclusive edges of two distinct graphs must be assigned to two
  consecutive ports.  In particular, an edge leaving $v_i$ to a vertex with a
  lower index might bend twice around $v_i$ (see, e.g., the two small circles
  in Fig.~\ref{FIG:geo_vi}).

  Finally, the edges around $v_n$ are placed such that the edge
  $\{v_1,v_n\}$ enters it from the left. Thus, there are exactly three
  bends on $\{v_1,v_n\}$; see Fig.~\ref{FIG:geo_vn}. For any other
  edge, there is at most one bend around the endvertex with lower
  index and at most two bends around the endvertex with higher index.
\end{proof}
}

\section{Conclusions and Future Work}\label{se:conclusions}

In this work we introduced and studied the problem \ourproblem{k} of realizing a {\sc SEFE} in the orthogonal drawing style. 
While the problem is already NP-hard even for instances that can be efficiently tested for a {\sc SEFE}, we presented a polynomial-time testing algorithm for instances consisting of two graphs whose shared graph is biconnected and whose union graph has maximum degree~$5$.
We have also shown that any positive instance whose shared graph is biconnected can be realized with at most three bends per edge.

We conclude the paper by presenting a lemma that, together with Theorem~\ref{th:decision-biconnected}, shows that it suffices to only focus on
a restricted family of instances to solve the problem for all instances whose
shared graph is biconnected.

\begin{lemma}
\label{le:reduction-general-2-degree-3,5}
Let \ourinstance{} be an instance of \ourproblem{2} whose shared graph $G_\cap$ is a cycle.
An equivalent instance \ourinstance{*} of \ourproblem{2} such that 
(i) the shared graph $G_\cap^*$ is a cycle,
(ii) graph $G_1^*$ is outerplanar, and
(iii) no two degree-$4$ vertices in $G_1^*$ are adjacent, can be constructed in polynomial time.
\end{lemma}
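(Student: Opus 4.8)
The plan is to reach the restricted form by two successive phases, each realized through local surgeries that only subdivide edges of the cycle (adding degree-$2$ vertices to it) and add, delete, or relocate exclusive chords, so that the shared graph stays a single cycle throughout. Equivalence of each surgery is checked directly against the side-assignment formulation of Theorem~\ref{th:characterization}, i.e., by exhibiting a correspondence between the assignments of exclusive edges to the two sides of the cycle that satisfy the planarity and orthogonality constraints before and after the surgery. In the first phase I make $G_1$ outerplanar, and in the second I make the degree-$4$ vertices of $G_1$ an independent set; I order the phases so that the second never destroys outerplanarity.

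For the first phase I would drive the potential $\Phi(G_1)$, defined as the number of pairs of exclusive edges of $G_1$ that alternate along the cycle, down to $0$, following the surgery of Lemma~\ref{le:reduction-degree-three-2-degree-three-outerplanar} but now allowing $G_1$ to have degree-$4$ vertices. As there, I pick a tightest alternating pair $e=(u,v),f=(w,z)$, with $u,w,v,z$ in this cyclic order and with the path of the cycle through $v,w$ of minimum length; by this minimality no exclusive edge of $G_1$ has exactly one endpoint strictly between $w$ and $v$, so the enclosed arc can be reflected and reconnected together with the two helper chords that encode the relation ``$e$ and $f$ lie on opposite sides'' (a relation forced anyway by the planarity constraint of the alternating pair). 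The only new point relative to Lemma~\ref{le:reduction-degree-three-2-degree-three-outerplanar} is the bookkeeping at the degree-$4$ endpoints $v$ and $w$, where a second exclusive $G_1$-chord may be present; I would verify that the reroute keeps every degree at most $4$ and strictly decreases $\Phi(G_1)$, so that iterating yields an equivalent instance in which $G_1$ is outerplanar.

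For the second phase, suppose $G_1$ is outerplanar and let $xy$ be an edge of $G_1$ joining two degree-$4$ vertices. If $xy$ is a cycle edge, I subdivide it by a fresh degree-$2$ vertex $t$: since $t$ carries no exclusive edge, it contributes no planarity or orthogonality constraint, the cyclic order of all other vertices along the cycle is unchanged, and hence the side-assignment problem of Theorem~\ref{th:characterization} is literally the same; moreover $t$ has degree $2$, so no new degree-$4$ adjacency is created while $x$ and $y$ are separated. If instead $xy$ is an exclusive $G_1$-chord, I split $x$ into two cycle-consecutive vertices and relocate one of its two chords onto the new copy, inserting a small chord gadget (built only from cycle subdivisions and chords, so as to keep the shared graph a cycle) that reproduces at the pair the orthogonality coupling formerly present at $x$, thereby lowering the $G_1$-degree of $x$ to $3$. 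I expect this exclusive-chord case to be the main obstacle: because the shared graph must remain a single cycle, I can neither subdivide a chord off the cycle nor add the triangle used in Lemma~\ref{le:simple-attachments}, so the coupling between the two chords at a degree-$4$ vertex has to be re-encoded purely by chords, and one must certify that the gadget is realizable for every consistent side assignment and introduces neither a new alternating pair nor a degree above $4$, so that equivalence via Theorem~\ref{th:characterization} holds in both directions. Once this surgery is available, I apply it together with the cycle-edge subdivision until no two degree-$4$ vertices of $G_1$ remain adjacent; each step strictly decreases the number of such bad adjacencies (or the number of degree-$4$ vertices) while preserving outerplanarity, which yields termination in polynomial time and the claimed equivalent instance \ourinstance{*}.
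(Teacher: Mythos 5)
Your plan runs the two phases in the opposite order from the one that actually works, and the step you treat as routine is precisely the one that fails. In your first phase you claim that choosing a tightest alternating pair $e=(u,v)$, $f=(w,z)$ already guarantees that no exclusive edge of $G_1$ has exactly one endpoint strictly between $w$ and $v$. That is false once degree-$4$ vertices of $G_1$ are unrestricted: minimality only excludes chords from $H_2$ to $H_1\cup H_3$, and planarity of $G_1$ excludes chords from $H_2$ to vertices outside $P_{u,z}$; chords from $H_2$ to one of the four endpoints $u,v,w,z$ are excluded in Lemma~\ref{le:reduction-degree-three-2-degree-three-outerplanar} \emph{only} by the maximum-degree-$3$ hypothesis. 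Concretely, take cyclic order $u,w,x,v,z$ with exclusive $G_1$-edges $e=(u,v)$, $f=(w,z)$, $g=(x,v)$: the unique (hence tightest) alternating pair is $(e,f)$, yet $g$ leaves $H_2$ at $v$. With $g$ present the surgery is no longer equivalence-preserving as you argue it: in the transformed instance $g$ alternates with the added chord $f'=(w',z')$, and the added $G_2$-chords $(x_1,v)$, $(v,x_6)$ force, via the orthogonality constraint at $v$, that $g$ lie on the same side as $e$ --- constraints with no counterpart in the original instance, so copying a valid assignment $A$ over to $A'$ (your forward direction) produces an invalid assignment. This is exactly why the paper performs the degree-$4$-independence phase \emph{first} and then justifies the outerplanarity phase by deriving the ``no chord leaves $H_2$'' property from property (iii) rather than from minimality; your ordering cannot be repaired locally, because the outerplanarity surgery needs (iii) (or some substitute for it) as a precondition, not as a postprocessing step.

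The second gap is that the vertex-splitting gadget, which you yourself flag as ``the main obstacle,'' is the actual technical content of the paper's proof, and your proposal does not construct it. Note also that the paper's gadget does \emph{not} lower the $G_1$-degree of the bad vertex to $3$, which is what you propose: it replaces $v$ by a long path of dummy cycle vertices, keeps $v$ with two short stub chords $e''=(u',v)$ and $f''=(v,w')$ to degree-$3$ dummies, reroutes the long-range chords as $e'=(u,v_a)$ and $f'=(v_b,w)$, and adds chains of $G_2$-chords whose planarity constraints, combined with the orthogonality constraints at $v_a,u',w',v_b$, force $A(e')=A(e'')$ and $A(f')=A(f'')$. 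Keeping $v$ at degree $4$ is what preserves the four-way orthogonality coupling at $v$ exactly; a genuine degree reduction at $v$ would have to re-encode a conditional constraint (``if the two $G_1$-chords agree, everything at $v$ agrees'') purely by alternations, which is a different and harder task. One must then also verify, as the paper does, that the rerouted chords $e',f'$ alternate only with edges that $e,f$ alternated with (so the gadget does not damage outerplanarity or alternation counts) and that both directions of the equivalence hold. Without this construction, and with your first phase unsound, the proposal does not constitute a proof.
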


\clearpage

\bibliographystyle{abbrv}
\bibliography{bibliography}

\clearpage
\appendix

\noindent{\Large\bfseries Appendix}
\section{Definitions for the appendix}\label{app:definitions}

In Section~\ref{se:embedding-constraints} we already discussed how to assign the
exclusive edges to either of the two sides of $C$.
We formalise this assignment by means of a function $A: \bigcup^k_{i=1} E_i
\setminus E(G_\cap) \rightarrow \{l,r\}$, where $A(e)=l$ (resp. $A(e)=r$) if
edge $e$ lies to the left (resp. to the right) of $C$, according to an arbitrary
orientation of $C$.

\subsubsection*{Connectivity and SPQR-trees.} 
A graph $G = (V,E)$ is \emph{connected} if there is a path between any
two vertices.  A \emph{cutvertex} is a vertex whose removal
disconnects the graph.  A separating pair $\{u,v\}$ is a pair of
vertices whose removal disconnects the graph.  A connected graph is
\emph{biconnected} if it does not have a cutvertex and a biconnected
graph is \emph{3-connected} if it does not have a separating pair.

We consider $st$-graphs with two special \emph{pole} vertices $s$ and
$t$.  The family of $st$-graphs can be constructed in a fashion very
similar to series-parallel graphs.  Namely, an edge $st$ is an
$st$-graph with poles $s$ and $t$.  Now let $G_i$ be an $st$-graph
with poles $s_i,t_i$ for $i=1,\dots,k$ and let $H$ be a planar graph
with two designated adjacent vertices $s$ and $t$ and $k+1$ edges $st,
e_1,\dots,e_k$.  We call $H$ the \emph{skeleton} of the composition
and its edges are called \emph{virtual edges}; the edge $st$ is the
\emph{parent edge} and $s$ and $t$ are the poles of the skeleton $H$.
To compose the $G_i$'s into an $st$-graph with poles $s$ and $t$, we
remove the edge $st$ from $H$ and replace each $e_i$ by $G_i$ for
$i=1,\dots,k$ by removing $e_i$ and identifying the poles of $G_i$
with the endpoints of $e_i$.  In fact, we only allow three types of
compositions: in a \emph{series composition} the skeleton $H$ is a
cycle of length~$k+1$, in a parallel composition $H$ consists of two
vertices connected by $k+1$ parallel edges, and in a \emph{rigid
  composition} $H$ is 3-connected.

For every biconnected planar graph $G$ with an edge $st$, the graph
$G-st$ is an $st$-graph with poles $s$ and $t$~\cite{dt-ogasp-90}.
Much in the same way as series-parallel graphs, the $st$-graph $G-st$
gives rise to a (de-)composition tree~$\mathcal T$ describing how it
can be obtained from single edges.  The nodes of $\mathcal T$
corresponding to edges, series, parallel, and rigid compositions of
the graph are \emph{Q-, S-, P-,} and \emph{R-nodes}, respectively.  To
obtain a composition tree for~$G$, we add an additional root Q-node
representing the edge $st$.  We associate with each node~$\mu$ the
skeleton of the composition and denote it by $\skel(\mu)$.  For a
Q-node~$\mu$, the skeleton consists of the two endpoints of the edge
represented by~$\mu$ and one real and one virtual edge between them
representing the rest of the graph.  For a node~$\mu$ of $\mathcal T$,
the \emph{pertinent graph} $\pert(\mu)$ is the subgraph represented by
the subtree with root~$\mu$.  For a virtual edge $\eps$ of a
skeleton~$\skel(\mu)$, the \emph{expansion graph} of $\eps$ is the
pertinent graph $\pert(\mu')$ of the neighbor $\mu'$ corresponding to
$\eps$ when considering $\mathcal T$ rooted at $\mu$.

The \emph{SPQR-tree} of $G$ with respect to the edge $st$, originally
introduced by Di Battista and Tamassia~\cite{dt-ogasp-90}, is the
(unique) smallest decomposition tree~$\mathcal T$ for $G$.  Using a
different edge $s't'$ of $G$ and a composition of $G-s't'$ corresponds
to rerooting $\mathcal T$ at the node representing $s't'$.  It thus
makes sense to say that $\mathcal T$ is the SPQR-tree of $G$.  The
SPQR-tree of $G$ has size linear in $G$ and can be computed in linear
time~\cite{gm-lis-00}.  Planar embeddings of $G$ correspond
bijectively to planar embeddings of all skeletons of $\mathcal T$; the
choices are the orderings of the parallel edges in P-nodes and the
embeddings of the R-node skeletons, which are unique up to a flip.
When considering rooted SPQR-trees, we assume that the embedding of
$G$ is such that the root edge is incident to the outer face, which is
equivalent to the parent edge being incident to the outer face in each
skeleton.
We remark that in a planar embedding of $G$, the poles of any node
$\mu$ of $\mathcal{T}$ are incident to the outer face of
$pert(\mu)$. Hence, in the following we only consider embeddings of
the pertinent graphs with their poles lying on the same face.

\section{Omitted or Sketched Proofs from Section~\ref{se:preliminaries}}\label{app:preliminaries}

\rephrase{Theorem}{\ref{th:general-characterization}}{
An instance \ourinstancek{} of \ourproblem{k} has an \ourdrawing if and only if it admits a SEFE satisfying the orthogonality constraints.
}

\begin{proof} 
For the {\em if} part, let $\cal E$ be the embedding of $G_\cap$ determined by the SEFE $\langle \mathcal E_1,\dots \mathcal E_k \rangle$ of \ourinstancek{}. Observe that the orthogonality constraints at each vertex define (i) whether a degree~$2$ vertex of $G_\cap$ has to be drawn straight or bent, and (ii) which face incident to a degree~$3$ vertex of $G_\cap$ has to be assigned the $180^\circ$ angle. It is not hard to see that a planar orthogonal drawing $\Gamma$ of $G_\cap$ in which the embedding of $G_\cap$ is $\cal E$ satisfying such requirements can be constructed. We draw the exclusive edges in each $E_i$ as orthogonal polylines in $\Gamma$ inside the face of $\cal E$ determined by the SEFE.
The fact that the exclusive edges of each $E_i$ can be drawn in $\Gamma$ without introducing any crossings descends from the fact that $\mathcal E_i$ is a planar embedding of $G_i$.

 For the {\em only if} part, let~$v$ be a vertex in~$G_\cap$ such that the orthogonality constraints are not satisfied at $v$.
  If $v$ has exactly two neighbors $u$ and $w$ in~$G_\cap$, then we need to assign a port to two exclusive edges of the same graph (one for each of these edges)
  on one side of the path~$uvw$ and a port to at least one exclusive edge on the other side of the path~$uvw$.
 If $v$ has degree~3 in~$G_\cap$, then we need to assign a port to an exclusive edge between a pair of edges of $G_\cap$ and a port to an exclusive edge between a different pair of edges of $G_\cap$. 
Hence, in both cases  we need at least five ports, which is not possible on the grid.  
\end{proof}

\section{Omitted or Sketched Proofs from Section~\ref{se:cycle-hardness}}\label{app:cycle-hardness}

\rephrase{Theorem}{\ref{th:hardness-hamiltonian-cycle-three-colors}}{
\ourproblem{3} is NP-complete, even for instances $\langle G_1=(V,E_1), G_2=(V,E_2), G_3=(V,E_3) \rangle$ with sunflower intersection in which (i) the shared graph $G_\cap= (V, E_1 \cap E_2 \cap E_3)$ is a cycle and (ii) $G_1$ and $G_2$ are outerplanar graphs with maximum degree $3$.}

\begin{proof}
The membership in NP directly follows from Theorem~\ref{th:characterization}, since an assignment $A$, which is a certificate for our problem, can be easily verified in polynomial time to satisfy all the planarity and the orthogonality constraints.

To prove that the problem is NP-hard, we show a reduction from the NP-complete
problem {\sc Positive Exactly-Three} \naesat~\cite{moret-toc-98}, which is the
variant of \naesat in which each clause consists of exactly three unnegated
literals. See Fig.~\ref{fig:hardness-3-graphs}.


Let $x_1,x_2,\dots,x_n$ be the variables and let $c_1,c_2,\dots,c_m$ be the clauses of a $3$-CNF formula $\phi$ of {\sc Positive Exactly-Three} \naesat. We show how to construct an equivalent instance $\langle G_1=(V,E_1),G_2=(V,E_2),G_3=(V,E_3)\rangle$ of \ourproblem{3}; refer to Fig.~\ref{fig:hardness-3-graphs-b}. Assume, without loss of generality, that the literals in each clause $c_j=(x^j_a,x^j_b,x^j_c)$ are such that $a > b > c$, if $j$ is odd, and $a < b < c$, otherwise.

A \emph{variable-clause gadget} $V^j_i$ for a variable $x_i$ belonging to a
clause $c_j$ is a subgraph of $G_\cup$ defined as follows. Gadget $V^i_j$
contains a path $(s^j_i,u^j_i,w^j_i,v^j_i,z^j_i,r^j_i,t^j_i)$ belonging to
$G_\cap$, and edges $\{u^j_i,v^j_i\}$ and $\{w^j_i,z^j_i\}$ belonging to $E_3$;
see Fig.~\ref{fig:hardness-3-graphs-a}.

The \emph{clause gadget} $C^j$ for a clause $c_j$ is a subgraph of $G_\cup$ defined as follows.
Gadget $C^j$ contains a path
$(s^j,\alpha^j,y^j_a,\beta^j,y^j_b,d^j_1,\dots,d^j_6,\gamma^j,y^j_c,\delta^j,t^j)$
belonging to $G_\cap$, and edges $\{\alpha^j,\beta^j\}$, $\{\beta^j,\gamma^j\}$,
$\{\gamma^j,\delta^j\}$, $\{d^j_1,d^j_3\}$, $\{d^j_2,d^j_4\}$, $\{d^j_3,d^j_5\}$,
$\{d^j_4,d^j_6\}$ belonging to $E_3$, and edges $\{\beta^j,d^j_3\}$ and
$\{d^j_4,\gamma^j\}$ belonging to $E_1$; see Fig.~\ref{fig:hardness-3-graphs-a}.

Initialize $G_\cup$ to the union of $V^j_i$, for $i=1,\dots,n$ and $j=1,\dots,m$, and of $C^j$,
for $j=1,\dots,m$. Then, for $j=1,\dots,m$ and for $i=1,\dots,n-1$ identify vertex $t^j_i$ with vertex $s^j_{i+1}$, if $j$ is odd, or identify vertex $t^j_{i+1}$ with vertex $s^j_{i}$, otherwise. Further, for $j=1,\dots,m$ (where $m+1=1$), identify vertex $t^j_n$ with vertex $s^j$ and vertex $t^j$ with vertex $s^{j+1}_n$, if $j$ is odd, or identify vertex $t^j_1$ with vertex $s^j$ and vertex $t^j$ with vertex $s^{j+1}_1$, otherwise.

To complete the construction of $\langle G_1,G_2,G_3\rangle$ we add to $G_\cup$
exclusive edges as follows. For $i=1,\dots,n$ and for $j=1,\dots,m-1$,
we add an edge $\{w^j_i,w^{j+1}_i\}$ to $E_2$, if $j$ is odd, or to
$E_1$, otherwise. We call these edges {\em transmission edges}.
Further, for $i=1,\dots,n$ and for $j=1,\dots,m$, we add an edge
$\{w^j_i,y^j_i\}$, if $x_i \in c_j$, or an edge $\{w^j_i,r^j_i\}$, otherwise.

Clearly, the construction of instance $\langle G_1,G_2,G_3\rangle$ can be
completed in polynomial time.

Graph $G_\cap$ is a cycle, as we already observed. Also, the transmission edges in $E_1$ (in $E_2$) do not alternate along $G_\cap$, since the variable-clause gadgets appear along $G_\cap$ in 
the order $V_1^j,\dots,V_n^j$, if $j$ is odd, or in the order $V_n^j,\dots,V_1^j$, otherwise. Also, no transmission edge in $E_1$ alternates with edges $(\beta^j,d_3)$ and $(d_4,\gamma^j)$, for any $j$, and such edges do not alternate with each other by construction. Hence, $G_1$ and $G_2$ are outerplanar. The fact that $G_1$ and $G_2$ have maximum degree $3$ also directly follows from the construction.

Given a positive instance $\phi$ of {\sc Positive Exactly-Three} \naesat, we
show that $\langle G_1,G_2,G_3\rangle$ is a positive instance of \ourproblem{3}.
Given a satisfying truth assignment
$T: X \rightarrow \{\true,\false\}$ where $X$ denotes the set of
variables in $\phi$, we construct an assignment $A$ of the exclusive edges of
$E_1 \cup E_2 \cup E_3$ to the two sides of $G_\cap$ satisfying all the
planarity and the orthogonality constraints.

For $i=1,\dots,n$ and for $j=1,\dots,m$, we set $A(e)=l$, for each exclusive
edge $e \in E_1 \cup E_2 \cup E_3$ incident to $w^j_i$, if $T(x_i)=\true$, or
$A(e)=r$, otherwise.
For $i=1,\dots,n$ and for $j=1,\dots,m$, we set $A(u_i^j,v_i^j)=r$, if
$T(x_i)=\true$, or $A(u_i^j,v_i^j)=l$, otherwise.
For each clause $c_j=(x^j_a,x^j_b,x^j_c)$,
we set $A(\alpha^j,\beta^j)=l$, if $T(x_a)=\false$, or $A(\alpha^j,\beta^j)=r$,
otherwise;
we set $A(\beta^j,\gamma^j)=l$, if $T(x_b)=\false$, or $A(\beta^j,\gamma^j)=r$,
otherwise;
and we set $A(\gamma^j,\delta^j)=l$, if $T(x_c)=\false$, or
$A(\gamma^j,\delta^j)=r$, otherwise.
Finally, for each clause $c_j=(x^j_a,x^j_b,x^j_c)$, consider the literal
$x_\circ$ with $\circ \in \{a,c\}$ such that $T(x_\circ)=T(x_b)$, if any,
otherwise let $x_\circ=x_a$.
Suppose that $x_\circ=x_a$; set $A(d^j_1,d^j_3)=A(d^j_3,d^j_5)=A(\beta^j,d^j_3)=r$ and set $A(d^j_2,d^j_4)=A(d^j_4,d^j_6)=A(d^j_4,\gamma^j)=l$, if $T(x_\circ)=\false$, or set $A(d^j_1,d^j_3)=A(d^j_3,d^j_5)=A(\beta^j,d^j_3)=l$ and set $A(d^j_2,d^j_4)=A(d^j_4,d^j_6)=A(d^j_4,\gamma^j)=r$, otherwise.
Suppose that $x_\circ=x_c$; set $A(d^j_1,d^j_3)=A(d^j_3,d^j_5)=A(\beta^j,d^j_3)=l$ and set $A(d^j_2,d^j_4)=A(d^j_4,d^j_6)=A(d^j_4,\gamma^j)=r$, if $T(x_\circ)=\false$, or set $A(d^j_1,d^j_3)=A(d^j_3,d^j_5)=A(\beta^j,d^j_3)=l$ and set $A(d^j_2,d^j_4)=A(d^j_4,d^j_6)=A(d^j_4,\gamma^j)=r$, otherwise.


We show that $A$ satisfies the planarity constraints. First observe that the
planarity constraints for the edges in $E_1$ and $E_2$ are trivially satisfied
by $A$ since $G_1$ and $G_2$ are outerplanar. As for the edges in $E_3$, we have
that the only pairs of edges that alternate along $G_\cap$ are $\langle
(u_i^j,v_i^j),(w_i^j,z_i^j)\rangle$, for $i=1,\dots,n$ and for $j=1,\dots,m$,
pairs $\langle (w_a^j,y_a^j),(\alpha^j,\beta^j)\rangle$, $\langle
(w_b^j,y_b^j),(\beta^j,\gamma^j)\rangle$, and $\langle
(w_c^j,y_c^j),(\gamma^j,\delta^j)\rangle$, for $j=1,\dots,m$, and the edges
incident to the dummy vertices $d^j_1,\dots,d^j_6$, for $j=1,\dots,m$. However,
it is easy to verify that $A$ assigns alternating edges to different sides of
$G_\cap$.

We show that $A$ satisfies the orthogonality constraints at every vertex. For all
the vertices except for $w_i^j$, $\beta^j$, $d_3^j$, $d_4^j$, and
$\gamma^j$, for $i=1,\dots,n$ and for $j=1,\dots,m$, this is true since they
have only one incident exclusive edge. For vertices $w_i^j$, $d_3^j$, and
$d_4^j$, with $i=1,\dots,n$ and for $j=1,\dots,m$, this is true since all the
edges incident to $w_i^j$ are assigned to the same side of $G_\cap$ by $A$, by
construction. 
For vertex $\beta_j$ and $\gamma_j$, we distinguish two cases based on whether
there exists a $\circ \in \{a,c\}$ with $T(x_b)=T(x_\circ)$:
(i) If this is case, let $\circ=c$ without loss of generality; the case
$\circ=a$ can be shown analogously.
Then, $A(\beta^j,\gamma^j)=A(\gamma^j,\delta^j)=A(d_4^j,\gamma^j)$, by construction, and hence the orthogonality constraints are satisfied at $\gamma^j$. To prove that they are also satisfied at $\beta^j$, it suffices to show that the two edges of $E_3$ incident to $\beta^j$ are assigned to different sides of $G_\cap$, given that $\beta^j$ has degree $3$ in $G_1$ and degree $2$ in $G_2$. 
Namely, due to the fact that $T$ is a \naesat truth assignment we have that $T(x_a) \neq T(x_b)$, and hence $A(\alpha^j,\beta^j) \neq A(\beta^j,\gamma^j)$.
(ii) In the second case, $T(x_a) \neq T(x_b) \neq T(x_c)$, hence we have that $A(\alpha^j,\beta^j) \neq A(\beta^j,\gamma^j) \neq A(\gamma^j,\delta^j)$. Since vertices $\beta^j$ and $\gamma^j$ have degree $4$ in $G_3$, degree $3$ in $G_1$, and degree $2$ in $G_2$, this implies that the orthogonality constraints are satisfied at $\beta^j$ and at $\gamma^j$.

Suppose that $\langle G_1=(V,E_1),G_2=(V,E_2),G_3=(V,E_3)\rangle$ is a positive
instance of \ourproblem{3} and let $A$ be the corresponding assignment of the
exclusive edges to the sides of $G_\cap$. We show how to construct a \naesat
truth assignment $T$ that satisfies $\phi$. For $i=1,\dots,n$, we set
$T(x_i)=\true$ if and only if $A(w_i^1,z_i^1)=l$.
We start by proving that, for each $i=1,\dots,n$, all the edges incident to
$w_i^j$, with $1 \leq j \leq m$, are assigned to the same side of $G_\cap$.
Observe that, for each $i=1,\dots,n$ and for each for each $j=1,\dots,m$, the
two edges in $G_3$ incident to $w_i^j$ both alternate with edge $(u_i^j,v_i^j)$
along $G_\cap$ and hence are assigned to the same side of $G_\cap$ by the
planarity constraints. Hence, by the orthogonality constraints at $w_i^j$, all the
exclusive edges in $E_1 \cup E_2$ incident to $w_i^j$ lie on the same side of
$G_\cap$ as $(w_i^j,z_i^j)$. Further, since any two vertices $w_i^j$ and
$w_i^{j+1}$, are connected by a transmission edge in either $E_1$
or in $E_2$, the statement follows.
This property allows us to focus on each clause separately. Let $c_j=(x_a^j,x_b^j,x_c^j)$ be a clause in $\phi$, with $1\leq j \leq m$. We show that $T(x_a^j)=T(x_b^j)=T(x_c^j)$ does not hold.
First, we show that $A(\beta^j,d_3^j) \neq A(d_4^j,\gamma^j)$. Namely, by the planarity constraints, $A(d_1^j,d_3^j)=A(d_3^j,d_5^j) \neq A(d_2^j,d_4^j) = A(d_4^j,d_6^j)$; then, by the orthogonality constraints at $d_3^j$ and at $d_4^j$, we have that $A(\beta^j,d_3^j)=A(d_1^j,d_3^j)=A(d_3^j,d_5^j)$ and that $A(d_4^j,\gamma^j) = A(d_2^j,d_4^j) = A(d_4^j,d_6^j)$, and the statement follows.
Second, $A(\alpha^j,\beta^j)=A(\beta^j,\gamma^j)=A(\gamma^j,\delta^j)$ does not hold, since $A(\beta^j,d_3^j) \neq A(d_4^j,\gamma^j)$ and by the orthogonality constraints at $\beta^j$ and at $\gamma^j$. This implies that $A(w_a^j,y_a^j)=A(w_b^j,y_b^j)=A(w_c^j,y_c^j)$ does not hold, and hence $A(w_a^j,z_a^j)=A(w_b^j,z_b^j)=A(w_c^j,z_c^j)$ does not hold, since all the edges incident to $w_i^j$, with $1 \leq j \leq m$, are assigned to the same side of $G_\cap$. This concludes the proof that $T(x_a^j)=T(x_b^j)=T(x_c^j)$ does not hold.

It is easy to see that the reduction can be performed in polynomial time and that it can be extended to any $k>3$ by subdividing two edges of $G_\cap$ for each additional graph $G_i$ and by introducing an exclusive edge between these vertices only belonging to $G_i$.
\end{proof}


\section{Omitted or Sketched Proofs from Section~\ref{se:cycle-algorithms}}\label{app:cycle-algorithms}

\rephrase{Lemma}{\ref{le:reduction-degree-three-2-degree-three-outerplanar}}{
Let \ourinstance{} be an instance of \ourproblem{2} such that $G_\cap = (V, E_1 \cap E_2)$ is a cycle and $G_1$ has maximum degree $3$. It is possible to construct in polynomial time an equivalent instance \ourinstance{*} of \ourproblem{2} such that $G_\cap^* = (V^*,E_1^* \cap E_2^*)$ is a cycle and $G_1^*$ is an outerplanar graph with maximum degree $3$.
}

\begin{proof}
We describe how to construct an equivalent instance \ourinstance{\prime} of \ourproblem{2} such that $G_\cap^\prime$ is a cycle, $G_1^\prime$ has maximum degree $3$ and the number of pairs of edges in $G_1^\prime$ that alternate along $G_\cap^\prime$ is smaller than the number of pairs of edges in $G_1$ that alternate along $G_\cap$.
Note that repeatedly performing this transformation eventually yields an equivalent instance \ourinstance{*} satisfying the requirements of the lemma.

Consider two edges $e=(u,v)$ and $f=(w,z)$ of $G_1$ such that $u,w,v,z$ appear in this order along cycle $G_\cap$ and such that the path $P_{u,z}$ in $G_\cap$ between $u$ and $z$ that contains $v$ and $w$ has minimal length. If $G_1$ is not outerplanar, edges $e$ and $f$ always exist.

\begin{figure}[tb]
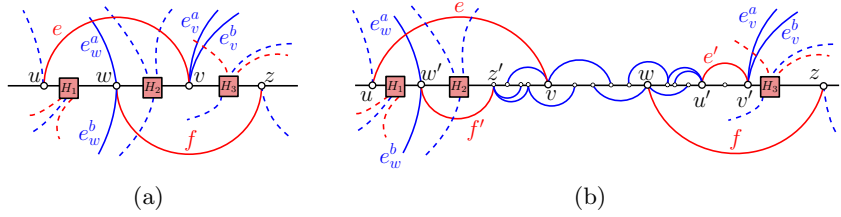

  \centering
  \subfigure[]{\includegraphics[scale=0.5,page=1]{img/red-deg3-outerplanar.pdf}\label{fig:reduction-deg3-2-outerplanar-a-app}}
  \hfil
  \subfigure[]{\includegraphics[scale=0.5,page=2]{img/red-deg3-outerplanar.pdf}\label{fig:reduction-deg3-2-outerplanar-b-app}}
  \caption[]{Instances (a) \ourinstance{} and (b) \ourinstance{\prime} for the
  proof of Lemma~\ref{le:reduction-degree-three-2-degree-three-outerplanar}.
  Edges of the shared graph $G_\cap$ are black. Exclusive edges of $G_1$ are red and those of $G_2$ are blue.}
  \label{fig:reduction-deg3-2-outerplanar-app}
\end{figure}

Initialize $G_\cap^\prime = G_\cap$. Replace path $P_{u,z}$ in $G_\cap^\prime$ by a path $P'_{u,z}$, as follows; refer to Fig.~\ref{fig:reduction-deg3-2-outerplanar-app}. 
Let $H_1$, $H_2$, and $H_3$ be the sets of vertices between $u$ and $w$, between $w$ and $v$, and between $v$ and $z$ in $G_\cap$. Path $P'_{u,z}$ contains $u$, then the vertices in $H_1$, then a dummy vertex $w'$, then the vertices in $H_2$, then a dummy vertex $z'$, then three dummy vertices $x_1$, $x_2$, $x_3$, then $v$, then four dummy vertices $x_4$, $x_5$, $x_6$, $x_7$, then $w$, then three dummy vertices $x_8$, $x_9$, $x_{10}$, then three dummy vertices $u'$, $x_{11}$, and $v'$, then the vertices in $H_3$, and finally $z$. Note that $G_\cap^\prime$ contains all the vertices of $G_\cap$, plus a set of dummy vertices. 
We now describe the exclusive edges in $E_1^\prime$ and $E_2^\prime$. 
Initialize $E_1^\prime=E_1$ and $E_2^\prime=E_2$.
Add edges $e'=(u',v')$ and $f'=(w',z')$ to $E_1^\prime$.
Also, add edges $(z',x_2)$, $(z',x_3)$, $(x_1,v)$, $(x_3,x_4)$, $(v,x_6)$, $(x_5,w)$, $(x_7,x_8)$, $(w,x_{10})$, $(x_8,u')$, and $(x_9,u')$ to $E_2^\prime$.
Finally, replace in $E_2^\prime$ each edge $(x,w)$ incident to $w$ by an edge $(x,w')$ and each edge $(x,v)$ incident to $v$ with an edge $(x,v')$.

Before proving the statement, we observe an important property that will be used in the following, namely that there exists no exclusive edge in $E_1$, and hence in $E_1^\prime$, with an endpoint in $H_2$ and the other one not in $H_2$. In fact, there exists no edge connecting a vertex of $H_2$ to any of $u,v,w,z$, since these vertices are already incident to edges $e$ and $f$, respectively, and since $G_1$ has maximum degree $3$. Also, there exists no exclusive edge $g$ connecting a vertex of $H_2$ to a vertex of $H_1$ (of $H_3$), since in this case $g$ would alternate with $f$ (with $e$), hence contradicting the minimality of path $P_{u,z}$. Finally, the existence of an exclusive edge connecting a vertex of $H_2$ to any other vertex in $V$ would immediately make the instance negative, since $G_1$ would not be planar.

We now prove that \ourinstance{\prime} satisfies the required properties. 
First, graph $G_\cap^\prime$ is a cycle by construction. 
Second, $G_1^\prime$ has maximum degree $3$, since (i) every vertex in $V \cap V^\prime$ is incident to the same edges in $E_1^\prime$ as in $E_1$, (ii) dummy vertices $x_i$, with $i=1,\dots,11$, have degree $2$, and (iii) dummy vertices $w'$, $z'$, $u'$, and $v'$ have degree $3$.
Third, the number of pairs of alternating edges in \ourinstance{\prime} is smaller than in \ourinstance{}. In fact, (i) edge $e'$ does not alternate with any edge of $E_1^\prime$, since $x_{11}$ is not incident to any exclusive edge in $E_1^\prime$, (ii) edge $f'$ does not alternate with any edge in $E_1^\prime$, since there exists no exclusive edge in $E_1^\prime$ with an endpoint in $H_2$ and the other one not in $H_2$, and (iii) all pairs of edges in $E_1 \cap E_1^\prime$ that alternate along $G_\cap^\prime$ also alternate along $G_\cap$, except for edges $e$ and $f$, which alternate along $G_\cap$ but not along $G_\cap^\prime$.

We now prove that \ourinstance{\prime} is equivalent to \ourinstance{}.

Suppose that \ourinstance{} admits an \ourdrawing \oursolution{}. By Theorem~\ref{th:characterization}, \oursolution{} determines an assignment $A$ of the exclusive edges of $E_1$ and of $E_2$ to the two sides of $G_\cap$ satisfying all the planarity and the orthogonality constraints. We show how to construct an assignment $A^\prime$ of the exclusive edges of $E_1^\prime$ and of $E_2^\prime$ to the two sides of $G_\cap^\prime$ satisfying all the constraints.

For each exclusive edge $g \in E_1 \cap E_1^\prime$, set $A^\prime(g) = A(g)$. Also, set $A^\prime(e') = A(e)$ and $A^\prime(f') = A(f)$.
For each exclusive edge $g \in E_2 \cap E_2^\prime$, set $A^\prime(g) = A(g)$. Also, for each edge $(x,w')$ (resp. $(x,v')$) incident to $w'$ (resp. to $v'$), set $A^\prime(x,w') = A(x,w)$ (resp. $A^\prime(x,v') = A(x,v)$). Further, set $A^\prime(x_1,v) = A^\prime(v,x_6) = A^\prime(x_7,x_8) = A^\prime(x_8,u') = A^\prime(x_9,u') = A(e)$ and set $A^\prime(z',x_2) = A^\prime(z',x_3) = A^\prime(x_3,x_4) = A^\prime(x_5,w) = A^\prime(w,x_{10}) = A(f)$.
 
The planarity constraints for the edges of $G^\prime_1$ are satisfied since any pair of edges that alternate along $G_\cap^\prime$ also alternate in $G_\cap$ and since their assignment in $A^\prime$ and in $A$ are the same, by construction.

We prove that the planarity constraints for the edges of $G^\prime_2$ are
satisfied by $A^\prime$. For the edges that are not incident to any dummy
vertex, this is true for the same reason as for the edges of $G^\prime_1$. 
For each edge $(x,w')$ incident to $w'$, this is true since $A^\prime(x,w')=A(x,w)$, and since $(x,w')$ alternates with an edge $g \in E_2^\prime$ along $G_\cap^\prime$ if and only if edge $(x,w)$ alternates with an edge $g^*$ along $G_\cap$, where $g^*=g$ if $g$ is not incident to $v'$, while $g^*=(y,v)$ if $g=(y,v')$. Analogous arguments hold for each edge $(x,v')$ incident to $v'$.
Finally, the fact that the planarity constraints for each edge incident to two dummy vertices are satisfied by $A^\prime$ can be easily verified; recall that $A(e) \neq A(f)$.

We now prove that the orthogonality constraints are satisfied by $A^\prime$ at every vertex of $V^\prime$. 
For the vertices in $V^\prime \cap V \setminus \{w,v\}$, this is true since they are satisfied by $A$ and since for every exclusive edge $g \in E_1^\prime \cup E_2^\prime$ incident to these vertices, we have that $g \in E_1 \cup E_2$, by construction, and $A^\prime(g) = A(g)$.
For vertex $w$, this is true since $A'(x_5, w) = A'(w,x_{10}) = A(f) = A'(f)$.
For vertex $v$, this is true since $A'(x_1, v) = A'(v,x_6) = A(e) = A'(e)$.
For vertex $u^\prime$, this is true since $A^\prime(x_8,u')=A^\prime(x_9,u')=A^\prime(e')=A(e)$.
For vertex $z^\prime$, this is true since $A^\prime(z',x_2) = A^\prime(z',x_3)=A^\prime(f')=A(f)$.
For vertex $w^\prime$, assume there exist two exclusive edges $e^a_w, e^b_w \in E_2^\prime$ that are incident to $w'$, the case in which there exists only one or none of them being trivial. Since $A^\prime(e^a_w)= A(e^a_w)$, $A^\prime(e^b_w)=A(e^b_w)$, and $A^\prime(f')=A(f)$, and since the orthogonality constraints at $w$ are satisfied by $A$, the orthogonality constraints at $w^\prime$ are satisfied by $A^\prime$.
Analogously, the orthogonality constraints at $v^\prime$ between edges $e^a_v, e^b_v \in E_2^\prime$, if any, and edge $e' \in E_1^\prime$ are satisfied by $A^\prime$ since the same constraints at $v$ between edges $e^a_v, e^b_v \in E_2$ and $e \in E_1$ are satisfied by $A$.
Since vertices $x_i$, with $i=1,\dots,11$, have degree $2$ in $G_1^\prime$, this concludes the proof that $A^\prime$ satisfies the orthogonality constraints.

Suppose that \ourinstance{\prime} admits \ourdrawing \oursolution{\prime}, and let $A^\prime$ be the corresponding assignment of the exclusive edges of $E_1^\prime$ and of $E_2^\prime$ to the two sides of $G_\cap^\prime$. We show how to construct an assignment $A$ of the exclusive edges of $E_1$ and of $E_2$ to the two sides of $G_\cap$ satisfying all the planarity and the orthogonality constraints. 

For each exclusive edge $g \in E_1$, set $A(g) = A^\prime(g)$. 
For each exclusive edge $g \in E_2 \cap E_2^\prime$, set $A(g) = A^\prime(g)$. Also, for each edge $(x,w)$ (resp. $(x,v)$) incident to $w$ (resp. to $v$), set $A(x,w) = A^\prime(x,w')$ (resp. $A(x,v) = A^\prime(x,v')$).

We prove that the planarity constraints for the edges of $G_1$ are satisfied by $A$.
For each pair $(e_1,e_2)$ of exclusive edges in $E_1$ such that $\{e_1, e_2\} \neq \{e, f\}$, this is true since $e_1$ and $e_2$ alternate along $G_\cap$ if and only if they alternate along $G^\prime_\cap$, by construction.
For pair $(e,f)$, this is true for the following reason. By planarity constraints, we have $A^\prime(x_1,v) = A^\prime(v,x_6) \neq A^\prime(x_3,x_4)$; hence, by orthogonality constraints at vertex $v$, we have $A^\prime(e)=A^\prime(x_1,v) = A^\prime(v,x_6)$. Analogously, we have $A^\prime(f)=A^\prime(x_5,w) = A^\prime(w,x_{10}) \neq A^\prime(x_7,x_8)$. Since, by planarity constraints, $A^\prime(v,x_6) \neq A^\prime(x_5,w)$, we have $A^\prime(e) \neq A^\prime(f)$ and hence $A(e) \neq A(f)$.
We prove that the planarity constraints for the edges of $G_2$ are satisfied by $A$. For each pair $(e_1,e_2)$ of exclusive edges in $E_2$ such that neither $e_1$ nor $e_2$ is incident to either of $w$ and $v$, this is true since $e_1$ and $e_2$ alternate along $G_\cap$ if and only if they alternate along $G^\prime_\cap$, by construction.
For each edge $(x,w)$ incident to $w$, this is true since $A(x,w)=A^\prime(x,w')$, and since $(x,w)$ alternates with an edge $g \in E_2$ along $G_\cap$ if and only if edge $(x,w)$ alternates with an edge $g^*$ along $G^\prime_\cap$, where $g^*=g$ if $g$ is not incident to $v$, while $g^*=(y,v')$ if $g=(y,v)$. Analogous arguments hold for each edge $(x,v)$ incident to $v$.

We now prove that the orthogonality constraints are satisfied by $A$ at every vertex of $V$.
For vertices in $V \setminus \{w,v\}$, this is true since they are satisfied by $A^\prime$ and since for every exclusive edge $g \in E_1 \cup E_2$ incident to these vertices, $g \in E_1^\prime \cup E_2^\prime$, by construction, and $A(g) =
A^\prime(g)$. 
In order to prove that the constraints are satisfied also at $w$ and $v$, we first argue that $A(f)=A^\prime(f')$ and $A(e)=A^\prime(e')$:
Namely, by planarity constraints, $A'(z',x_2) \neq A'(x_1, v) \neq A'(z',x_3)$, 
and hence $A'(z',x_2) = A'(z',x_3)$.
Similarly, $A'(x_5, w) \neq A'(x_7,x_8) \neq A'(w, x_{10})$, and hence $A'(x_5,w)
= A'(w, x_{10})$.
Then, by using the longer chain of alternating edges we get $A'(z', x_3) \neq
A'(x_1, v) \neq A'(x_3, x_4) \neq A'(v, x_6) \neq A'(x_5, w)$ and thus $A'(z',
x_3) = A'(x_5, w)$.
Finally, by orthogonality constraints at $z'$ and $w$, we get $A'(f') = A'(z', x_3)$ and $A'(f) = A'(x_5, w)$.
Since $A(f) = A'(f)$ we conclude $A(f) = A'(f')$.
The equality $A(e)=A^\prime(e')$ follows symmetrically.
We now prove that the orthogonality constraints are satisfied at $w$ and $v$.
For vertex $w$, assume there exist two exclusive edges $e^a_w, e^b_w \in E_2$ incident to $w$, the case in which there exists only one or none of them being trivial. Since $A(e^a_w)= A^\prime(e^a_w)$, $A(e^b_w)=A^\prime(e^b_w)$, and $A(f)=A^\prime(f')$, and since the orthogonality constraints at $w'$ between $e^a_w, e^b_w$ and $f'$ are satisfied by $A^\prime$, the orthogonality constraints at $w$ between $e^a_w, e^b_w$ and $f$ are satisfied by $A$.
Analogously, the orthogonality constraints at $v$ between edges $e^a_v, e^b_v \in E_2$, if any, and edge $e \in E_1$ are satisfied by $A$ since the same constraints at $v'$ between $e^a_v, e^b_v \in E^\prime_2$ and $e' \in E^\prime_1$ are satisfied by $A^\prime$.
This concludes the proof of the lemma.
\end{proof}

\rephrase{Lemma}{\ref{le:reduction-degree-five-2-degree-three}}{
Let \ourinstance{} be an instance of \ourproblem{2} such that $G_\cap = (V,E_1 \cap E_2)$ is a cycle and each vertex $v \in V$ has degree at most $3$ in either $G_1$ or $G_2$. It is possible to construct in polynomial time an equivalent instance \ourinstance{*} of \ourproblem{2} such that $G_\cap^* = (V^*,E_1^* \cap E_2^*)$ is a cycle and graph $G_1^*$ has maximum degree $3$.
}

\begin{proof}
We describe how to construct an equivalent instance \ourinstance{\prime} of \ourproblem{2} such that $G_\cap^\prime$ is a cycle, each vertex $v \in V^\prime$ has degree at most $3$ in either $G_1^\prime$ or $G_2^\prime$, and the number of degree-$4$ vertices in $G_1^\prime$ is smaller than the number of degree-$4$ vertices in $G_1$.
Note that repeatedly performing this transformation eventually yields an equivalent instance \ourinstance{*} satisfying the requirements of the lemma.

Consider a vertex $v \in V$ such that there exists two edges $e=(v,u_e), f=(v,u_f) \in E_1$ incident to $v$. Assume without loss of generality that $u_e$, $v$, and $u_f$ appear in this order along $G_\cap$. Suppose that there exists an edge $h=(v,u_h) \in E_2$ incident to $v$, the other case being simpler. We describe the construction for the case in which vertices
$u_e$, $v$, $u_f$, $u_h$ appear in this order along $G_\cap$; the other cases are analogous.

\begin{figure}[tb]
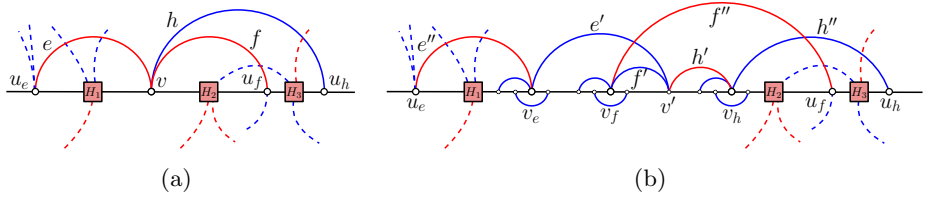

  \centering
  \subfigure[]{\includegraphics[scale=0.48,page=1]{img/red-deg5-to-degree3.pdf}\label{fig:red-deg5-to-degree3-a-app}}
  \hfill
  \subfigure[]{\includegraphics[scale=0.48,page=2]{img/red-deg5-to-degree3.pdf}\label{fig:red-deg5-to-degree3-b-app}}
  \caption[]{Instances (a) \ourinstance{} and (b) \ourinstance{\prime} for the
  proof of Lemma~\ref{le:reduction-degree-five-2-degree-three}.
  Edges of the shared graph $G_\cap$ are black. Exclusive edges of $G_1$ are red and those of $G_2$ are blue.}
  \label{fig:red-deg5-to-degree3-app}
\end{figure}

Initialize $G_\cap^\prime = G_\cap$; refer to Fig.~\ref{fig:red-deg5-to-degree3-app}. Replace $v$ in $G_\cap^\prime$ by a path\\ $P_{v}=x_1,x_2,v_e,x_3,y_1,y_2,v_f,y_3,v^\prime,z_1,z_2,v_h,z_3$ composed of dummy vertices.

We now describe the exclusive edges in $E_1^\prime$ and $E_2^\prime$. 
Set $E_i^\prime$, with $i=1,2$, contains all the exclusive edges in $E_i$ that are not incident to $v$. 
Also, $E_1^\prime$ contains edges $e''=(v_e,u_e)$, $f''=(v_f,u_f)$, and $h'=(v_h,v')$.
Finally, $E_2^\prime$ contains edges $(x_1,v_e)$, $(x_2,x_3)$, $(y_1,v_f)$, $(y_2,y_3)$, $(z_1,v_h)$, $(z_2,z_3)$, and edges $e'=(v_e,v')$, $f'=(v_f,v')$, and $h''=(v_h,u_h)$.

We prove that \ourinstance{\prime} satisfies the required properties. 
First, graph $G_\cap^\prime$ is a cycle by construction. 
Second, the degree of the vertices in $V \setminus V^\prime$ is the same in $G_1^\prime$ (resp. $G_2^\prime$) as in $G_1$ (resp. as in $G_2$), while all the dummy vertices have degree at most $3$ in $G_1^\prime$. Hence, every vertex in $V^\prime$ has degree at most $3$ in either $G_1^\prime$ or $G_2^\prime$; also, the number of degree-$4$ vertices in $G_1^\prime$ is smaller than the number of degree-$4$ vertices in $G_1$, since $v \notin G_1^\prime$.

We now prove that \ourinstance{\prime} is equivalent to \ourinstance{}.

Suppose that \ourinstance{} admits an \ourdrawing \oursolution{}, and let $A$ be the corresponding assignment of the exclusive edges of $E_1$ and of $E_2$ to the two sides of $G_\cap$, which exists by Theorem~\ref{th:characterization}.
We show how to construct an assignment $A^\prime$ of the exclusive edges of $E_1^\prime$ and of $E_2^\prime$ to the two sides of $G_\cap^\prime$ satisfying all the constraints.

For each exclusive edge $g \in E_1 \cup E_2$ incident to $v$, set $A^\prime(g) = A(g)$. Also, set $A^\prime(e'') = A(e)$, $A^\prime(f'') = A(f)$, and $A^\prime(h') = A(h)$.
Finally, set $A^\prime(x_1,v_e) = A^\prime(e') = A(e)$ and 
$A^\prime(x_2,x_3) \neq A(e)$;
set $A^\prime(y_1,v_f) = A^\prime(f') = A(f)$ and 
$A^\prime(y_2,y_3) \neq A(f)$; and
set $A^\prime(z_1,v_h) = A^\prime(h'') = A(h)$ and  
$A^\prime(z_2,z_3) \neq A(h)$.

We prove that the planarity constraints for the edges of $G^\prime_1$ are satisfied by $A^\prime$. Note that, by construction, 
edge $h'$ does not alternate with any edge of $G^\prime_1$ along $G_\cap^\prime$. Also, edges $e''$ and $f''$ do not alternate with each other along $G_\cap^\prime$. Further, if edge $e''$ (edge $f''$) alternates with an edge $g \in G^\prime_1$ along $G_\cap^\prime$, then edge $e$ (edge $f$) alternates with $g$ along $G_\cap$. Finally, any two edges not incident to any dummy vertex that alternate along $G_\cap^\prime$ also alternate along $G_\cap$. In all the described cases, the planarity constraints are satisfied by $A^\prime$ since they are satisfied by $A$.

We prove that the planarity constraints for the edges of $G^\prime_2$ are satisfied by $A^\prime$.  Note that, by construction, 
edges $e'$, $f'$, $(x_1,v_e)$, $(x_2,x_3)$, $(y_1,v_f)$, $(y_2,y_3)$, $(z_1, v_h)$, and $(z_2,z_3)$ do not alternate with any edge of $G^\prime_2$ that is not incident to a dummy vertex along $G_\cap^\prime$; it easy to verify that $A^\prime$ satisfies the planarity constraints among these edges.
Also, edge $h''$ alternates with $(z_2,z_3)$, but $A^\prime(h'') \neq A^\prime(z_2,z_3)$ by construction. Further, if edge $h''$ alternates with an edge $g \neq (z_2,z_3) \in G^\prime_2$ along $G_\cap^\prime$, then edge $h$ alternates with $g$ along $G_\cap$. Finally, any two edges not incident to any dummy vertex that alternate along $G_\cap^\prime$ also alternate along $G_\cap$. In all these cases, the planarity constraints are satisfied by $A^\prime$ since they are satisfied by $A$.

We now prove that the orthogonality constraints are satisfied by $A^\prime$ at every vertex in $V'$. 
For non-dummy vertices, this is true since they are satisfied by $A$ and since all the edges incident to them have the same assignment in $A$ as in $A^\prime$. 
For vertices $x_i$, $y_i$, and $z_i$, with $i=1,2,3$, this is true since they have degree $2$ in $G'_1$.
For vertex $v_e$, this is true since $A'(e'')=A'(x_1,v_e)=A'(e')= A(e)$; similar arguments apply for vertices $v_f$ and $v_h$.
Finally, for vertex $v'$, this is true since (i) $A'(e')=A(e)$, $A'(f')=A(f)$, and $A'(h')=A(h)$, (ii) $e$, $f$, and $h$ are incident to $v$ in $G_2$, and (iii) $A$ satisfies the orthogonality constraints. This concludes the proof that $A^\prime$ satisfies the orthogonality constraints.

Suppose that \ourinstance{\prime} admits an \ourdrawing \oursolution{\prime}, and let $A^\prime$ be the corresponding assignment of the exclusive edges of $E_1^\prime$ and of $E_2^\prime$ to the two sides of $G_\cap^\prime$. We show how to construct an assignment $A$ of the exclusive edges of $E_1$ and of $E_2$ to the two sides of $G_\cap$ satisfying all the planarity and the orthogonality constraints. 

For each exclusive edge $g \in E_1 \cup E_2$ not incident to $v$, set $A(g) = A^\prime(g)$. Also, set $A(e) = A^\prime(e'')$, $A(f) = A^\prime(f'')$, and $A(h) = A^\prime(h'')$.

We prove that the planarity constraints for the edges of $G_1$ and of $G_2$ are satisfied by $A$.
Consider any pair of edges $\langle g_1,g_2\rangle$ of the same graph $G_i$, with $i=1,2$, that alternate along $G_\cap$. If none of $g_1$ and $g_2$ is incident to $v$, then they also alternate along $G'_\cap$. Hence, the planarity constraints are satisfied by $A$ since they are satisfied by $A'$.
Otherwise, assume $g_1$ is incident to $v$; note that $g_2$ is not incident to $v$, since $g_1$ and $g_2$ alternate along $G_\cap$. If $g_1 = e$ (if $g_1 = f$; if $g_1 = h$), then edge $e''$ (edge $f''$; edge $h''$) alternates with $g_2$ along $G'_\cap$. Further, $A(e)=A'(e'')$, $A(f)=A'(f'')$, $A(h)=A'(h'')$, and $A(g_2)=A'(g_2)$. Hence, the planarity constraints for these edges are satisfied by $A$ since they are satisfied by $A'$.

We finally prove that the orthogonality constraints are satisfied by $A$.
For the vertices in $V \setminus \{v,u_e,u_f,u_h\}$, this is true since they are satisfied by $A^\prime$ and since for every exclusive edge $g \in E_1 \cup E_2$ incident to these vertices, we have that $g \in E_1^\prime \cup E_2^\prime$, by construction, and $A(g) = A^\prime(g)$. 
For vertex $u_e$, this is true since for each edge $g$ incident to $u_e$ different from $e$, it holds that $A(g)=A'(g)$, since $A(e)=A'(e'')$, and since the orthogonality constraints at $u_e$ are satisfied by $A'$. Analogous arguments hold for vertices $u_f$ and $u_h$.
To prove that this is true for $v$, we first argue that $A(e)=A^\prime(e')$, that $A(f)=A^\prime(f')$, and that $A(h)=A^\prime(h')$:
Namely, by planarity constraints, we get $A'(e') = A'(x_1,v_e)$ since they both alternate with $(x_2,x_3)$; hence, by orthogonality constraints at $v_e$, we get $A'(e'') = A'(e')$. Since $A(e) = A'(e'')$, by construction, we conclude $A(e) = A'(e')$. The equalities $A(f)=A^\prime(f')$ and $A(h)=A^\prime(h')$ follow symmetrically.
Hence, the orthogonality constraints at $v$ are satisfied by $A$ since they are satisfied at $v'$ by $A^\prime$. 
This concludes the proof.
\end{proof}

\rephrase{Theorem}{\ref{th:algorithm-one-degree-three}}{
\ourproblem{2} can be solved in polynomial time for instances 
whose shared graph is a cycle and whose union graph has maximum degree~$5$.
}

\begin{proof}
First apply Lemma~\ref{le:reduction-degree-five-2-degree-three} to obtain an equivalent instance \ourinstance{\prime} such that $G_\cap^\prime$ is a cycle and graph $G_1^\prime$ has maximum degree $3$. Then, apply Lemma~\ref{le:reduction-degree-three-2-degree-three-outerplanar} to obtain an equivalent instance \ourinstance{\prime\prime} such that $G_\cap^{\prime\prime}$ is a cycle and  $G_1^{\prime\prime}$ is an outerplanar graph with maximum degree $3$. Finally, apply Lemma~\ref{le:algorithm-degree-three-outerplanar} to test in polynomial time whether \ourinstance{\prime\prime}, and hence \ourinstance{}, is a positive instance.
\end{proof}

\section{Omitted or Sketched Proofs from Section~\ref{se:biconnected}}\label{app:biconnected}

\rephrase{Lemma}{\ref{le:simple-attachments}}{
  Let \ourinstance{} be an instance of \ourproblem{2} whose shared graph 
  is biconnected. It is possible to construct in polynomial time an equivalent instance \ourinstance{*}
  whose shared graph is biconnected and such that each endpoint of an
  exclusive edge has degree~$2$ in the shared graph.
  }

\begin{proof}
  We start with a simplification step that removes certain edges.  An
  exclusive edge $e=uv$ of $G_1$ or of $G_2$ is an \emph{intra-pole}
  edge if its endpoints are adjacent in some skeleton of the SPQR-tree
  of the shared graph $G_\cap$.  If neither $u$ nor $v$ is
  incident to other exclusive edges, then $uv$ is \emph{isolated}.
  Let $E_1'$ and $E_2'$ be the isolated intra-pole edges of $G_1$ and
  $G_2$, respectively.  
  
  We claim that the instance $\langle G_1 -
  E_1', G_2 - E_2' \rangle$ admits an \ourdrawing if and only if
  \ourinstance{} does.
  The if part is clear since we can simply remove the isolated
  intra-pole edges from an \ourdrawing of \ourinstance{} to obtain
  an \ourdrawing.  Conversely, Angelini et al.~\cite{adfpr-tsetgibgt-11} show that the intra-pole edges can be
  reinserted into any {\sc SEFE}, and thus also into an \ourdrawing of
  $\langle G_1 -E_1', G_2 -E_2' \rangle$ without crossings, i.e., the
  planarity constraints are satisfied for \ourinstance{}.  Since the
  edges in $E_1' \cup E_2'$ are isolated also the orthogonal
  constraints are trivially satisfied, and we obtain an \ourdrawing of \ourinstance{}.  This finishes the proof of the
  claim.

  In the following, we assume that \ourinstance{} has been
  preprocessed in this way, and it hence does not contain isolated
  intra-pole edges.

  \begin{figure}[t]
  \centering
  \includegraphics{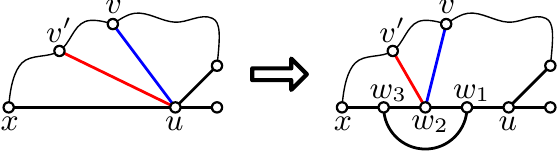}
  \caption{Moving exclusive edges from a vertex with degree~$3$ in the shared graph to a new vertex with degree~$2$ in the shared graph.}
  \label{fig:biconnected-replacement}
\end{figure}

  Consider an exclusive edge $e=uv$ in $G_1$ or $G_2$, say in $G_1$,
  such that $u$ has degree~3 in the shared graph.  Assume that $ux$ is
  an edge of $G$ incident to $u$ such that, in every \ourdrawing
  of \ourinstance{} the edge $uv$ is embedded in a face of $G$
  incident to $ux$ (we describe how to determine such an edge later).  
  We perform the following transformation. We
  subdivide $ux$ by three vertices $w_1,w_2,w_3$ and add the edge
  $w_1w_3$.  We further replace $uv$ by $w_2v$ and also, if it exists,
  the (unique) exclusive edge $e' = uv'$ (from $G_2$) by $w_2v'$; see 
  Fig.~\ref{fig:biconnected-replacement}. 
  Call the resulting instance \ourinstance{'}.  It is not difficult to
  see that \ourinstance{} admits an \ourdrawing if and only if
  \ourinstance{'} does.  If \ourinstance{'} admits an \ourdrawing,
  we can contract the vertices $w_1,w_2,w_3$ onto $u$ to obtain an
  \ourdrawing of \ourinstance{}.  Note that the orthogonal
  constraint at $u$ is satisfied since the triangle $w_1,w_2,w_3$
  ensures that the exclusive edges incident to $u$ are embedded in the
  same face of $G_\cap'$ and hence in $G_\cap$.  Conversely, given an \ourdrawing of \ourinstance{}, due to the orthogonality constraints at
  $u$ all exclusive edges incident to $u$ are embedded in the same
  face of $G_\cap$, and hence the replacement can be carried out locally
  without creating crossings. Note that, after the transformation,
  there are fewer endpoints of exclusive edges that have degree~$3$ in
  the shared graph.  We iteratively apply this transformation to
  obtain the instance \ourinstance{*}.

  It remains to show that there always exists a suitable edge $ux$.
  Let $\mathcal T$ denote the SPQR-tree of the shared graph $G_\cap$.  
  Since $u$ has degree~3, there is exactly one node $\mu$
  of $\mathcal T$ whose skeleton contains $u$ and where the degree of
  $u$ in $\skel(\mu)$ is~3.  Note that $\mu$ is either a P-node or an
  R-node.

  First assume $\mu$ is an R-node and consider the position of $v$
  inside $\skel(\mu)$, where it is either a vertex of $\skel(\mu)$ or
  it is contained in a virtual edge $\eps_v$ of $\skel(\mu)$.  Since
  $\mu$ is an R-node, $u$ and $v$ ($u$ and $\eps_v$) share at most two
  faces, both of which are incident to a virtual edge $\eps$ incident
  to $u$.  We choose $ux$ as the (unique) edge incident to $u$ that is
  contained in the subgraph represented by $\eps$.

  Second, assume $\mu$ is a P-node.  If the other endpoint $v$ of $e$
  is not a pole of $\mu$, then $v$ is contained in a virtual edge
  $\eps_v$ of $\skel(\mu)$, and we can proceed as in the previous
  case; see Fig.~\ref{fig:biconnected-replacement}. Now assume that $v$ is the other vertex of
  $\skel(\mu)$, i.e., $e$ is an intra-pole edge.  Since $e$ cannot be
  isolated (due to the simplification step at the beginning), there
  exists an exclusive edge $e'$ in $G_2$ incident to $u$ or $v$.
  Since $e \ne e'$, the edge $e'$ has an endpoint $v'$ that is
  contained in a subgraph represented by a virtual edge $\eps$ of
  $\skel(\mu)$.  It follows that in every planar embedding of $G_2$,
  the edge $e'$ is embedded in a face incident to $\eps$.  By the
  orthogonality constraints at the vertex shared by $e$ and $e'$, $e$ also
  has to be embedded in a face incident to $\eps$ in any \ourdrawing. 
  We thus choose $ux$ as the (unique) edge incident to
  $u$ contained in the subgraph represented by $\eps$.
\end{proof}

\rephrase{Lemma}{\ref{le:s-nodes-independent}}{
  Let \ourinstance{} be an instance of \ourproblem{2} such that
    the shared graph $G_\cap$ is biconnected.  Then
    \ourinstance{} admits an \ourdrawing if and only if all
    instances \ourinstance{\mu} admit an \ourdrawing.
}

\begin{proof}
  It is not hard to see that each \ourinstance{\mu} can be obtained
  from \ourinstance{} by removing some vertices and edges and
  suppressing subdivision vertices.  Thus, if \ourinstance{} admits
  an \ourdrawing, so does each \ourinstance{\mu}.

  Conversely, assume that each \ourinstance{\mu} admits an \ourdrawing.  Recall that we have fixed a reference embedding
  for each skeleton of the SPQR-tree of the shared graph $G_\cap$ up to
  a flip.  
  We fix the flips of all reference embeddings
  as follows.  For each S-node $\mu$ and each neighbor $\nu$,
  represented by a virtual edge $\eps$ in $\skel(\mu)$, we consider
  the flips of the cycle $C_\eps$ in the \ourdrawing of
  \ourinstance{\mu} with respect to the ordering $O_\eps$ of the attachments of the subgraph
  represented by $\eps$.  If the reference embedding is used, we label the
  edge $\mu\nu$ with label~$1$, otherwise we label it $-1$.  Finally,
  we choose an arbitrary root $\mu_0$ of the augmented SPQR-tree for
  which we fix the reference embedding.  For each skeleton
  $\skel(\mu)$, $\mu \ne \mu_0$, we choose the reference embedding if
  and only if the product of the labels on the (unique) path from
  $\mu_0$ to $\mu$ is $1$, and its flip otherwise.  We denote the
  planar embedding of $G_\cap$ obtained in this way by $\mathcal E$.  
  
  It remains to determine the embeddings of $G_1$ and $G_2$.  After
  suitably flipping the given \ourdrawings, we can assume that
  their embeddings can be obtained from $\mathcal E$ by removing
  vertices and edges, and by contracting edges.  We now determine the
  embeddings of $G_1$ and $G_2$ as follows.  Recall that every vertex
  that is incident to exclusive edges has degree~2 in the shared
  graph.  For each vertex $v$ that is incident to exclusive edges of
  $G_1$ (of $G_2$), we consider the unique S-node $\mu$ whose skeleton
  contains $v$, and we choose the edge ordering as in the given \ourdrawing of \ourinstance{\mu}.  We claim that this results in an
  \ourdrawing $\langle \mathcal E_1, \mathcal E_2 \rangle$ of
  \ourinstance{}. Refer to Figs.~\ref{fig:biconnected-reduction-a} and~\ref{fig:biconnected-reduction-b}. 

  First observe that the orthogonality constraints are satisfied, 
  since the edge ordering of each vertex is chosen according to one of the given
  \ourdrawings.  It remains to show that the embeddings also
  satisfy the planarity constraints.  Due to the construction of the
  embeddings, all the exclusive edges are embedded in faces of
  $\mathcal E$; otherwise we would observe crossings in the skeletons
  of the (augmented) SPQR-tree.  Consider two exclusive edges $uv$ and
  $u'v'$ from the same graph that cross.  Since $uv$ and $u'v'$ cross,
  there exists a node $\mu$ of the (augmented) SPQR-tree such that for
  each of the two edges the endpoints are in different parts of
  $\skel(\mu)$.  If $\mu$ is a P-node or an R-node and all four parts
  containing these endpoints are distinct, then the parts containing
  the endpoints of these edges alternate around a face of
  $\skel(\mu)$.  This contradicts the planarity of the corresponding
  input graph $G_1$ or $G_2$.  Thus, in this case at least two
  attachments are contained in the same virtual edge $\eps$ of
  $\skel(\mu)$.  Let $\nu$ be the S-node of the augmented SPQR-tree
  corresponding to $\eps$.  Clearly, in $\skel(\nu)$, the endpoints
  of each of the two edges are distinct parts of $\skel(\nu)$.  It
  follows that the endpoints of the two edges alternate around the two
  faces of \ourinstance{\nu} corresponding to the two faces of $\skel(\mu)$. By
  construction of $\langle \mathcal E_1, \mathcal E_2 \rangle$ this
  contradicts the assumption that the given drawing of
  \ourinstance{\nu} is an \ourdrawing.
\end{proof}

\rephrase{Lemma}{\ref{le:cycle-to-path-gadget}}{
  \ourinstance{\mu} admits an \ourdrawing if and only if
  $\langle \overline{G_1^{\mu}}, \overline{G_2^{\mu}}\rangle$ does.}

\begin{proof}
  We simply show that, in terms of embeddings, the path $P_\eps$
  replacing~$C_\eps$ behaves the same as $C_\eps$.  First, observe that the
  edge $(a_2,x_3)$ ensures that all exclusive edges of $G_1$ incident to
  the clockwise $uv$-path of $C_i$ are embedded on the same side of
  the path $P_\eps$.  Similarly, $(x_2,b_1)$ ensures that all exclusive
  edges of $G_1$ incident to the counterclockwise $uv$-path of $C_\eps$
  are embedded on the same side of the path $P_\eps$.  Moreover, since
  the endpoints of the edges $(a_2,x_3)$ and $(x_2,b_1)$ alternate along
  $P_\eps$, they are embedded on different sides of $P_\eps$. Thus, 
  the exclusive edges of~$G_1$ incident to the clockwise and
  counterclockwise $uv$-path of $C_\eps$ cannot be embedded on the same
  side of $P_\eps$.  Similarly, the exclusive edges $(a_1,x_4)$ and $(x_2,b_2)$
  ensure that the exclusive edges of $G_2$ incident to the clockwise
  $uv$-path are all on one side of $P_\eps$ and the exclusive edges of
  $G_2$ incident to the counterclockwise $uv$-path are on the other
  side of $P_\eps$.  Finally, due to the alternation with $(a_2,x_3)$, the
  edges $(x_2,x_4)$ and $(x_2,b_1)$ must be embedded on the same side of
  $P_\eps$.  By the orthogonality constraint at $x_2$, the edge $(x_2,b_2)$ 
  must be also embedded on the same side as $(x_2,x_4)$.  Thus, $(a_2,x_3)$ and
  $(a_1,x_3)$ are on the same side of $P_\eps$ and likewise for~$(x_2,b_1)$ and
  $(x_2,b_2)$.  This ensures that the exclusive edges of $G_1$ and $G_2$
  incident to the clockwise $uv$-path of $C_\eps$ are embedded on the
  same side of $C_\eps$ and likewise for those incident to the
  counterclockwise $uv$-path.
\end{proof}

\rephrase{Theorem}{\ref{th:biconnected geometric}}{
  Let \ourinstancek{} be a positive instance of \ourproblem{k} whose shared graph is biconnected. Then,
  there exists an \ourdrawing \oursolutionk{} of \ourinstancek{} in
  which every edge has at most three bends.
}

\begin{proof}
  We assume that a cyclic order of the edges of the union graph around
  each vertex is given such that (a) it induces a planar embedding on
  each $G_i$, $i=1,\dots,k$, and (b) we can assign the incident
  edges around a vertex to at most four ports such that at most one edge
  of each $G_i$ is assigned to the same port.

  We adopt the method of Biedl and Kant~\cite{biedl/kant:98}. First,
  we compute in linear time~\cite{brandes:02} an $s$-$t$-ordering on
  the shared graph, i.e., we label the vertices $v_1,\dots,v_n$ such
  that $\{v_1,v_n\}$ is an edge of the shared graph and, for each
  $i=2,\dots,n-1$, there are $j < i < k$ such that $\{v_j,v_i\}$ and
  $\{v_i,v_k\}$ are edges of the shared graph. We choose the face to
  the left of $(v_1,v_n)$ as the outer face of the union graph.

  We now draw the union graph by adding the vertices in the order in
  which they appear in the $s$-$t$-ordering while respecting the given 
  order of the edges around each vertex. The edges will bend
  at most on y-coordinates near their incident vertices and are drawn
  vertically otherwise. We draw the edges around~$v_1$ as indicated in
  Fig.~\ref{FIG:geo_v1} where some of the incident edges might
  actually indicate several exclusive edges~--~at most one from each
  graph.

  \begin{figure}[tb!]
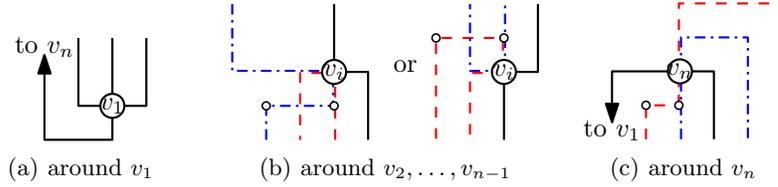

\centering
\hfil
  \subfigure[\label{FIG:geo_v1}around $v_1$]{\includegraphics[page=1]{img/geo.pdf}}
  \hfil
  \subfigure[\label{FIG:geo_vi}around $v_2,\dots,v_{n-1}$]{\includegraphics[page=4]{img/geo.pdf}}
  \hfil
  \subfigure[\label{FIG:geo_vn}around $v_n$]{\includegraphics[page=5]{img/geo.pdf}}

\caption{\label{FIG:geo}Constructing a drawing with at most three bends per edge}
\end{figure}

  For $i=2,\dots,n-1$, an edge may only leave $v_i$ to the bottom
  if it is incident to a neighbor with a lower index. 
  Again, some of the ports might host several exclusive edges, even one
  to a vertex with a lower index and one to a vertex with a higher
  index. Special cases occur when the ordering around $v_i$ is such that
  four exclusive edges of two distinct graphs must be assigned to two
  consecutive ports.  In particular, an edge leaving $v_i$ to a vertex with a
  lower index might bend twice around $v_i$ (see, e.g., the two small circles
  in Fig.~\ref{FIG:geo_vi}).

  Finally, the edges around $v_n$ are placed such that the edge
  $\{v_1,v_n\}$ enters it from the left. Thus, there are exactly three
  bends on $\{v_1,v_n\}$; see Fig.~\ref{FIG:geo_vn}. For any other
  edge, there is at most one bend around the endvertex with lower
  index and at most two bends around the endvertex with higher index.
\end{proof}

\section{Omitted or Sketched Proofs from Section~\ref{se:conclusions}}\label{app:conclusions}

\rephrase{Lemma}{\ref{le:reduction-general-2-degree-3,5}}{
Let \ourinstance{} be an instance of \ourproblem{2} whose shared graph $G_\cap$ is a cycle.
It is possible to construct in polynomial time an equivalent instance \ourinstance{*} of \ourproblem{2} such that 
(i) the shared graph $G_\cap^*$ is a cycle,
(ii) graph $G_1^*$ is outerplanar, and
(iii) no two degree-$4$ vertices in $G_1^*$ are adjacent to each other.
}

\begin{proof}
The reduction works in two steps. In the first step, we construct an instance \ourinstance{+} satisfying properties (i) and (iii) that is equivalent to \ourinstance{}; then, in the second step we construct the final instance \ourinstance{*} equivalent to \ourinstance{+}, which also satisfies property (ii).

For the first step, we show how to construct an instance \ourinstance{\prime} of \ourproblem{2} equivalent to \ourinstance{} such that $G_\cap^\prime$ is a cycle and the number of vertices with degree $4$ in $G_1^\prime$ not satisfying the condition of property (iii) is smaller than the number of vertices with degree $4$ in $G_1$ not satisfying this condition. Repeatedly performing this transformation eventually yields the required instance \ourinstance{+}.

Consider a vertex $v$ with degree $4$ in $G_1$ not satisfying the condition of property (iii). Let $e=(u,v)$ and $f=(v,w)$ be the two exclusive edges of $G_1$ incident to $v$. Assume that $u$, $v$, and $w$ appear in this order along $G_\cap$, the other cases being analogous.

Initialize $G_\cap^\prime = G_\cap$; refer to Fig.~\ref{fig:reduction-general-deg3point5-app}. Replace $v$ in $G_\cap^\prime$ by a path $P_{v}$ composed of dummy vertices $x_1, x_2, v_a, x_3,\dots,x_8, u', x_9, x_{10}$, of vertex $v$, and of dummy vertices $y_1,y_2,w',y_3,\dots,y_8,v_b,y_9,y_{10}$. Note that $G_\cap^\prime$ contains all the vertices of $G_\cap$, plus a set of dummy vertices. 
	\begin{figure}[tb]
		\centering
		\subfigure[]{\includegraphics[height=2cm,page=1]{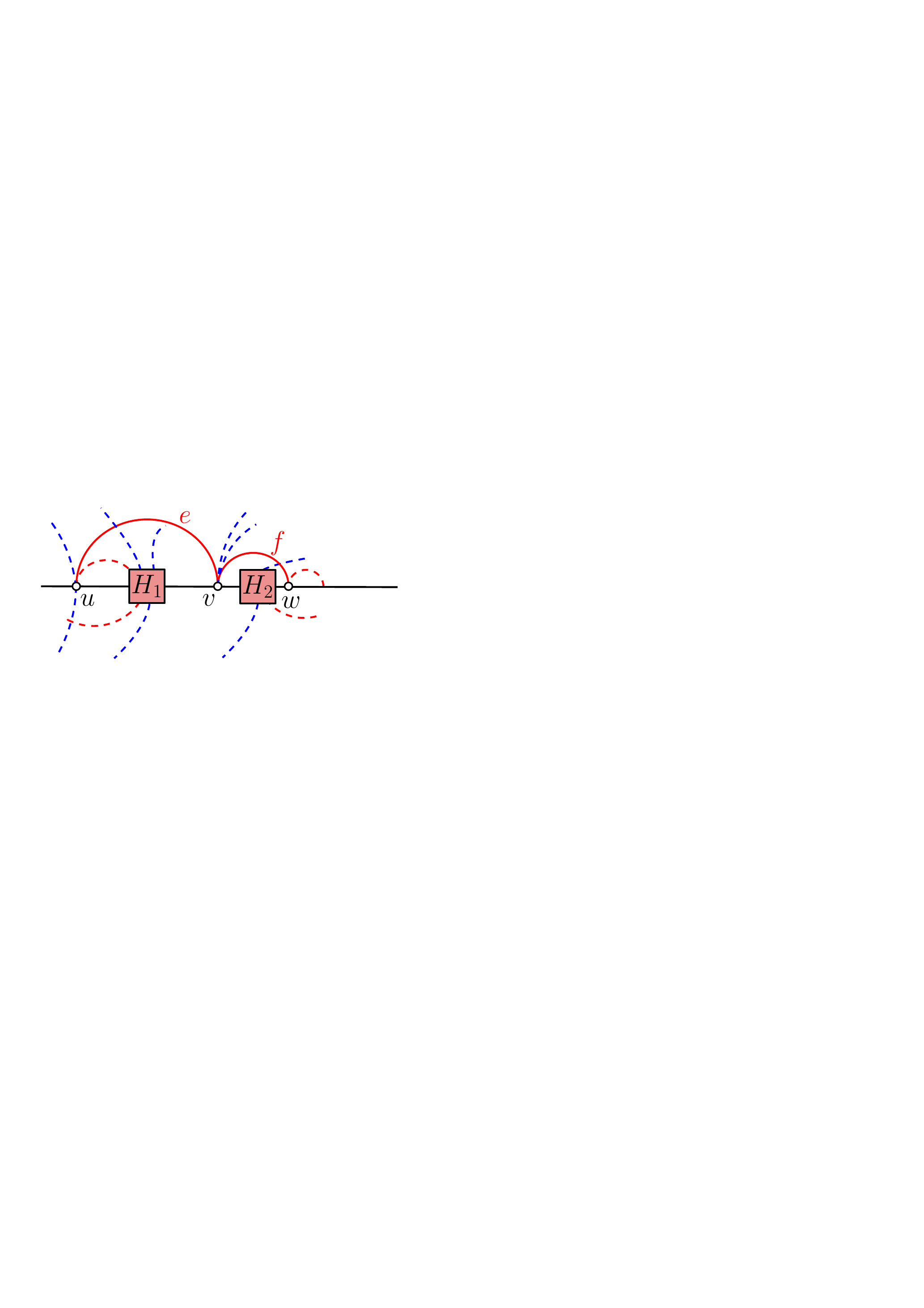}\label{fig:reduction-general-deg3point5-a-app}}
    \hfill
		\subfigure[]{\includegraphics[height=2cm,page=2]{img/red-general-deg3point5.pdf}\label{fig:reduction-general-deg3point5-b-app}}
		\caption{Illustrations for the proof of Lemma~\ref{le:reduction-general-2-degree-3,5}}\label{fig:reduction-general-deg3point5-app}
	\end{figure}
We now describe the exclusive edges in $E_1^\prime$ and $E_2^\prime$. 
Set $E_i^\prime$, with $i=1,2$, contains all the exclusive edges in $E_i$, except for $e$ and $f$. Also, $E_1^\prime$ contains edges $e'=(u,v_a)$, $e''=(u',v)$, $f''=(v,w')$, and $f'=(v_b,w)$. Finally, $E_2^\prime$ contains edges $(x_1,v_a)$, $(x_2,x_3)$, $(v_a,x_5)$, $(x_4,x_7)$, $(x_6,u')$, $(x_8,x_9)$, $(u',x_{10})$, $(y_1,w')$, $(y_2,y_3)$, $(w',y_5)$, $(y_4,y_7)$, $(y_6,v_b)$, $(y_8,y_9)$, and $(v_b,y_{10})$.

We prove that \ourinstance{\prime} satisfies the required properties. 
First, graph $G_\cap^\prime$ is a cycle by construction. 
Second, the number of vertices of degree $4$ in $G_1^\prime$ not satisfying the condition of property (iii) is smaller than the number of such vertices in $G_1$. In fact, any vertex $x \neq v$ with degree $4$ in $G_1^\prime$ satisfies the required condition if and only if it satisfies the same condition in $G_1$. On the other hand, vertex $v$ does not satisfy the condition in $G_1$, by hypothesis, but it satisfies the condition in $G_1^\prime$, since $u'$ and $w'$ have degree $3$ in $G_1^\prime$ and the path between them along $G_\cap^\prime$ containing $v$ only contains dummy vertices $x_9$, $x_{10}$, $y_1$, and $y_2$, which are not incident to any exclusive edge of $G_1^\prime$, by construction.

We now prove that \ourinstance{\prime} is equivalent to \ourinstance{}.

Suppose that \ourinstance{} admits an \ourdrawing \oursolution{}, and let $A$ be the corresponding assignment of the exclusive edges of $E_1$ and of $E_2$ to the two sides of $G_\cap$, which exists by Theorem~\ref{th:characterization}.
We show how to construct an assignment $A^\prime$ of the exclusive edges of $E_1^\prime$ and of $E_2^\prime$ to the two sides of $G_\cap^\prime$ satisfying all the constraints.

For each exclusive edge $g \in E_1 \cap E_1^\prime$, set $A^\prime(g) = A(g)$. Also, set $A^\prime(e') = A^\prime(e'') = A(e)$ and $A^\prime(f') = A^\prime(f'') = A(f)$.
For each exclusive edge $g \in E_2 \cap E_2^\prime$, set $A^\prime(g) = A(g)$. Also, set $A^\prime(x_1,v_a) = A^\prime(v_a,x_5) = A^\prime(x_6,u') = A^\prime(u',x_{10}) = A^\prime(y_1,w') = A^\prime(w',y_5) = A^\prime(y_6,v_b) = A^\prime(v_b,y_{10}) = A(e)$ and set $A^\prime(x_2,x_3) = A^\prime(x_4,x_7) = A^\prime(x_8,x_9) = A^\prime(y_2,y_3) = A^\prime(y_4,y_7) = A^\prime(y_8,y_9) = A(f)$.

We prove that the planarity constraints for the edges of $G^\prime_1$ are satisfied by $A^\prime$. Note that, by construction, edges $e''$ and $f''$ do not alternate with any edge of $G^\prime_1$ along $G_\cap^\prime$. Also, edges $e'$ and $f'$ do not alternate with each other. Further, if edge $e'$ (edge $f'$) alternates with an edge $g \in G^\prime_1$ along $G_\cap^\prime$, then edge $e$ (edge $f$) alternates with $g$ along $G_\cap$. Finally, any two edges different from $e'$, $e''$, $f'$, $f''$ that alternate along $G_\cap^\prime$ also alternate along $G_\cap$. In all the described cases, the planarity constraints are satisfied by $A^\prime$ since they are satisfied by $A$.

We prove that the planarity constraints for the edges of $G^\prime_2$ are
satisfied by $A^\prime$. Note that, by construction, edges in $E_2^\prime \cap E_2$ do not alternate with any edge incident to a dummy vertex along $G_\cap^\prime$, and alternate with each other along $G_\cap^\prime$ if only if they alternate with each other along $G_\cap$. Hence, the planarity constraints for these edges are satisfied by $A^\prime$ since they are satisfied by $A$. On the other hand, it is easily verified that the planarity constraints are satisfied by $A^\prime$ also for the edges incident to dummy vertices.

We now prove that the orthogonality constraints are satisfied by $A^\prime$ at every vertex in $V^\prime$. 
For the non-dummy vertices in $V^\prime \setminus \{u,v,w\}$, this is true since they are satisfied by $A$ and since the edges incident to these vertices have the same assignment in $A$ as in $A^\prime$.
For vertex $u$, this is true since they are satisfied by $A$, since $A^\prime(e')=A(e)$, and since the other edges incident to $u$ have the same assignment in $A$ and in $A^\prime$.
Analogously, for $w$ this is true since they are satisfied by $A$, since $A^\prime(f')=A(f)$, and since the other edges have the same assignment in $A$ and in $A^\prime$.
For $v$, this is true since they are satisfied by $A$, since $A^\prime(e'')=A(e)$, since $A^\prime(f'')=A(f)$, and since the other edges have the same assignment in $A$ and in $A^\prime$.
For $v_a$, this is true since $A^\prime(x_1,v_a)=A^\prime(v_a,x_5)=A^\prime(e')=A(e)$.
For $u'$, this is true since $A^\prime(x_6,u')=A^\prime(u',x_{10})=A^\prime(e'')=A(e)$.
For $w'$, this is true since $A^\prime(y_1,w')=A^\prime(w',y_5)=A^\prime(f'')=A(f)$.
For $v_b$, this is true since $A^\prime(y_6,v_b)=A^\prime(v_b,y_{10})=A^\prime(f')=A(f)$. Since all the other dummy vertices have degree $2$ in $G_1'$, this concludes the proof that $A^\prime$ satisfies the orthogonality constraints.

Suppose that \ourinstance{\prime} admits \ourdrawing \oursolution{\prime}, and let $A^\prime$ be the corresponding assignment of the exclusive edges of $E_1^\prime$ and of $E_2^\prime$ to the two sides of $G_\cap^\prime$. We show how to construct an assignment $A$ of the exclusive edges of $E_1$ and of $E_2$ to the two sides of $G_\cap$ satisfying all the planarity and the orthogonality constraints. 

For each exclusive edge $g \in E_1$, set $A(g) = A^\prime(g)$. Also, set $A(e) = A^\prime(e')$ and $A(f) = A^\prime(f')$. Finally, for each exclusive edge $g \in E_2 \cap E_2^\prime$, set $A(g) = A^\prime(g)$.

We prove that the planarity constraints for the edges of $G_1$ are satisfied by $A$.
Note that $e$ and $f$ do not alternate with each other since they are incident to the same vertex $v$. Also, if edge $e$ (edge $f$) alternates with an edge $g \in G_1$ along $G_\cap$, then edge $e'$ (edge $f'$) alternates with $g$ along $G_\cap^\prime$. Finally, any two edges different from $e$ and $f$ that alternate along $G_\cap$ also alternate along $G_\cap^\prime$. In all these cases, the planarity constraints are satisfied by $A$ since they are satisfied by $A^\prime$.

The planarity constraints for the edges of $G_2$ are satisfied by $A$ since any two of these edges alternate along $G_\cap$ if and only if they alternate along $G_\cap^\prime$, and since the planarity constraints are satisfied by $A^\prime$.

We finally prove that the orthogonality constraints are satisfied by $A$ at every vertex in $V$.
For the vertices in $V \setminus \{v\}$, this is true since they are satisfied by $A^\prime$ and since for every exclusive edge $g \in E_1 \cup E_2$ incident to these vertices, we have $g \in E_1^\prime \cup E_2^\prime$, by construction, and $A(g) = A^\prime(g)$. 
To prove that this is true also for $v$, we first argue that $A(e)=A^\prime(e'')$ and $A(f)=A^\prime(f'')$: By planarity constraints, we get $A'(x_1,v_a) = A'(v_a,x_5)=A'(x_6,u') = A'(u', x_{10})$, since they belong to a sequence of alternating edges; hence, by orthogonality constraints at $v_a$ and $u'$, we get $A'(e') = A'(x_1,v_a) = A'(v_a,x_5) = A'(x_6,u') = A'(u', x_{10})=A'(e'')$; since $A(e) = A'(e')$, by construction, we conclude $A(e) = A'(e'')$. The equality $A(f)=A^\prime(f'')$ follows symmetrically.
Hence, the orthogonality constraints at $v$ are satisfied by $A$ since they are satisfied at $v$ by $A^\prime$. This concludes the proof that \ourinstance{+}, which satisfies properties (i) and (iii), is equivalent to \ourinstance{}.
		
In order to construct an instance \ourinstance{*} equivalent to \ourinstance{+} that also satisfies property (ii), we observe that the proof of Lemma~\ref{le:reduction-degree-three-2-degree-three-outerplanar} can be easily extended so that it can be applied to \ourinstance{+}. This lemma, in fact, holds for instances \ourinstance{} satisfying property (i) and a property that is stronger than (iii), namely that $G_1$ has degree at most $3$. This stronger condition, however, is only used to ensure that there exists no exclusive edge in $E_1$ with an endpoint in $H_2$ and the other one not in $H_2$; refer to Fig.~\ref{fig:reduction-deg3-2-outerplanar-app}. In particular, it is used to ensure that there exists no edge connecting a vertex of $H_2$ to any of $u,v,w,z$. However, it is possible to prove that property (iii) is already sufficient to ensure the absence of these edges. Namely, suppose that there exists an edge in $E_1$ connecting a vertex $x$ of $H_2$ to vertex $v$, the other cases being analogous. This implies that $v$ has degree $4$ in $G_1$, since it is also adjacent to $u$. However, any path in cycle $G_\cap$ containing $u$, $x$, and $v$ also contains either $w$ or $z$, since $e$ and $f$ alternate along $G_\cap$; this is a contradiction to property (iii), since each of $w$ and $z$ is incident to an exclusive edge of $G_1$, namely $f$. This concludes the proof of the lemma.
\end{proof}
\end{document}